\title{Project-connex Decompositions and Tractability of Aggregate Group-by Conjunctive Queries} %
\titlerunning{Project-connex Decompositions and Tractability of Aggregate CQs} %
\author{Diego Figueira}
{Universit\'e de Bordeaux, CNRS, Bordeaux INP, LaBRI, France}
{diego.figueira@cnrs.fr}
{https://orcid.org/0000-0003-0114-2257}{} %
\author{Cibele Freire}{Computer Science Department, Bowdoin College, USA}
{cfreire@bowdoin.edu}
{https://orcid.org/0009-0000-4242-1311}{}
\authorrunning{D. Figueira and C. Freire} %
\keywords{tree decomposition,
query evaluation,
counting complexity,
aggregate queries,
semirings
} %
  \newtheorem*{theorem*}{Theorem}
\newcommand\Smallix{\fontsize{7.3}{7.5}\selectfont}
\bfseries\color{green!90!black},
\NewCommandCopy{\proofqedsymbol}{\qedsymbol}%
\newcommand{\exampleqedsymbol}{{$\triangle$}}%
\renewcommand{\qedsymbol}{\exampleqedsymbol}%
\definecolor{Dark Ruby Red}{HTML}{580507}
\definecolor{Dark Blue Sapphire}{HTML}{053641}
\definecolor{Dark Gamboge}{HTML}{be7c00}
\definecolor{Desire}{HTML}{eb3b5a} %
\definecolor{Boyzone}{HTML}{2d98da} %
\definecolor{Royal Blue}{HTML}{3867d6} %
\definecolor{NYC Taxi}{HTML}{f7b731} %
\definecolor{Beniukon Orange}{HTML}{fa8231}
\definecolor{Algal Fuel}{HTML}{20bf6b} %
\definecolor{A darker green}{HTML}{145a32} %
\definecolor{Innuendo}{HTML}{a5b1c2} %
\definecolor{Twinkle Blue}{HTML}{d1d8e0} %
\definecolor{Blue Horizon}{HTML}{4b6584} %
\definecolor{Gloomy Purple}{HTML}{8854d0} %
\colorlet{cBlue}{Royal Blue}
\colorlet{cYellow}{NYC Taxi}
\colorlet{cOrange}{Beniukon Orange}
\colorlet{cGreen}{A darker green}
\colorlet{cRed}{Desire}
\colorlet{cGrey}{Innuendo}
\colorlet{cDarkGrey}{Blue Horizon}
\colorlet{cLightGrey}{Twinkle Blue}
\colorlet{cPurple}{Gloomy Purple}
\renewcommand{\epsilon}{\varepsilon}
\definecolor{green}{RGB}{0,120,0}
\definecolor{hlyellow}{RGB}{250, 250, 190}
\definecolor{diegoeditcolor}{RGB}{210,210,255}
\definecolor{cibeleeditcolor}{RGB}{210,255,210}
\newcommand{\sidediego}[1]{}
\newcommand{\sidecibele}[1]{}
\newcommand{\cibele}[1]{}
\newcommand{\diego}[1]{}
\definecolor{light-gray}{gray}{0.9}
\newcommand{\proofcase}[1]{\noindent\colorbox{cLightGrey}{#1}~~}
\renewcommand{\phi}{\varphi}
\renewcommand{\leq}{\leqslant}
\renewcommand{\geq}{\geqslant}
\renewcommand{\emptyset}{\varnothing}
\knowledgenewrobustcmd{\dcup}{\mathop{\cmdkl{\uplus}}} %
\newcommand{\set}[1]{\{#1\}}
\newcommand{\mset}[1]{{\{}\!\!{\{}#1{\}}\!\!{\}}}
\newrobustcmd{\defeq}{\mathrel{\hat{=}}}
\newcommand{\eqdef}{\defeq}
\newrobustcmd{\Nat}{\mathbb{N}}
\newrobustcmd{\Reals}{\mathbb{R}}
\newrobustcmd\pset[1]{\+P(#1)} %
\newrobustcmd{\poly}{\mathop{\textrm{poly}}} %
\knowledgenewrobustcmd{\polyrx}{\mathrel{\cmdkl{\le_{\textit{poly}}}}} %
\knowledgenewrobustcmd{\polyeq}{\mathrel{\cmdkl{\equiv_{\textit{poly}}}}} %
\knowledgenewrobustcmd{\homto}{\mathrel{\cmdkl{\xrightarrow{\smash{\textit{hom}}}}}}
\knowledgenewrobustcmd{\Chomto}[1][C]{\mathrel{\cmdkl{\xrightarrow{\smash{{#1}\textit{-hom}}}}}}
\knowledgenewrobustcmd{\Const}{\cmdkl{\textup{Const}}}
\knowledgenewrobustcmd{\Var}{\cmdkl{\textup{Var}}}
\knowledgenewrobustcmd{\ann}{\cmdkl{\nu}}
\knowledgenewrobustcmd{\oneann}{\cmdkl{\mathbf{1}}}
\knowledgenewrobustcmd{\atoms}{\cmdkl{\textit{atoms}}}
\knowledgenewrobustcmd{\vars}{\cmdkl{\textit{vars}}}
\knowledgenewrobustcmd{\const}{\cmdkl{\textit{const}}}
\knowledgenewrobustcmd{\mterms}{\cmdkl{\textit{term}}}
\knowledgenewrobustcmd{\arity}{\cmdkl{\textup{arity}}}
\knowledgenewrobustcmd{\class}{\mathcal{C}}
\knowledgenewrobustcmd{\evalgD}[2]{#1\cmdkl{\langle}#2\cmdkl{\rangle}} %
\knowledgenewrobustcmd{\evalaggD}[2]{#1\cmdkl{\langle}#2\cmdkl{\rangle}}
\knowledgenewrobustcmd{\evalqD}[2]{#1\cmdkl{\langle}#2\cmdkl{\rangle}}
\knowledgenewrobustcmd{\assto}{\mathrel{\cmdkl{\mapsto}}}
\knowledgenewrobustcmd{\core}{\cmdkl{\textit{core}}}
\knowledgenewrobustcmd{\emptytup}{\cmdkl{()}}
\knowledgenewrobustcmd{\hyperq}[1][q]{\cmdkl{\mathbf{G}_{#1}}}
\knowledgenewrobustcmd{\primalq}[1][q]{\cmdkl{\mathbf{G}^p_{#1}}}
\knowledgenewrobustcmd{\dimtup}{\cmdkl{\dim}}
\knowledgenewrobustcmd\vertex[1]{\cmdkl{V}(#1)}
\knowledgenewrobustcmd\edges[1]{\cmdkl{E}(#1)}
\knowledgenewrobustcmd{\maxSize}[1]{\cmdkl{\|}#1\cmdkl{\|}}
\knowledgenewrobustcmd{\sizeofD}[1][D]{\cmdkl{|}#1\cmdkl{|}}
\knowledgenewrobustcmd{\normOne}[1]{\cmdkl{\|}#1\cmdkl{\|_1}}
\knowledgenewrobustcmd{\normInf}[1]{\cmdkl{\|}#1\cmdkl{\|_{\infty}}}
\knowledgenewrobustcmd{\sizeofq}[1][q]{\cmdkl{|}#1\cmdkl{|}}
\knowledgenewrobustcmd{\sizeofgamma}[1][\gamma]{\cmdkl{|}#1\cmdkl{|}}
\knowledgenewrobustcmd{\reducesto}{\mathrel{\cmdkl{\leq_{\textit{poly}}}}}
\knowledgenewrobustcmd\bagmap{\cmdkl{\mathbf{b}}}
\knowledgenewrobustcmd\atommap{\cmdkl{\mathbf{a}}}
\knowledgenewrobustcmd\atommaplab{\cmdkl{\mathbf{\tilde a}}}
\knowledgenewrobustcmd\tagmap{\cmdkl{\mathbf{t}}}
\knowledgenewrobustcmd\tagmappath[1]{\cmdkl{\mathbf{t}[#1]}}
\newrobustcmd\tagmappathprime[1]{%
  \withkl{\kl[\tagmappath]}{%
    \cmdkl{\mathbf{t}'[#1]}%
  }%
}
\knowledgenewrobustcmd{\ghw}{\cmdkl{\textit{ghw}}}
\knowledgenewrobustcmd{\fghw}{\cmdkl{\textit{fghw}}}
\knowledgenewrobustcmd{\pghw}{\cmdkl{\textit{pghw}}}
\knowledgenewrobustcmd{\hw}{\cmdkl{\textit{hw}}}
\knowledgenewrobustcmd{\tw}{\cmdkl{\textit{tw}}}
\knowledgenewrobustcmd{\sjoin}[1][]{\mathop{\cmdkl{\ltimes_{#1}}}}
\knowledgenewrobustcmd{\costjoin}{\cmdkl{c_{sj}}}
\knowledgenewrobustcmd{\abotimes}[2]{\cmdkl{\bigotimes_{#1}}#2} %
\knowledgenewrobustcmd{\aboplus}[2]{\cmdkl{\bigoplus_{#1}}#2} %
\knowledgenewrobustcmd{\contr}{\cmdkl{\textit{contract}}}
\knowledgenewrobustcmd{\Mfacts}{\cmdkl{\mathbf{M}}}
\knowledgenewrobustcmd{\DBs}[1][\Sigma]{\cmdkl{\textup{DB}_{#1}}}
\knowledgenewrobustcmd{\evalPb}[1]{\cmdkl{\textup{\textsc{Eval-}}}#1}
\knowledgenewrobustcmd{\aug}[1]{{#1}^{\cmdkl{+}}}%
\knowledgenewrobustcmd{\restrictG}[2][G]{#1\cmdkl{[}{#2}\cmdkl{]}}
\knowledgenewrobustcmd{\evalCounting}[2]{\cmdkl{\#}{#1}\cmdkl{\langle}#2\cmdkl{\rangle}}
\knowledgenewrobustcmd{\counting}[1]{\cmdkl{\#}{#1}}
\knowledgenewrobustcmd{\PRel}{\cmdkl{P}}
\newcommand{\mysubparagraph}[1]{\textit{#1}~}
\newrobustcmd{\schemaName}[1]{\textsf{#1}}
\newtheorem{assumption}{Assumption}[section]
\Crefname{claim}{claim}{claims}
\Crefname{claim}{Claim}{Claims}
  \providecommand\BibTeX{{%
    \normalfont B\kern-0.5em{\scshape i\kern-0.25em b}\kern-0.8em\TeX}}}
\begin{document}

\maketitle

\begin{abstract}
	We introduce `project-connex' tree-width as a measure of tractability for counting and aggregate conjunctive queries over semirings with `group-by' projection (also known as `AJAR' or `FAQ' queries). 
This elementary measure allows to obtain comparable complexity bounds to the ones obtained by previous structural conditions tailored for efficient evaluation of semiring aggregate queries, enumeration algorithms of conjunctive queries, and tractability of counting answers to conjunctive queries.

Project-connex tree decompositions are defined as the natural extension of the known notion of `free-connex' decompositions. 
They allow for a unified, simple and intuitive algorithmic manipulation for evaluation of aggregate queries and explain some existing tractability results on conjunctive query enumeration, counting conjunctive query evaluation, and evaluation of semiring aggregate queries. 
Using this measure we also recover results relating tractable classes of counting conjunctive queries and bounded free-connex tree-width, or the constant-time delay enumeration of semiring aggregate queries over bounded project-connex classes.
We further show that project-connex tree decompositions can be obtained via algorithms for computing classical tree decompositions.

\end{abstract}

\noindent
\raisebox{-.4ex}{\HandRight}\ \ This pdf contains internal links: clicking on a "notion@@notice" leads to its \AP ""definition@@notice"".%

\section{Introduction}
\label{sec:intro}

    The aim of this work is to offer an alternative structural measure for the study of evaluation and enumeratimacon of aggregate conjunctive queries. We introduce \textbf{project-connex width} 

    as a natural and intuitive extension of the notion of `free-connex' width having an intuitive definition.
Our approach views aggregation as a generalized form of projection over suitable semirings in conjunctive queries, in line with 
the seminal work on the AJAR \cite{JoglekarPR16} and FAQ frameworks \cite{KhamisNR16, 
DBLP:journals/tods/KhamisCMNNOS20,
khamis2023faqquestionsaskedfrequently}.

We adopt the widely used \emph{annotated database} setting \cite{GreenKT07}, where each tuple in a database is associated with an annotation drawn from a given commutative semiring. Annotations can be combined, depending on the context, using the semiring's  multiplicative ($\otimes$) and additive ($\oplus$) operations.
In this setting, the relational algebra \emph{join} ($\Join$) between two relations results in a new relation where annotations are combined by $\otimes$-multiplying the corresponding tuples.
On the other hand, the \emph{projection} ($\pi$) collapses all tuples that agree on a set of attributes, and their annotations are aggregated using the semiring's $\oplus$-additive operation.

    This semiring-based semantics supports efficient evaluation of conjunctive queries via Yannakakis-style algorithms on tree decompositions, even in the generalized setting.

We further consider the framework of \cite{JoglekarPR16,KhamisNR16}, in which
projections over different sorts of `sums' can be nested, giving the possibility of nesting `group-by' operations on conjunctive queries. We give below some examples to build intuition on this kind of queries.

\subparagraph*{An introductive example}
Consider the query asking to return each salesperson in Boston together with their maximum sales produced in any given day, over a database with schema $\schemaName{Branch(city,person)}$, $\schemaName{Sale(person,date,product,qty)}$, $\schemaName{Price(product,amount)}$, where each row in a database is \emph{annotated with a number}: the last number of the row in the case of \schemaName{Sale} and \schemaName{Price} tables, and `1' in the case of \schemaName{Branch}, as depicted in \Cref{fig:ex-intro}-(a).
  \begin{figure}
    \includegraphics[width=1\textwidth]{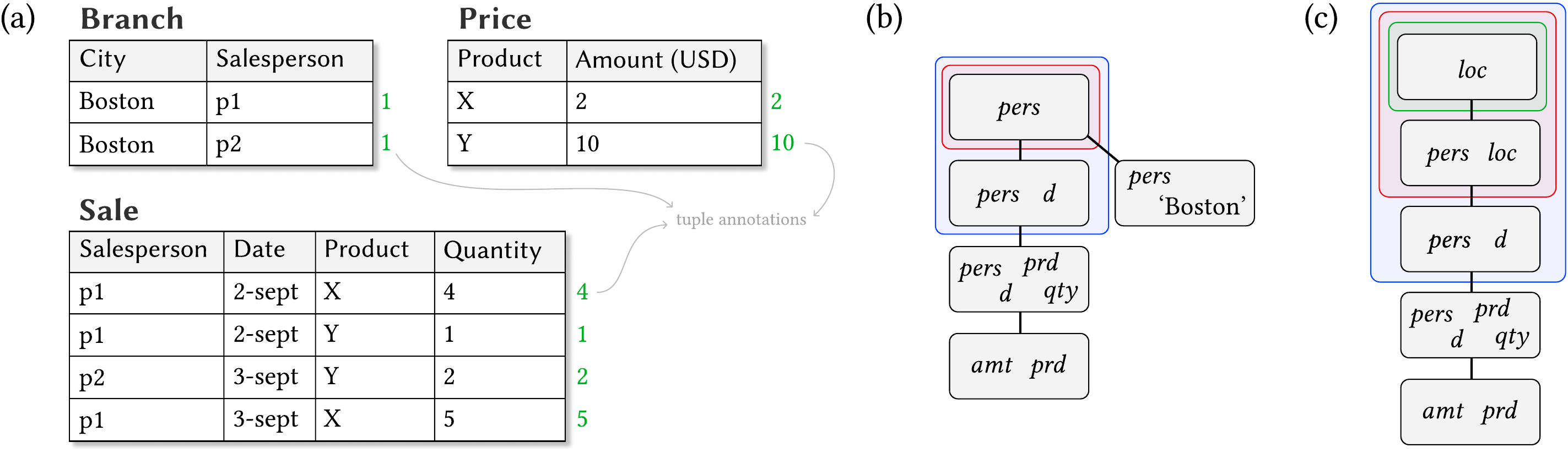}
    \caption{(a) A "database" $D$ of sales, where row annotations are in green. (b), (c) "project-connex" tree decompositions.
    The evaluation $Q(D)$ of $Q$ on $D$ outputs salesperson `p1' with value $18$ and `p2' with value  $20$; while
    $Q'(D)$ outputs `Boston' with value $38$.}
    \label{fig:ex-intro}
  \end{figure}
  To compute the query's output, we need a natural join where the annotations are \emph{multiplied} when joined. We then use a join parameterized by the multiplication operation on numbers, "ie",
  $
   \schemaName{Branch}(\textup{`Boston'},\textit{pers})
    \Join^{\textup{mult}}  
  \schemaName{Sale}(\textit{pers},d,\textit{prd},qty)
    \Join^{\textup{mult}}  
  \schemaName{Price}(\textit{prd},amt)
  $,
  that we just write as
  $\Join^{\textup{mult}}\set{\schemaName{Branch}(\textup{`Boston'},\textit{pers}),$ 
  $\schemaName{Sale}(\textit{pers},d,\textit{prd},qty),$ 
  $\schemaName{Price}(\textit{prd},amt)}$.
  Now we need to aggregate: first by summing all sales of each day/person, and then findinig the maximum for each person.
  This is done using projection operators, which specify an aggregation operation to perform for rows having the same projection. 
  Then the "aggregate query" in the language we study is the following:
  \[
    Q \defeq \pi^{\max{}}_{\set{\textit{pers}}} \pi^{\textup{sum}}_{\set{\textit{pers},d}} \Join^{\textup{mult}}\set{\schemaName{Branch}(\textup{`Boston'},\textit{pers}), 
  \schemaName{Sale}(\textit{pers},d,\textit{prd},qty), 
  \schemaName{Price}(\textit{prd},amt)}.
  \]
  When can this kind of queries be evaluated efficiently? Observe that both $(\Nat,\textup{sum},\text{mult})$ and $(\Nat,\max{},\text{mult})$ are "commutative semirings", and it is in this context that we work: queries in which the join-operation and the project-operation form a "semiring". 
  When working over a "commutative semirings", efficient evaluation methods based on tree decompositions can often be adapted exploiting the versatility offered by the "semiring" distributivity.

  To understand whether the evaluation can be done efficiently, one may consider `tree decompositions', such as the decomposition of $Q$ of \Cref{fig:ex-intro}-(b). 
  Observe how each bag ("ie", each gray potato of variables) is covered%
        \footnote{`Covered' in the sense that the variables of the atom contain all variables of the bag. Generally, it would be covered by $k$ atoms if there were $k$ atoms whose variables contain all the bag's variables.} 
  by \emph{one} atom of the query and hence its `width' is said to be \emph{one} (this is known as the `generalized hypertree width', or `coverwidth'). 
  Further, there are two subtrees (highlighted in red and in blue) which: (i) contain precisely the variables of the projections ("ie", $\set{\textit{pers}}$ and $\set{\textit{pers},d}$ resp.), and (ii) the first one is contained in the second one. We shall say that such decomposition is then `"project-connex"', since each projection set has an associated connected subtree. As we show, the existence of this decomposition witnesses an efficient evaluation. 
  In this example, since the width is one, the query admits a "constant delay enumeration after a linear pre-processing" (\Cref{thm:complexity-aggregate-acyclic-evaluation}). If the query was numeric, for instance by outputting the maximum sales by a person in any given day (which would be formally denoted by $\pi^{\max{}}_{\emptyset} Q$), then the output can be computed in linear time (\Cref{lem:evalAggQuery:width1}). And more generally, the output can be computed in $\+O(|Q| \cdot n^k)$ time for project-connex width $k$ queries, where $n$ is the size of the database (\Cref{thm:complexity-aggregate-evaluation}).
  This is true also for aggregate queries with an arbitrary nesting of aggregate operations.
  If instead of `Boston' we would prefer outputting, for each location, the sum of maximum sales of any salesperson in a day ("eg", as a bound to the maximum future turnover in a day per location), we could have written
  \[
    Q' \defeq \pi^{\textup{sum}}_{\set{\textit{loc}}} \pi^{\max{}}_{\set{\textit{pers},\textit{loc}}} \pi^{\textup{sum}}_{\set{\textit{pers},d,\textit{loc}}} \Join^{\textup{mult}}\set{\schemaName{Branch}(\textit{loc},p), 
  \schemaName{Sale}(\textit{pers},d,\textit{prd},qty), 
  \schemaName{Price}(\textit{prd},amt)}
  \]
  and there would still be the "project-connex" decomposition \Cref{fig:ex-intro}-(c) witnessing tractability.

\medskip

Depending on the number of nested projections and choice of semirings on these aggregate queries, we may find different previously studied query languages. For example:
\begin{itemize}
    \item 
    If there is only one projection which covers all variables and we work on the $(\Nat,\text{sum},\text{mult})$ semiring, the queries express counting homomorphisms from conjunctive queries (or, equivalently, counting the number of answers to full conjunctive queries),
    and its tractability has been characterized as having bounded tree-width \cite{DalmauJ04}.
    \item 
    If there is only one projection but which may not cover all variables, we may be rather interested in the efficient enumeration of answers;  in this vein we find the setting of enumerating answers to conjunctive queries, for which bounded `free-connex' tree-width has been identified as a fundamental measure  in the pioneering work \cite{BaganDG07}.
    \item 
    On the other hand, a particular case of two nested projections is counting the number of answers to conjunctive queries, for which other measures characterizing tractability have been developed \cite{DurandMengel15}.
\end{itemize}

Works on constant-delay enumeration, counting, and provenance have initially evolved rather independently, proposing a series of tractability measures on queries with two different sorts of problems in mind: constant-delay enumeration and direct access on the one hand, and evaluation on the other.
Some structural query measures found in the literature are:
\begin{itemize}
    \item `free-connex' tree decompositions \cite{BaganDG07} in the context of Conjunctive Query enumeration;
    \item `quantified star-size' \cite{DurandMengel15}, tree-width of `contracted' queries \cite{DBLP:conf/icdt/ChenM15}, or `$\#$-hypertree decompositions' \cite{GrecoS14} in the context of  counting solutions to Conjunctive Queries;
    \item `decomposable' and `valid' tree decompositions \cite{JoglekarPR16} and `FAQ-width' \cite{KhamisNR16,khamis2023faqquestionsaskedfrequently} in the context of evaluation of aggregate Conjunctive Queries over arbitrary semirings.
\end{itemize}

\subparagraph*{Contributions}
We show how a \textit{simple measure} like project-connex width can explain most tractability results in the areas above, allowing an elementary and uniform algorithmic treatment for both evaluation and enumeration. 
Hence, this work is guided by the question of whether there is a query measure which is simultaneously
\emph{simple to define}, \emph{easy to compute}, and \emph{admitting simple algorithmic manipulation} that can be used to \emph{give an account of the basic tractability results on aggregate conjunctive queries, including enumeration}.
This quest leads to the proposal of the project-connex measure, that we believe can improve the understanding and teaching of some of the rich and deep results developed in the literature.%

It is important to remark that influential previous work \cite{KhamisNR16,khamis2023faqquestionsaskedfrequently} already unifies known results for CQs and $\#$CQs (and more), but our approaches differ on methodology and scope (more on this in \Cref{sec:relworks}).

We adopt  the general approach for aggregation, provenance and counting, in such a way that they can
be expressed and intermixed in a rather intuitive way, in queries which also admit free variables and are hence amenable to enumeration algorithms. Our "aggregate queries" are in line with the FAQ/AJAR approach\footnote{There are some syntactical differences on how the language is defined, but the underlying concept is similar.} stemming from the seminal works \cite{KhamisNR16,khamis2023faqquestionsaskedfrequently, JoglekarPR16}, 
which in particular includes a form of group-by queries. This allows to capture queries such as taking the maximum of aggregate queries, or enumerating the tuples satisfying a query together with some aggregate value, or more complex nested queries.
For example, it is often the case that efficiently-enumerable queries are also efficient for counting the number of answers \cite{berkholz2020constant}. Here we see `enumeration' and `counting' as two sorts of queries from a same ``"aggregate query language@aggregate query"'', with a \emph{unified evaluation algorithm}. We give next some more detail on our contributions.

\smallskip

\mysubparagraph{Introducing structural conditions for tractability.}
We generalize the notion of ``free-connex'' tree decompositions to ``project-connex'', where we allow an arbitrary nesting of projections (\Cref{sec:project-connex-intro}). 
This allows to obtain complexity upper bounds for nested group-by queries.

\smallskip

\mysubparagraph{Computing project-connex width.}
We show that the project-connex width of an aggregate query $\gamma$ can be simply computed as the tree-width of a conjunctive query $q$
which can be obtained from $\gamma$ in polynomial time. Further, we show that project-connex decompositions of width 1 ("ie", acyclic) can be produced in linear time (\Cref{para:comput-tree-dec,para:comp-acyclic-dec}).

\smallskip

\mysubparagraph{Providing combined complexity bounds.}
In \Cref{sec:evalAgg} we provide complexity bounds via an elementary algorithm which yields a modest improvement ---to the best of our knowledge--- in the state-of-the-art bounds for evaluation in combined complexity of counting and enumeration of CQs with bounded free-connex width. 
Further, the same algorithm also yields the constant delay enumeration of "aggregate queries", whenever there are free variables.

\smallskip

\mysubparagraph{Simplifying the tractability measure for counting CQs.} 
For the case of counting answers to conjunctive queries, we show that project-connex boundedness (which in the setting of CQs is just free-connex) is equivalent to the known characterization for tractability \cite{DurandMengel15}.
Arguably, this simplifies the tractability characterization theorem\footnote{It simplifies the statement, but not the proof, which crucially relies on prior characterization results by Chen, Durand, and Mengel for counting CQs \cite{DurandMengel15,DBLP:conf/icdt/ChenM15}.} for counting CQs (\Cref{cor:freeghw:charPtime-body}).
Hence, an
 added value of using a generalization of free-connex decompositions is to make explicit known connections between counting and enumeration.

\subparagraph*{Organization}
We start on \Cref{sec:relworks} with a discussion of related work. After some preliminaries in \Cref{sec:prelim}, we introduce the "aggregate queries" we will study in \Cref{sec:aggregation}. 
In \Cref{sec:project-connex-intro}, we introduce the notion of "project-connex" decompositions and prove that computing such decompositions of any width is polynomial time, and linear time in the case of acyclic decompositions.
\Cref{sec:evalAgg} illustrates how decompositions are useful to algorithmic manipulation, by giving a simple procedure for evaluating and enumerating "aggregate queries".
\Cref{sec:countingcq} shows that project-connex decompositions can be used for characterizing tractability on counting CQs.
Finally, in \Cref{sec:conclusion} we give some concluding remarks. %
Due to space constraints, missing proof details and extended discussions are relegated to the appendix. %

\section{Related Work}\label{sec:relworks}

    \mysubparagraph{FAQ/AJAR framework.} Previous, seminal works \cite{KhamisNR16,JoglekarPR16,khamis2023faqquestionsaskedfrequently} study essentially the same aggregate query language as we do here. The emphasis of these works is put on `semantic equivalence', that is, characterizing when two queries, which only differ in the order in which the projections are applied, are equivalent depending on the semirings at hand. 
    In constrast, our measures are purely structural and completely blind to the specific semiring being used. For these reasons, our results do not necessarily imply any of the previous bounds of \cite{JoglekarPR16,KhamisNR16,khamis2023faqquestionsaskedfrequently}.
    Rather, we emphasize that bounded "project-connex" width can already explain tractability of some aggregate queries and leads to the elementary algorithm of \Cref{sec:evalAgg} for evaluation and enumeration.
    On the other hand, the cited works focus their analysis of query evaluation on data complexity.
    In comparison, our algorithmic analysis yields also combined complexity bounds.
    FAQ queries with additive inequalities (FAQ-AI) \cite{DBLP:journals/tods/KhamisCMNNOS20} were studied later, and their work shows better bounds for query evaluation than earlier results from \cite{KhamisNR16, khamis2023faqquestionsaskedfrequently}. Similar to our work and \cite{khamis2023faqquestionsaskedfrequently}, they obtain constant delay enumeration, when carefully considering how to represent the query output.

    On a syntactic level, the "aggregate queries" we introduce are slightly different to the ones of FAQ/AJAR since our projection operator projects \emph{onto a set of variables} while their projection operator \emph{projects away one variable} at a time. 
    Concretely, a CQ $q(\set{x_1, \dotsc, x_n})$, where $\vars(q)=\set{x_1, \dotsc, x_n, \dotsc, x_m}$ is expressed to something closer to $\pi^{\oplus}_{x_{n+1}} \pi^{\oplus}_{x_{n+2}} \dotsb \pi^{\oplus}_{x_m} {\Join^\otimes}q$ in their work instead of as $\pi^{\oplus}_{\set{x_1, \dotsc, x_n}} {\Join^\otimes}q$ in our formalism. 
    Among the reasons for our proposed syntax are to make the link to "free-connex" decompositions evident, since we project onto a set of variables, and to be able to give some complexity results in terms of the number of projections (\Cref{cor:lineartime-width1-pc-decompositions}), which implies a linear time algorithm for classes of bounded number of projections, independently of how many variables we may be projecting away.

    In addition, \cite{JoglekarPR16} also investigates structural conditions for efficient evaluation, based on a restriction of tree decompositions called ``decomposable''. 
     While decomposable and project-connex are distinct decomposition restrictions, they share some common aspects. Our objective in using project-connex decompositions lies in their intuitive connection to enumeration measures, allowing us to directly leverage known results.
    We also remark that \cite[Definition 3]{JoglekarPR16} proposes a width measure which may also bear some resemblances to project-width, as well as a way to compute it by means of reducing it to the computation of the tree-width of conjunctive queries (as we do here).\footnote{Unfortunately the characterization provided in the paper has an error ("cf"~\Cref{sec:problemAJAR}), but we believe that our characterization for project-width can be used instead, implying that the results of \cite{JoglekarPR16} relying on the characterization still hold.}

\smallskip

    \mysubparagraph{Counting.} \cite{DalmauJ04} offers the characterization of tractability for counting answers to \emph{"full"} "conjunctive queries" -- or, equivalently, counting "homomorphisms" -- based on the boundedness of tree-width.
    For arbitrary "conjunctive queries", the work \cite{DurandMengel15} initially established a characterization of tractability based on a measure of ``quantified star-size''.
    
    This was later expanded in \cite{DBLP:conf/icdt/ChenM15} and \cite{ChenGrecoMengelScarcello23}, where the proposed tractability characterization is based on the tree-width of two related queries: its core and the so-called `contraction' of the query.%
    \footnote{The need to use the tree-width measure on two separate queries is useful and probably necessary for deriving the trichotomy result of \cite{DBLP:conf/icdt/ChenM15} in \emph{parameterized complexity}.} As we show, if the focus is on obtaining a tractability characterization, one can just use the notion of width on tree decompositions which are ``free-connex''.

    Our bounds for the evaluation of counting queries are, to the best of our knowledge, in line or improving the algorithms in the literature. 
    In particular, \Cref{thm:CQ-bounded-fghw-counting-ptime} seems to improve on the bound of \cite[Theorem 5 with $k=0$]{PichlerS13} in combined complexity, which claims $\+O(\sizeofq \cdot \maxSize{D}^2)$ for "acyclic" "full" "CQs".

\smallskip

\mysubparagraph{Enumeration.}
The recent work \cite{EldarCarmeliKimelfeld23} studies the complexity of enumeration and direct-access to results of conjunctive queries with grouping and aggregation over annotated databases. The paper focuses on acyclic queries in relation to the \emph{direct-access problem} ("ie", the problem of accessing efficiently the $i$-th solution in the lex-ordering).
For self-join free queries the aggregate query language of \cite{EldarCarmeliKimelfeld23} can be translated into ours, with a different syntax ("cf" \Cref{app:enumeration-relwork}). For "self-join" queries the translation is not so simple, if at all possible.
\begin{toappendix}
    \mysubparagraph{Enumeration.} \label{app:enumeration-relwork} The recent work \cite{EldarCarmeliKimelfeld23} studies the complexity of enumeration and direct-access to results of conjunctive queries with grouping and aggregation over annotated databases. The paper focuses on acyclic queries in relation to the \emph{direct-access problem} ("ie", the problem of accessing efficiently the $i$-th solution in the lex-ordering).

For self-join free queries the aggregate query language of \cite{EldarCarmeliKimelfeld23} can be easily translated into ours, with a different syntax. 
For example, they use a syntax
$$
    Q(x_1,x_2,\textsf{Sum}(y_2),z) ~{{:}-}~ R(x_1,x_2,y_1), S(y_1,y_2,z)
$$
to express what in our formalism would be an "aggregate query" of the form
$$
    \pi^{+}_{\set{x_1,x_2,z}} \pi^{\max}_{\set{x_1,x_2,y_2,z}} {\Join^{\max}} \set{R(x_1,x_2,y_1), S(y_1,y_2,z)}
$$
on a "$\Nat$-annotated databases", "ie", "databases" where "facts" are "annotated" with numbers. In this case, $R$-"facts" are "annotated" with $0$ and each $S$-"fact" $S(n_1,n_2,n_3)$ is "annotated" with $n_2$. Intuitively, $\Join^{\max}$ performs the join by taking the maximum of "annotations", $\pi^{\max}$ projects away variables  by taking the maximum ("ie", preserving the "annotation"), and $\pi^{+}$ projects away variables by taking the sum ("ie", summing the second column of $S$). 
\end{toappendix}

\begin{toappendix}

\mysubparagraph{Provenance.}
The framework we present is compatible with (and inspired by) the seminal work on the semiring framework for database provenance by Green and Tannen \cite{GreenT17}, introduced to generalize the evaluation on database giving some information on the computation. Indeed "aggregate queries" are very close to the ``$K$-relational algebra'' on "commutative semirings" of \cite{GreenKT07} in the context of data provenance, with the key difference that the projection can take different sorts of sum-operations.
While we do not dwell here in exploring provenance, we remark that provenance semiring can be naturally ``plugged-in'' in the queries and databases we work with.
Moreover, from the standpoint of our framework the terms ``aggregation'' and ``provenance'' are equivalent: aggregation is a sort of provenance, and provenance is a sort of aggregation.
\end{toappendix}

\smallskip

\mysubparagraph{Output-Sensitive Evaluation.} 
The classic Yannakakis algorithm provides an output-sensitive query evaluation bound of $\+O(\sizeofD + \sizeofD \cdot |OUT|)$ for "acyclic" "CQs" \cite{Yannakakis81}, where $OUT$ is the size of the output.
This was the best known bound until very recently, when a series of papers made significant improvements.
In \cite{Hu24}, the author uses fast matrix multiplication to devise an algorithm that improves the bound for acyclic CQs, obtaining a running time of $\+O(\sizeofD + |OUT| +\sizeofD \cdot |OUT|^{5/6})$. Soon after, \cite{DeepZFK24} presents a generalization of Yannakakis algorithm to evaluate any acyclic CQ with running time 
$\+O(\sizeofD + |OUT| + \sizeofD \cdot |OUT|^{1-\varepsilon})$, where $0 < \varepsilon \leq 1$ is a constant that depends on the query.%
They also show this bound is tight for star queries, and their algorithm is general enough to provide bounds for some aggregate queries (FAQ).
The most recent work improves on previous results via a semiring algorithm \cite{Hu25}. Hu proves an output-optimal bound of $\Theta(\sizeofD \cdot OUT^{1-1/\textit{fn-fhtw}} + OUT)$ for acyclic queries, where $\textit{fn-fhtw}$ is the free-connex fractional hypertree width of the query.

\section{Preliminaries}
\label{sec:prelim}

\AP 
We use the usual ``bar notation'' $\bar x$ for a tuple of elements, where ``$\intro*\emptytup$'' denotes the ""empty tuple"",  $\intro*\dimtup(\bar x)$ denotes the ""dimension"" of $\bar x$, and $\bar x[i]$ denotes its $i$-th element, for $i \in \set{1, \dotsc, \dimtup(\bar x)}$. In particular $\dimtup(\emptytup) = 0$.
\AP
We fix disjoint infinite sets 
$\intro*\Const$, 
$\intro*\Var$ of 
""constants"" and ""variables"", respectively. 
For any syntactic object $O$ ("eg" database, query, tuple), we will use $\intro*\vars(O)$ and $\intro*\const(O)$ 
to denote the sets of "variables" and "constants" contained in $O$, and let $\intro*\mterms(O)\eqdef \vars(O) \cup \const(O)$ denote its set of ""terms"".

\AP A ""(relational) schema"" is a finite set of relation symbols, each associated with an ""arity"" ($\geq 0$). %
\AP
A ""(relational) atom"" over a "schema" $\Sigma$ takes the form $R(\bar t)$ where $R$ is a ""relation name"" from $\Sigma$ of some arity $k$, and $\bar t \in (\Const \cup \Var)^k$.
\AP
A ""fact"" is an "atom" which contains only "constants".
\AP
A ""database"" $D$ over a "schema" $\Sigma$ is a finite set of "facts" over $\Sigma$. 
We denote by $\intro*\DBs$ the set of all "databases" over $\Sigma$, or $\reintro*\DBs[]$ if the "schema" is not relevant.
Let $\sizeofD$ be the size of the encoding  of $D$ and $\maxSize D$ be the maximum size of a relation of $D$, that is, 
$\intro*\sizeofD \eqdef \sum_{R \in \Sigma}\sum_{R(\bar c) \in D} \intro*\arity(R)$
and
$\intro*\maxSize D \eqdef \max_{R \in \Sigma}\sum_{R(\bar c) \in D} \arity(R)$.
\AP
A ""query"" is a computable function $\gamma : \DBs[] \to \+O$ for some output domain $\+O$. To improve readability we denote by $\intro*\evalgD{\gamma}{D}$ (instead of $\gamma(D)$) its evaluation on a "database" $D$. 
\AP
For any class $\class$ of "queries", the ""evaluation task"" $\intro*\evalPb{\class}$ is the task of, given a "database" $D$ and a "query" $\gamma \in \class$, computing $\evalgD{\gamma}{D}$.%
\footnote{
	   We understand this as a ``"promise problem"'', that is, the input to the evaluation problem is a query from $\+C$, as opposed to some arbitrary string.
       }
Unless otherwise stated, all the bounds we give for "evaluation" are in terms of ""combined complexity"", that is, where both the "query" and the "database" are taken as input (we assume any reasonable encoding under the "RAM" model).
\AP
We say $\class$ admits ""constant delay enumeration after a polynomial pre-processing"" if there is an algorithm which upon a "query" $\gamma$ and "database" $D$ first performs a pre-processing in $\+O(\sizeofD^k)$ for some $k$ in ""data complexity"" ("ie", where the query $\gamma$ is fixed) %
in which some data structure is created, and then it enumerates, one by one and without repetitions, all answers from 
$\evalqD{\gamma}{D}$ such that the time between two consecutive answers does not depend on the size of the "database". If $k=1$, we say that $\class$ admits ""constant delay enumeration after a linear pre-processing"". We work under a standard computational model of ""Random Access Machines"" (\reintro{RAM}) (more details in \Cref{app:RAM}).

\begin{toappendix}
    \begin{assumption}\AP\label{rk:promisepb}
        In the present manuscript we understand, for any class $\+C$ of "queries", the "evaluation task" $\evalPb{\class}$, as well as all other problems of this kind, as \AP`""promise problems""'. 
        That is, we assume that the input of the problem is a "query" from $\+C$ with a conventional encoding, rather than any string.
        This is rather standard and in line with previous works (as in, for example, all the foundational work regarding the characterization of tractable evaluation for CQs \cite{DBLP:journals/jacm/Grohe07}).
    \end{assumption}    
\end{toappendix}

\AP
For sets of "atoms" $S_1,S_2$, a ""homomorphism"" from $S_1$ to $S_2$ is a function  $h: \mterms(S_1) \rightarrow  \mterms(S_2)$  such that $R(h(t_1), \dotsc, h(t_k)) \in S_2$ for every  $R(t_1, \dotsc, t_k) \in S_1$. We write $S_1 \intro*\homto S_2$ to indicate the existence of such $h$.
\AP
If further we have that $h(c) = c$ for every $c \in C \subseteq \Const \cup \Var$, we call $h$ a ""$C$-homomorphism"" and write $S_1 \intro*\Chomto S_2$. If further $h$ is bijective and $h^{-1}$ is a "$C$-homomorphism", we call $h$ a ""$C$-isomorphism"".

\AP
A ""conjunctive query"" (henceforth just \reintro{CQ}) over $\Sigma$ is an expression of the form $\pi_X q$, where $q$ is a finite set of "atoms" over $\Sigma$, and $X$ 
is a subset of "variables" from $\vars(q)$. We will sometimes write $q(X)$ instead of $\pi_X q$. 
We call $X$ the ""free variables"" of $q(X)$; the remaining ones are ""bound"". If there are no "free variables" ("ie", $X=\emptyset$), then the query is ""Boolean"", and we often write it as $q()$. If all its variables are "free", then we call the query ""full"".
The size of an "atom" $\alpha$, noted $\intro*\sizeofq[\alpha]$, is the "arity" of the relation name it contains, and the size of a "CQ" $q(X)$, noted $\reintro*\sizeofq$, is the sum of sizes of its "atoms".
We use $\intro*\atoms(q)$ to denote the set of "atoms" of a "CQ" $q(X)$. A "CQ" is ""self-join free"" 
if it has no two "atoms" with the same "relation name". 
\AP
The "arity" of a class $\+C$ of "CQs", noted $\arity(\+C)$, is the maximum arity of the relations used in the queries of $\+C$, or $\infty$ otherwise. The  $\arity(\Sigma)$ of a "schema" is the maximum arity of a "relation name" therein.

\AP
The ""evaluation"" of a "CQ" $q(X)$ on a "database" $D$ is the set $\intro*\evalqD{q(X)}{D} \eqdef \set{h|_X  \mid h : q \Chomto D}$ for $C=\const(q)$, where $h|_X$ is the restriction of $h$ on $X$.
The ""evaluation problem"" for a class $\+C$ of "CQs" is the task of, given $q(X) \in \+C$ and $D$, computing $\evalqD{q(X)}{D}$. 
For a "Boolean" query $q()$, we often write $D \models q$ to denote $\evalqD{q()}{D} \neq \emptyset$.
\AP
Two "CQs" $q(X), q'(X)$ are ""equivalent"" if $\evalqD{q(X)}{D} = \evalqD{q'(X)}{D}$ for every "database" $D$ or, equivalently, if $q \Chomto q'$ and $q' \Chomto q$ for $C = X \cup \const(q) \cup \const(q')$.
\AP
The ""core"" of a "CQ" $q(X)$ is its minimal "equivalent" query, noted $\intro*\core(q(X))$, which is unique up to "($\const(q) \cup X$)-isomorphisms@$C$-isomorphism". The "core" of a class $\+C$ of "CQs" is $\reintro*\core(\+C) \eqdef \set{\core(q(X)) : q(X) \in \+C}$.
\AP
 A (partial) ""assignment"" is a (partial) function $f: \vars(q) \to \Const$. For $\bar x$ and $\bar c$ a tuple of variables from $q$ and of constants, respectively, of the same dimension, $\bar x \intro*\assto \bar c$ denotes the "assignment" mapping $\bar x[i]$ to $\bar c[i]$ for every $i$.\footnote{Assuming that there are no ``clashes'', "ie", that $\bar c[i] = \bar c[j]$ for every $i,j$ such that $\bar x[i] = \bar x[j]$.}%
\begin{toappendix}
\label{app:RAM}
\AP
As is usual for enumeration and counting problems, we will use ""Random Access Machines"" (\reintro{RAM}) whose domain contains $\Nat \dcup \Const \dcup \set{\bot}$ plus all the elements of the "semiring" under consideration, in which we assume that the "RAM"'s memory is initialized to $\bot$. 
For every fixed dimension $d \geq 1$  we have available an unbounded number of $d$-ary arrays $A$ such that for given $(n_1, \dotsc, n_d) \in \Const^d$ the entry $A[n_1, \dotsc, n_d]$ at position $(n_1, \dotsc, n_d)$ can be accessed in constant time.
Further, we assume unit-cost arithmetic operations, as well as semigroup sum and product operations.
Henceforward we will always use the "RAM" model of computation, unless otherwise stated.    
\end{toappendix}

\newcommand{\aK}{\mathbb{K}}
To obtain efficient evaluation algorithms, we focus on "aggregate queries" having a "semiring" embedded in its operators.%
\AP 
A ""semiring"" $\aK = (K, \oplus, \otimes )$ 
verifies:
  \begin{enumerate}[(i)]
    \item 
    $(K,\oplus)$ is a "commutative monoid" with identity $0_\oplus$;
    \item 
    $(K,\otimes)$ is a "monoid";
    \item 
    $\otimes$ distributes over $\oplus$ ("ie", $(a_1 \oplus \dotsb \oplus a_n) \otimes (b_1 \oplus \dotsb \oplus b_m) = \bigoplus_{i, j} a_i \otimes b_j$); and
    \item 
    $a \otimes 0_\oplus = 0_\oplus \otimes a = 0_\oplus$ for every $a \in K$.
  \end{enumerate}
It is a ""commutative semiring"" if $\otimes$ is further "commutative@commutative monoid". 

\paragraph*{Tree Decompositions}
\AP
We use the standard notation for ""hypergraphs"" $G$, where $\intro*\vertex G$ is the set of vertices and $\intro*\edges G$ the set of hyperedges. 
\AP
Given a "hypergraph" $G$ and a set of vertices $C$ thereof, let $\intro*\restrictG{C}$ be the subgraph of $G$ induced by $C$, obtained by replacing each edge $e \in \edges{G}$ with $e \cap C$ when the intersection is nonempty.
\AP
The ""underlying hypergraph""
$\intro*\hyperq$ of a  "CQ" $q(X)$ is the "hypergraph" where 
$V(\hyperq) = \vars(q)$
and 
$E(\hyperq) = \{\vars(\alpha) ~|~ \alpha \in \atoms(q)\}$.

\AP
A ""generalized hypertree decomposition"" (henceforth just \reintro{tree decomposition}) of a "hypergraph" $G$ is a tuple $(T,\intro*\bagmap, \intro*\atommap)$ where $T$ is a tree, $\bagmap : \vertex{T} \to \pset{\vertex G}$, and $\atommap: \vertex{T} \to \pset{\edges G}$ such that
\begin{enumerate}[(a)]
    \item for each hyperedge $e \in \edges G$, there is a vertex $v \in \vertex T$ such that $e \subseteq \bagmap(v)$ ---this is the ""completeness condition"";
    \item for each $v \in \vertex G$, the set of vertices $\set{t \in \vertex T : v \in \bagmap(t)}$ is connected in $T$ ---this is often known as the ""connectivity condition"";\footnote{If $v$ is isolated it may not appear in the decomposition. However, the "hypergraphs" we consider do not contain isolated vertices.}
    \item for each  $v \in \vertex T$, $\bagmap(v) \subseteq \bigcup\atommap(v)$ ---this is the ""covering condition"".
\end{enumerate}
\AP
We often refer to vertices of $T$ as ""bags"", and say that a "bag" $v$ ""contains@@bag"" some "variable" $x$ to denote $x \in \bagmap(v)$.
\AP
The ""generalized hyperwidth@width"" (henceforth just `\reintro{width}' for brevity) of such "tree decomposition" is the maximum cardinality of $\atommap(v)$, for $v$ ranging over all vertices. The ""generalized hyperwidth"" of a "hypergraph" $G$, or $\intro*\ghw(G)$, is the minimum "(generalized hyper-) width@width" among all its "tree decompositions". 
The \reintro{tree decomposition} and \reintro{generalized hyperwidth} of a "CQ" or any "$K$-aggregate query" is defined analogously by applying the definitions above to its "underlying hypergraph".
\begin{toappendix}
A known and useful fact, due to the "completeness@completeness condition" and "connectivity conditions@connectivity condition", is the following.
\begin{lemma}\label{lem:cliques-in-bags}
    If $G$ contains a clique\footnote{That is, $\set{v,v'} \in \edges{G}$ for every pair of distinct $v,v' \in X$.} on $X \subseteq \vertex{G}$, then every "tree decomposition" of $G$ has a "bag" "containing@@bag" $X$.
\end{lemma}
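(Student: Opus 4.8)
The plan is to combine the \kl{connectivity condition} and the \kl{completeness condition} with the Helly property for subtrees of a tree. First I would fix an arbitrary \kl{tree decomposition} $(T, \bagmap, \atommap)$ of $G$ and, for each vertex $v \in X$, consider the set of \kl{bags} $T_v \defeq \set{t \in \vertex{T} : v \in \bagmap(t)}$. By the \kl{connectivity condition}, each $T_v$ induces a nonempty, connected subtree of $T$.

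Next I would show that these subtrees pairwise intersect. For any two distinct $v, v' \in X$, the pair $\set{v, v'}$ is a hyperedge of $G$ because $X$ is a clique; by the \kl{completeness condition} there is a bag $t$ with $\set{v,v'} \subseteq \bagmap(t)$, and any such $t$ lies in $T_v \cap T_{v'}$. Hence $\set{T_v : v \in X}$ is a family of pairwise-intersecting subtrees of $T$, and it suffices to produce a single node lying in all of them, since its bag then contains all of $X$.

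The crux is the Helly property: a family of pairwise-intersecting subtrees of a tree has a common node. I would prove this directly by rooting $T$ at an arbitrary node and, for each $v \in X$, letting $p_v$ be the node of $T_v$ closest to the root (well-defined and unique since $T_v$ is a connected subtree). Choosing $v_0 \in X$ so that $p_{v_0}$ has maximum depth, I claim $p_{v_0} \in T_v$ for every $v$: picking any common node $w \in T_v \cap T_{v_0}$, both $p_v$ and $p_{v_0}$ are ancestors of $w$ and are therefore comparable in the ancestor order; maximality of the depth of $p_{v_0}$ forces $p_v$ to be an ancestor of $p_{v_0}$, so $p_{v_0}$ lies on the path from $p_v$ to $w$ inside the connected subtree $T_v$, whence $p_{v_0} \in T_v$. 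The bag $\bagmap(p_{v_0})$ then \kl{contains@@bag} all of $X$, which concludes the argument.

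The main obstacle is this last Helly step: the subtree setup and the pairwise-intersection claim are immediate from the definitions, but the passage from pairwise to global intersection genuinely requires the rooting-and-topmost-node argument above (or, alternatively, an induction on $\sizeofq[X]$ exploiting that removing a leaf of $T$ whose only ``private'' subtree is some $T_v$ preserves the hypotheses).
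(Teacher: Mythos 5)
Your proof is correct and complete. Note that the paper itself does not prove this lemma at all: it records it as a known fact (``due to the completeness and connectivity conditions''), so there is no in-paper argument to compare against; your argument---pairwise intersection of the subtrees $T_v$ via the completeness condition, then the Helly property for subtrees of a tree proved by rooting $T$ and taking the deepest topmost node $p_{v_0}$---is precisely the standard proof of this classical fact, and every step (uniqueness of $p_v$, comparability of ancestors of a common node $w$, and the path from $p_v$ to $w$ staying inside $T_v$) is sound.

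One cosmetic remark: the nonemptiness of $T_v$ does not follow from the connectivity condition, which only constrains the shape of $\set{t \in \vertex{T} : v \in \bagmap(t)}$; it follows from the completeness condition, since each $v \in X$ lies in the hyperedge $\set{v,v'}$ for any other $v' \in X$ (and, in the degenerate case $|X| \le 1$, in some hyperedge by the paper's convention that its hypergraphs have no isolated vertices). Your pairwise-intersection step already establishes this, so nothing is missing; just attribute it to the right condition.
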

\AP
In the context of a "tree decomposition" $(T,\bagmap,\atommap)$ of a "CQ" %
$q(X)$ it is sometimes useful to have an explicit link between the "atoms" of $q$ and the hyperedges; we hence define an ""atom labeling"" as any mapping $\intro*\atommaplab  : \vertex{T} \to\pset{q}$ such that $\atommap(v) = \set{\vars(\alpha) : \alpha \in \tilde\atommap(v)}$ for every $v \in \vertex{T}$.
\end{toappendix}
\AP
An ""acyclic"" "CQ" or "aggregate query" is one having "generalized hyperwidth" $1$.
\AP
In this context, a ""join tree"" of a set $q$ of "atoms" is a "tree decomposition" of "width" 1 of $q$ together with a ""witnessing bijection"" $\phi: \vertex{T} \to q$ such that $\atommap(v) = \set{\vars(\phi(v))}$ and $\bagmap(v) = \vars(\phi(v))$ for every $v$. 

\begin{toappendix}
    The following are some useful known results on "tree decompositions".
\begin{lemma}\AP\hfill
    \AP\label{lem:jointree}
    \begin{enumerate}
        \item \AP\label{lem:jointree:width1=jointree}
        \cite[Theorem~4.5]{GottlobLS02} A "CQ" has a "join tree" if{f} it has a "tree decomposition" of "width" 1;%
        \item \AP\label{lem:jointree:lineartime}
        \cite{Graham1980universal,YuO79} Testing whether a "CQ" $q(X)$ has a "join tree" is decidable in $\+O(\sizeofq)$; if so, a "join tree" can be produced within the same time bounds.
    \end{enumerate}
\end{lemma}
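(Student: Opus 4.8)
The plan is to handle the two parts separately, the first being structural and the second algorithmic. For Part~1 the forward implication is immediate: a \emph{join tree} is, by definition, a \emph{tree decomposition} of \emph{width}~$1$ once one forgets the witnessing bijection $\phi$, since each bag equals $\vars(\phi(v))$ and is covered by the single atom $\phi(v)$. So the entire content sits in the converse: from an arbitrary width-$1$ decomposition $(T,\bagmap,\atommap)$ of $\hyperq$ I must manufacture a join tree, namely a tree whose nodes are in bijection with $\atoms(q)$ and whose bags are exactly the hyperedges $\vars(\alpha)$, with the connectivity condition intact.

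For the converse I would first normalize. Because the width is $1$, the covering condition forces every nonempty bag $\bagmap(v)$ to lie inside the single hyperedge $e_v$ with $\atommap(v)=\set{e_v}$. I would then try to (i) expand each bag to its full covering hyperedge, $\bagmap'(v)=e_v$, and (ii) contract every tree edge both of whose endpoints are covered by the same hyperedge and delete bags strictly contained in a neighbour, until each $\vars(\alpha)$ occurs as the bag of exactly one node, at which point $v\mapsto\alpha$ is the required bijection and the connectivity condition is precisely the running-intersection property of a join tree. The delicate point --- and the main obstacle of Part~1 --- is that expanding a bag preserves connectivity only under a local condition: each newly added variable $x\in e_v\setminus\bagmap(v)$ must already occur in the neighbour of $v$ lying toward the connected block $\set{t : x\in\bagmap(t)}$, which I would establish from the completeness condition (some bag contains all of $e_v$) together with existing connectivity. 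As a cleaner fallback I would invoke the classical chain $\ghw(\hyperq)=1$ iff $\hyperq$ is $\alpha$-acyclic iff it has a join tree, which is exactly \cite{GottlobLS02} combined with the Beeri--Fagin--Maier--Yannakakis characterization.

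For Part~2 I would not search over decompositions at all, but run the Graham--Yu--Ozsoyoglu (GYO) reduction directly on $\hyperq$: repeatedly delete every variable occurring in a single atom and every atom whose variable set is contained in another atom's, recording for each removed atom the atom that absorbs it. The hypergraph reduces to the empty one iff $q$ is acyclic, and in that case the absorption record, read in reverse elimination order, assembles both a join tree and its witnessing bijection. The only genuine obstacle here is the $\+O(\sizeofq)$ bound, since a naive GYO implementation is superlinear; I would therefore appeal to the linear-time realization of the acyclicity test and join-tree construction via maximum-cardinality search together with the associated bucket/union-find structures \cite{Graham1980universal,YuO79} (in the spirit of Tarjan--Yannakakis), which decides acyclicity and outputs the join tree in a single linear pass over the encoding of $q$.
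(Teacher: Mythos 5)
The paper offers no proof of this lemma: both items are stated purely as citations of known results (\cite[Theorem~4.5]{GottlobLS02} for the join-tree/width-1 equivalence, and \cite{Graham1980universal,YuO79} for the linear-time test and construction), which is exactly where your two fallbacks land, so your proposal is correct and takes essentially the same route. The only caution is that your direct normalization argument for the converse of Part~1 is genuinely incomplete at the point you yourself flag---expanding a bag to its covering hyperedge can violate the connectivity condition, and the local repair you sketch is not obviously sufficient---so deferring to the cited equivalence with $\alpha$-acyclicity is the correct resolution rather than a mere convenience.
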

\begin{lemma}[Consequence of \cite{ChenD05,DalmauKV02}]\AP\label{lem:equiv-sem-ghw-CQ}
    For any class $\+C$ of "CQs" such that $\ghw(\core(\+C))$ is bounded, checking whether two queries therein are "equivalent" is in polynomial time.
\end{lemma}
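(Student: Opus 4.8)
The plan is to reduce the "equivalent" test to two "$C$-homomorphism" tests and then each of these to a single ordinary "CQ" "evaluation" problem, for which the cited results supply tractability. Recall from the preliminaries that two "CQs" $q(X), q'(X)$ (necessarily over the same "free variables" $X$) are "equivalent" exactly when $q \Chomto q'$ and $q' \Chomto q$, for $C = X \cup \const(q) \cup \const(q')$. Hence it suffices to decide, given $q(X), q'(X) \in \+C$, whether a "$C$-homomorphism" $q' \Chomto q$ exists in polynomial time; the other direction is symmetric, and one conjoins the two answers.

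The reduction of a single "$C$-homomorphism" test to "evaluation" proceeds by \emph{freezing} the target into a canonical "database". Choose a fresh "constants" $\hat t$ for each $t \in \mterms(q)$, with the convention $\hat c = c$ for $c \in \const(q)$, write $D_q \eqdef \set{R(\hat t_1, \dotsc, \hat t_k) : R(t_1, \dotsc, t_k) \in q}$, and let $\hat X$ be the tuple obtained by freezing $X$. A routine check shows that $q' \Chomto q$ holds iff $\hat X \in \evalqD{q'(X)}{D_q}$: a $\const(q')$-"homomorphism" $h : q' \to D_q$ with $h|_X = \hat X$ is precisely a "$C$-homomorphism" from $q'$ to $q$, the preservation of $X$ being enforced by requiring the answer tuple $\hat X$ and that of $\const(q)$ being automatic since those "constants" sit unchanged in $D_q$. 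Constructing $D_q$ is linear in $\sizeofq$, so this is a polynomial-time reduction to deciding membership of one fixed tuple in the "evaluation" of the "CQ" $q'(X)$ over $D_q$.

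It then remains to invoke tractability of the resulting "evaluation". Since $q' \in \+C$ and $\ghw(\core(\+C))$ is bounded, the "core" of $q'(X)$ has bounded "generalized hyperwidth"; by the cited results --- \cite{DalmauKV02} in the tree-width/bounded-"arity" regime and \cite{ChenD05} for the general hypergraph-width regime --- deciding whether the fixed tuple $\hat X$ belongs to $\evalqD{q'(X)}{D_q}$ is in polynomial time in "combined complexity", that is, polynomial in $\sizeofq + \sizeofD[D_q]$ and hence in the sizes of the two input "CQs". Applying this in both directions yields the claimed polynomial-time "equivalence" test. Note that the algorithm never computes the "core" explicitly (which would itself be intractable): boundedness of cores is used only as a promise that feeds into the cited evaluation algorithm.

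The step I expect to require the most care is the free-variable and "constants" bookkeeping in the reduction: one must verify that freezing $X$ into distinguished "constants" $\hat X$ and then demanding $\hat X$ as the output tuple faithfully encodes the identity requirement on $X \cup \const(q) \cup \const(q')$, and, crucially, that the cited evaluation results apply in the setting where "free variables" and "constants" are treated as fixed labels --- so that the width controlling tractability is exactly that of the "core" taken with $X \cup \const(q)$ held fixed, matching the measure $\ghw(\core(\+C))$ in the hypothesis. Once this correspondence between labeled cores and the frozen evaluation instance is pinned down, the remainder is an immediate appeal to the cited algorithms.
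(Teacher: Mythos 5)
Your proposal is correct and takes essentially the same route as the paper, which offers no proof of this lemma beyond attributing it to \cite{ChenD05,DalmauKV02}: the intended argument is exactly the one you spell out, namely reducing equivalence to two $C$-homomorphism tests (this is the paper's own definition of equivalence) and each of those, via the canonical (frozen) database, to a homomorphism/evaluation instance covered by the cited tractability results for classes whose cores have bounded width. The bookkeeping you flag at the end --- treating free variables and constants as fixed labels so that the width governing tractability is that of the relative core --- is resolved by the standard construction of adding a singleton unary predicate per distinguished element, which changes neither the core nor its generalized hyperwidth, so your deferred verification goes through routinely.
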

\begin{lemma}\AP\label{lem:core-sem-bound-ghw-polytime}
    For any class $\+C$ of "CQs" such that $\ghw(\core(\+C))$ is bounded, the "core" of a query from $\+C$ is polynomial-time computable.
\end{lemma}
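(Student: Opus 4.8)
The plan is to compute $\core(q(X))$ by a greedy atom-removal procedure, using the polynomial-time equivalence test of \Cref{lem:equiv-sem-ghw-CQ} (more precisely, the one-directional homomorphism test underlying it) as the basic subroutine. Fix $C \defeq X \cup \const(q)$ and let $k$ be the bound on $\ghw(\core(\+C))$. Starting from $q_0 \defeq q$, I maintain at each step a subquery $q_i \subseteq q$ (a subset of its atoms) that is "equivalent" to $q$. I then scan the atoms of $q_i$ and test, for each $\alpha \in \atoms(q_i)$, whether $q_i \Chomto q_i \setminus \set{\alpha}$; if some such $\alpha$ is found I set $q_{i+1} \defeq q_i \setminus \set{\alpha}$ and iterate, otherwise I stop and output $q_i$. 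Since each successful step deletes one atom, there are at most $\sizeofq$ outer iterations, each performing at most $\sizeofq$ homomorphism tests.

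For correctness, first note that every retained $q_i$ is indeed "equivalent" to $q$: the inclusion map is always a "$C$-homomorphism" $q_i \setminus \set{\alpha} \Chomto q_i$, so whenever the forward test $q_i \Chomto q_i \setminus \set{\alpha}$ succeeds the two queries are "equivalent", and "equivalence" is transitive. This also shows that the ``can we delete $\alpha$'' question is \emph{exactly} the single forward homomorphism test, as the reverse direction holds for free. When the loop halts at $q_m$, no atom is removable, and I claim $q_m$ is a "core": if $q_m$ admitted a proper (non-surjective) endomorphism fixing $C$, its image would be a subquery avoiding some atom $\alpha$, witnessing $q_m \Chomto q_m \setminus \set{\alpha}$ and contradicting halting. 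Hence $q_m$ is a "core" "equivalent" to $q$, and by uniqueness of "cores" up to "($\const(q) \cup X$)-isomorphism@$C$-isomorphism", $q_m$ is (isomorphic to) $\core(q(X))$.

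The efficiency, and the only genuine difficulty, lies in running each homomorphism test $q_i \Chomto q_i \setminus \set{\alpha}$ in polynomial time. The subtlety is that the transient targets $q_i \setminus \set{\alpha}$ need \emph{not} have bounded-$\ghw$ "cores" (removing atoms can increase core width), so \Cref{lem:equiv-sem-ghw-CQ} does not apply verbatim to the pair, and one must avoid circularity since the very goal is to compute a "core". The resolution is that I only ever need to decide a "homomorphism" whose \emph{source} is $q_i$: since $q_i$ is "equivalent" to $q \in \+C$, it has the same "core" as $q$, so $\ghw(\core(q_i)) \leq k$. The tractability result underlying \Cref{lem:equiv-sem-ghw-CQ} (from \cite{ChenD05,DalmauKV02}) decides $q_i \Chomto B$ for an arbitrary target $B$ in polynomial time precisely when the source has a bounded-$\ghw$ "core", and it does so \emph{without} first computing that "core" (relying instead on consistency/pebble-game arguments that are exact modulo homomorphic equivalence). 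Applying this with $B = q_i \setminus \set{\alpha}$ makes every test polynomial, so the whole procedure runs in polynomial time.
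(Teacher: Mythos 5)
Your proof is correct and takes essentially the same route as the paper's: greedy atom-by-atom removal with a polynomial-time equivalence test at each step, halting when no atom can be removed (correctness via the classical Chandra--Merlin core argument). If anything, you are more careful than the paper, which simply invokes \Cref{lem:equiv-sem-ghw-CQ} on the pair $(q_i, q_i \setminus \set{\alpha})$; your observation that only the forward test $q_i \Chomto q_i \setminus \set{\alpha}$ is needed (the reverse direction holding by inclusion), and that its source always has a bounded-width core because $q_i$ stays equivalent to $q$, supplies exactly the justification for why that invocation is legitimate even though the transient target may not lie in a bounded-core-width class.
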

\begin{proof}
    This is basically the same as \cite[Lemma~25]{DBLP:conf/icdt/ChenM15} generalized to "unbounded arity". Given $q(X) \in \+C$, we remove an atom from $q(X)$ obtaining $\bar q'(X)$, and we check if the queries are equivalent, which is in polynomial time due to \Cref{lem:equiv-sem-ghw-CQ}. We repeat until we cannot find such an atom, and it is well-known that this yields the "core" of $q(X)$ \cite{ChandraM77}.
\end{proof}
\end{toappendix}
\AP
The ""tree-width"" of a "tree decomposition" is the maximum cardinality of $\bagmap(v)$ minus one, and the "tree-width" of $q$, or $\intro*\tw(q)$, is the minimum "tree-width" among all its "tree decompositions". 
Notice that when working with "tree-width" we can disregard the $\atommap$-component from the definition of "tree decomposition" and simply describe a "tree decomposition" as a pair $(T,\bagmap)$.

\begin{toappendix}
    \begin{remark}\AP\label{rk:ghw-leq-tw}
        The "width" of any "tree decomposition" is at most its "tree-width"; hence, for every "CQ" $q$, $\ghw(q) \leq \tw(q)$.
        The "tree-width" of any "tree decomposition" is at most $k \cdot r -1$, where $k$ is its "width" and $r$ is the maximum "arity" of $q$.
    \end{remark}
\end{toappendix}

\begin{toappendix}
\AP
A "tree decomposition" $(T,\bagmap,\atommap)$ is a ""hypertree decomposition"" if 
$T$ is a rooted tree and
for every vertex $v \in \vertex T$ and descendant $v'$ thereof we have $\bagmap(v') \cap (\bigcup \atommap(v)) \subseteq \bagmap(v)$. 
The ""hyperwidth"" of a "hypergraph" $G$, or $\intro*\hw(G)$, is the minimum "width" among its "hypertree decompositions". 
The interest of such restriction is that it makes the problem of testing if a "CQ" has a "hypertree decomposition" of "width" $k$ in "PTime" (for any fixed $k$), whereas the problem is otherwise "NP"-complete \cite[Theorems 5.16 and 3.4 resp.]{GottlobLS02}.
\begin{theorem}[{\cite[Theorem 5.16]{GottlobLS02}}]\AP\label{lem:recognizability:hw:ptime}
    For any fixed $k$, the problem of testing if a "CQ" has a "hypertree decomposition" of "width" $k$ is in "PTime", if so, such decomposition can be produced in polynomial time.
\end{theorem}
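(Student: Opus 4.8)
The plan is to recover the classical argument of Gottlob, Leone, and Scarcello, which recognizes bounded "hypertree decomposition" "width" by an alternating top-down search over a \emph{polynomially bounded} space of candidate bags. The starting point is a \emph{normal form}: I would first show that whenever a "hypergraph" $G$ admits a "hypertree decomposition" of "width" $\leq k$, it admits one that is rooted and in which every non-root node $v$ is responsible for a single connected component $C_v$ of the subgraph of $G$ obtained by deleting the vertices of its parent's bag, with $\bagmap(v)$ equal to the vertices of $\bigcup \atommap(v)$ that are relevant to $C_v$ together with the inherited boundary. The point of this normalization is that each node is then \emph{fully determined} by the pair consisting of a choice of edge-cover $\atommap(v)$ of at most $k$ hyperedges and the component it decomposes. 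Since there are only $\+O(|\edges{G}|^k)$ sets of at most $k$ hyperedges, and each relevant component is a vertex set determined by such a separator, the number of distinct node types that can occur is polynomial for fixed $k$.

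The key step is to phase this as a two-player combinatorial game (the robber--marshals game). Starting from the whole vertex set, the existential player proposes, for the component currently being decomposed, an edge-cover $\lambda$ of at most $k$ edges; its union acts as a separator, and the universal player then selects one connected component of the remaining graph on which play continues. The existential player wins along a branch once no vertices are left to separate. I would argue that an existential winning strategy corresponds exactly to a normal-form "hypertree decomposition" of "width" $\leq k$: the existential moves give the $\atommap$-labels and hence the "covering condition", the branching of the universal player produces the tree and enforces the "connectivity condition", and the discipline of only ever separating \emph{within} the inherited component is what encodes the descendant/special condition $\bagmap(v') \cap (\bigcup \atommap(v)) \subseteq \bagmap(v)$ for every descendant $v'$ of $v$.

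Because the set of game positions (a component together with the boundary vertices it inherits) and the set of existential moves (edge-covers of size $\leq k$) are both polynomially many for fixed $k$, I would solve the game by a standard backward-induction / least-fixpoint computation over the position graph, which runs in polynomial time. For the constructive ``if so'' part, a winning strategy is read off top-down into a tree with the associated $\atommap$ and $\bagmap$, producing the decomposition within the same polynomial bound.

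The hard part will be the normal-form reduction and the correct handling of the descendant condition: unlike "generalized hyperwidth", where $\ghw$ is governed purely by local covering, the "hypertree decomposition" requirement is a \emph{global} constraint relating each node to all of its descendants. The technically delicate step is to prove that this global condition can be reduced, without loss of generality, to the local invariant that vertices already covered by a separator never reappear in the subtrees below it; establishing that every width-$k$ decomposition can be massaged into this component-respecting normal form is precisely what makes the bounded-branching game faithfully capture $\hw(G) \leq k$.
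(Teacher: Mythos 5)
You are comparing against a statement the paper never proves at all: it is imported verbatim as Theorem 5.16 of \cite{GottlobLS02}, and your proposal is essentially a faithful reconstruction of that original argument --- the normal-form theorem for width-$k$ hypertree decompositions, the alternating top-down search over positions consisting of a connected component together with a candidate cover of at most $k$ hyperedges, the polynomial bound on the number of such positions for fixed $k$, and reading the decomposition off a winning strategy are exactly the ingredients of the \emph{k-decomp} algorithm of Gottlob, Leone, and Scarcello. The one caveat worth recording is that the robber-and-marshals game characterizes hypertree width only through \emph{monotone} strategies (non-monotone marshal strategies can win with strictly fewer marshals), and the component discipline you impose is precisely what enforces monotonicity, so your game formulation is the correct one; accordingly, the normal-form reduction you single out as the delicate step is indeed where all the work lies in the original proof.
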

\begin{theorem}[{\cite[Theorem 19-2 combined with Lemmas 15 and 16]{AdlerGG07}}]\AP\label{lem:ghw:hw:bound}
    For every "hypergraph" $G$, 
    $\ghw(G) \leq \hw(G) \leq 3 \cdot \ghw(G) + 1$.
\end{theorem}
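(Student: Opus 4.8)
The plan is to prove the two inequalities separately, since they are of very different difficulty.

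For the lower bound $\ghw(G) \leq \hw(G)$, I would argue directly from the definitions. Every hypertree decomposition is in particular a generalized hypertree decomposition: the special (descendant) condition $\bagmap(v') \cap (\bigcup \atommap(v)) \subseteq \bagmap(v)$ is imposed \emph{on top of} conditions (a)--(c) rather than replacing any of them. Hence the family of hypertree decompositions of $G$ is contained in the family of its generalized hypertree decompositions, and minimizing the width over a smaller family can only yield a value that is at least as large. This gives $\ghw(G) \leq \hw(G)$ with no extra work.

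For the upper bound $\hw(G) \leq 3 \cdot \ghw(G) + 1$, I would start from an optimal generalized hypertree decomposition $(T,\bagmap,\atommap)$ of width $k = \ghw(G)$, root it arbitrarily, and transform it into a hypertree decomposition of width at most $3k+1$. The only property that may fail is the special condition, so the whole task is to reorganize $T$ so that along every root-to-leaf path no vertex covered by an ancestor's atoms $\bigcup\atommap(v)$ reappears below $v$ outside of $\bagmap(v)$. I would do this by a top-down recursive construction driven by the connected components of $G$ after deleting a current separator: maintaining a separator $S$ (a vertex set covered by few hyperedges) together with one component $C$ of the induced subgraph $\restrictG{\vertex{G} \setminus S}$ for which the subtree under construction is responsible, I would locate a node of the original decomposition whose bag separates $C$, re-cover the relevant boundary, and recurse on each resulting sub-component. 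The new bag at each step must simultaneously (i) cover that node's own bag, (ii) re-cover the inherited boundary vertices of $S$ that still touch $C$, and (iii) cover the frontier separating the new sub-components; each of these needs at most $k$ hyperedges, which is exactly where the factor $3$ (and the additive $+1$ for one extra edge absorbing leftover boundary vertices) comes from. Equivalently, one can route the argument through the robber--marshal characterizations of the two widths, where $\ghw$ is captured by the non-monotone hyperedge game and $\hw$ by its monotone variant, the content of the bound being that any winning strategy for $k$ marshals can be ``monotonized'' at the cost of a factor $3$ plus one --- analogous to, but strictly weaker than, the Seymour--Thomas collapse of the monotone and non-monotone cops-and-robber numbers for ordinary tree-width.

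The main obstacle is entirely in the upper bound, and specifically in this monotonization / component bookkeeping: a general generalized hypertree decomposition may cover, at a single node, vertices belonging to several distinct subtrees, so the delicate point is to prove that re-covering the inherited boundary together with the new frontier never requires more than $3k+1$ hyperedges, while also checking that the rebuilt decomposition still satisfies the completeness and connectivity conditions. Establishing that this blow-up is genuinely bounded by $3$ is the crux of the Adler--Gottlob--Grohe argument; by contrast, the inequality $\ghw(G) \leq \hw(G)$ is immediate.
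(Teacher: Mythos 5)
First, a remark on the comparison itself: the paper does not prove this statement at all --- it is imported verbatim from Adler--Gottlob--Grohe \cite{AdlerGG07} (Theorem 19-2 with Lemmas 15 and 16), so the only meaningful question is whether your blind attempt stands as a self-contained proof. Your first half does: the inequality $\ghw(G) \leq \hw(G)$ follows, exactly as you argue, because every hypertree decomposition is a generalized hypertree decomposition satisfying one additional (descendant) condition, so the minimum defining $\hw$ ranges over a subfamily of the decompositions defining $\ghw$. That part is complete and correct.

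The second half has a genuine gap: your argument for $\hw(G) \leq 3\cdot\ghw(G)+1$ is a plan rather than a proof, and the plan defers precisely the step that constitutes the theorem. You assert that in the recursive construction each of the three tasks --- covering the chosen node's own bag, re-covering the inherited boundary of the separator $S$ that still touches the component $C$, and covering the frontier of the new sub-components --- costs at most $k$ hyperedges, but only the first of these is immediate from the width bound. Why should the inherited boundary vertices, which may be scattered over many bags of the original decomposition, admit a cover by $k$ hyperedges of $G$? Establishing this (via the connectivity condition and a careful choice of the separating node), and simultaneously verifying that the rebuilt tree satisfies both the descendant condition and the connectivity condition --- the whole difficulty of monotonization being that repairing one tends to break the other --- is exactly where the cited lemmas of \cite{AdlerGG07} do their work; you name this as the crux and then stop. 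The alternative route you mention through the robber--marshal game has the same status: the claim that a non-monotone winning strategy for $k$ marshals can be monotonized at cost $3k+1$ is not a routine analogue of the Seymour--Thomas result for tree-width (whose conclusion is that monotonization there costs nothing, proved by a quite different argument); it \emph{is} the theorem to be proven. So the attempt correctly isolates the hard step but does not supply it, and as a blind proof it establishes only the first inequality.
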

\end{toappendix}

\subparagraph*{Free-Connex Decompositions}
\AP
Given a "tree decomposition" $(T,\bagmap,\atommap)$ of a "hypergraph" $G$ and a set $X\subseteq \vertex{G}$,  we say that $T'$ is a ""witness subtree for $X$"" (or simply a "witness subtree" if $X$ is clear from the context) if $T'$ is a connected subtree of $T$ such that $\bigcup_{v \in \vertex{T'}}\bagmap(v) = X$.
We say that a "tree decomposition" for a "CQ" $q(X)$ is ""free-connex"" if it contains a "witness subtree for $X$".
\AP
The ""free generalized hyperwidth"" of a "CQ" $q(X)$, or $\intro*\fghw(q(X))$, is the minimum "width" among all its "free-connex" "tree decompositions".

"Free connex" decompositions have been introduced and studied in the realm of enumeration algorithms for "CQs". In particular, "sjf" "CQs" of "free generalized hyperwidth" 1 ("aka" "free-connex" "acyclic" "sjf" "CQs") characterize (under complexity theoretic assumptions) the "sjf" "CQs" admitting a "constant delay enumeration after a linear pre-processing" \cite[Theorem 34]{BaganDG07}.

\begin{toappendix}
\begin{lemma}\AP\label{lem:semantic-fghw}
    A "CQ" $q(X)$ is "equivalent" to a "CQ" of "free generalized hyperwidth" $k$ if, and only if, $\core(q(X))$ is of "free generalized hyperwidth" at most $k$.
\end{lemma}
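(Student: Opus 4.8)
I read the claim as: $q(X)$ admits an equivalent \kl{CQ} of \kl{free generalized hyperwidth} at most $k$ if and only if $\fghw(\core(q(X))) \le k$; equivalently, the \kl{core} attains the minimum \kl{fghw} over the whole equivalence class. The ``if'' direction is immediate: $\core(q(X))$ is by definition \kl{equivalent} to $q(X)$, so it is itself an equivalent query witnessing \kl{fghw} $\le k$. For the ``only if'' direction, suppose $q'(X) \equiv q(X)$ has a \kl{free-connex} \kl{tree decomposition} of \kl{width} $k$. Since the \kl{core} is the minimal equivalent query, unique up to $(\const\cup X)$-\kl{isomorphism}, we have $\core(q(X)) \cong \core(q'(X))$, and \kl{fghw} is invariant under such isomorphisms; hence it suffices to prove the single-query inequality $\fghw(\core(p(X))) \le \fghw(p(X))$ for every \kl{CQ} $p(X)$, and apply it to $p = q'$.

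\textbf{The core as a retract.} By the standard theory of cores \cite{ChandraM77}, I would fix an idempotent $(\const(p)\cup X)$-\kl{homomorphism} (retraction) $r\colon p \Chomto p$ whose image is (isomorphic to) $\core(p(X))$; write $V_c \defeq r(\vars(p))$. Two facts drive the argument. First, the \kl{free variables} lie in the core: each $x \in X$ is fixed by $r$, so $x = r(x) \in V_c$, i.e.\ $X \subseteq V_c$. Second, because $r$ is an endomorphism that is the identity on $V_c$, the core is exactly the \emph{induced subquery} of $p$ on $V_c$: its \kl{atoms} are precisely $\set{\alpha \in \atoms(p) : \vars(\alpha) \subseteq V_c}$, and for every $\alpha \in \atoms(p)$ the retract $r(\alpha)$ is a genuine core atom with $\vars(r(\alpha)) = r(\vars(\alpha))$.

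\textbf{Transforming the decomposition.} Starting from a \kl{free-connex} \kl{tree decomposition} $(T,\bagmap,\atommap)$ of $p$ of \kl{width} $k$, with \kl{atom labeling} $\atommaplab$ and a \kl{witness subtree} $T'$ for $X$, I would build a decomposition of $\core(p(X))$ on the \emph{same} tree $T$ by setting $\bagmap_c(v) \defeq \bagmap(v)\cap V_c$ and taking as atom labels the retracts $\set{r(\alpha) : \alpha \in \atommaplab(v)}$, with $\atommap_c(v)$ their variable sets. The verification is then routine. \emph{Connectivity} is preserved because for $x\in V_c$ one has $x\in\bagmap_c(v)\iff x\in\bagmap(v)$, so the relevant bag-sets are unchanged. \emph{Completeness} holds since every core atom $\beta$ satisfies $\vars(\beta)\subseteq V_c$ and is already covered by some original bag, which after intersecting with $V_c$ still contains $\vars(\beta)$. \emph{Width} does not increase, as applying $r$ can only merge hyperedges, so $|\atommap_c(v)| \le |\atommaplab(v)| \le k$. \emph{Free-connexity} survives with the same $T'$: since $X\subseteq V_c$, we get $\bigcup_{v\in\vertex{T'}}\bagmap_c(v) = \big(\bigcup_{v\in\vertex{T'}}\bagmap(v)\big)\cap V_c = X\cap V_c = X$.

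\textbf{Main obstacle.} The crux is the \kl{covering condition}, and it is where the construction has to be a hybrid rather than a naive push-forward. Pushing the bags through $r$ (taking $\bagmap_c(v)=r(\bagmap(v))$) would cover correctly but destroy \kl{connectivity}, since the fibers $r^{-1}(x)$ spread one core vertex over several disconnected parts of $T$; conversely, merely intersecting bags with $V_c$ keeps connectivity but leaves the original covering atoms ``sticking out'' of $V_c$, and their traces on $V_c$ need not lie inside any single core atom. The resolution is to intersect the \emph{bags} with $V_c$ while retracting the \emph{cover}: if $x\in\bagmap_c(v)\subseteq V_c$ is covered by some $\alpha\in\atommaplab(v)$, then $r(x)=x\in\vars(r(\alpha))$, so the core atoms $\set{r(\alpha)}$ cover $\bagmap_c(v)$ with at most $k$ atoms. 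This simultaneously secures covering and width without disturbing connectivity or the witness subtree, completing the inequality $\fghw(\core(p(X)))\le\fghw(p(X))$ and hence the lemma.
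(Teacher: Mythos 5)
Your proof is correct, but it does not follow the paper's argument; it takes a genuinely different and, in fact, more careful route. The paper's proof takes the strong-onto homomorphism $h \colon q' \Chomto \core(q(X))$ and pushes \emph{both} the bags and the atom sets forward through $h$, asserting in one line that the image $(T,\bagmap',\atommap')$ is again a free-connex tree decomposition of the core. You instead keep the tree, intersect the bags with the core's variable set $V_c$, and retract only the covers. The difference is substantive: the push-forward of bags can violate the connectivity condition, which is exactly the obstacle you name in your last paragraph. Concretely, take $q = \set{R(u_1,a),\, R(u_2,a),\, T(a)}$ with $X=\set{a}$, whose core is $\set{R(u_1,a),\, T(a)}$, and the width-$1$ free-connex decomposition on the path with nodes $t_1,t_2,t_3$ (edges $\set{t_1,t_2}$, $\set{t_2,t_3}$) and bags $\set{u_1,a}$, $\set{a}$, $\set{u_2,a}$; applying the retraction $u_2 \mapsto u_1$ to the bags yields $\set{u_1,a}$, $\set{a}$, $\set{u_1,a}$, and the set of nodes whose bags contain $u_1$ is $\set{t_1,t_3}$, which is disconnected, so the pushed-forward object is not a tree decomposition at all (completeness, covering, width and the witness subtree do survive the push-forward; only connectivity breaks). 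Your hybrid construction avoids this entirely: since bags are only intersected with $V_c$, membership of a core variable in a bag is unchanged, so connectivity and the witness subtree are untouched, while covering and width are rescued by replacing each covering atom $\alpha$ with its retract $r(\alpha)$. Both proofs reduce the lemma to the single-query inequality $\fghw(\core(p(X))) \le \fghw(p(X))$ together with uniqueness of cores up to $C$-isomorphism; what the paper's version buys is brevity, but as written it glosses over (and on the example above fails) the connectivity condition, whereas your version is a correct repair and would be the proof to keep.
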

\begin{proof}
    Let $q'(X)$ be "equivalent" to $q(X)$, meaning that there exists a "$C$-homomorphisms" $h:q' \Chomto \core(q(X))$ which is ``strong onto'' in the sense that $h(q') = \atoms(\core(q(X)))$.
    Let $(T,\bagmap,\atommap)$ be a "free-connex" "tree decomposition" of $q'(X)$, and apply the mapping $h$ to both $\bagmap$ and $\atommap$ obtaining $(T,\bagmap',\atommap')$. It follows that $(T,\bagmap',\atommap')$ is also a "free-connex" "tree decomposition" of $\core(q(X))$, and the "width" of $(T,\bagmap',\atommap')$ can only be smaller or equal to that of $(T,\bagmap,\atommap)$.
\end{proof}
\end{toappendix}

\section{Aggregate Queries}
\label{sec:aggregation}

\AP
For a set $K$, a ""$K$-annotated database"" is a pair $(D,\intro*\ann)$ where $D$ is a "database" and $\ann : D \to K$ is a function giving a $K$-valued ""annotation"" to each "fact". In our examples we will fix $K = \Nat$, and hence by ""annotated database"" we shall mean a "$\Nat$-annotated database".

We now define the language of aggregate queries based on "CQs" over "$K$-annotated databases", which we call ""$K$-aggregate queries"".\footnote{This language is heavily inspired by previous works as stated in \Cref{sec:relworks}.}
A "$K$-aggregate query on a set $X$ of free variables" yields, when evaluated on a "$K$-annotated database", a set of pairs $(f,n)$ such that $f$ is an "assignment" $f: X \to \Const$ and $n \in K$. The syntax and semantics are given below:

\medskip

    \proofcase{${\Join^\otimes} q$}  Given a finite set $q$ of "atoms" and a "commutative monoid" $\otimes : K \times K \to K$, the expression
    ``${\Join^\otimes} q$''
    is a "$K$-aggregate query" on $X$, for $X =\vars(q)$.
    Given a "$K$-annotated database" $(D,\ann)$ the ""evaluation"" of ${\Join^\otimes} q$ for $q = \set{\alpha_1, \dotsc, \alpha_\ell}$ on $(D,\ann)$ is the set 
    \AP
    $\intro*\evalaggD{{\Join^\otimes} q}{D,\ann} \eqdef \set{(h,\ann(h(\alpha_1)) \otimes \dotsb \otimes \ann(h(\alpha_\ell))) \mid h : q \Chomto D}$ for $C=\const(q)$. That is, it outputs all satisfying "homomorphisms" together with the $\otimes$-multiplication of the "facts" involved.
    For example, in the "annotated database" $(D,\ann)$ of \Cref{fig:ex-intro}, $\evalaggD{{\Join^{\!\times}} \set{\schemaName{Sale}(\textit{pers},d,\textit{prd},qty), \schemaName{Price}(\textit{prd},amt)}}{D,\ann}$ contains in particular $(f,8)$, where $f = (\textit{pers},d,\textit{prd},qty,amt) \assto (\text{p1},\text{2-sept},X,4,2)$.

    \medskip
    
    \proofcase{$ \pi_{Y}^{\oplus}~\gamma $} Given a "commutative monoid" $\oplus : K \times K \to K$, a "$K$-aggregate query" $\gamma$ on $X$, and a set $Y \subsetneq X$, %
    the expression ``$ \pi_{Y}^{\oplus}~\gamma $''
    is a "$K$-aggregate query" on $Y$.
    Given a "$K$-annotated database" $(D,\ann)$ the ""evaluation"" of $\pi_{Y}^{\oplus}\gamma$ on $(D,\ann)$ is the set 
    consisting of all pairs $(f,n)$ such that $S_f = \set{(f',n') \in \evalaggD{\gamma}{D,\ann} : f'(Y) = f(Y)}$ is non-empty and $n = \bigoplus \set{n' : (f',n') \in S_f}$.%
    \footnote{
      By $f'(Y) = f(Y)$ we mean $f(y) = f'(y)$ for all $y \in Y$; and by $\bigoplus \set{n' : (f',n') \in S_f}$ we mean the $\oplus$-sum \emph{with multiplicities}: each $n' \in \set{n' : (f',n') \in S_f}$ is $\oplus$-summed $|\set{f' : (f',n') \in S_f}|$-times.
    }
    That is, it outputs the restriction onto $Y$ of all "homomorphisms" given by $\gamma$ while $\oplus$-summing the annotations of the collisions.
    Again in \Cref{fig:ex-intro}, we have that $\evalaggD{\pi_{\emptyset}^{\max}\Join^{\!\times}\set{\schemaName{Sale}(\textit{pers},d,\textit{prd},qty)}}{D,\ann}$ is $\set{(\emptyset,5)}$ where $\emptyset$ is the empty "assignment", while $\evalaggD{\pi_{\set{\textit{pers}}}^{\!+}\Join^{\!\times}\set{\schemaName{Sale}(\textit{pers},d,\textit{prd},qty)}}{D,\ann}$ is $\set{(\textit{pers} \assto \text{p1}, 10), (\textit{pers} \assto \text{p2}, 2)}$ (on atomic queries the choice for the $\times$-product is irrelevant).

\smallskip

We will simply write \reintro{aggregate query} to denote a "$K$-aggregate query" on any $K$.
\AP
The ""underlying hypergraph"" $\reintro*\hyperq[\gamma]$ of an "aggregate query"   $\gamma = \pi_{X_n}^{\oplus_n} \dotsb \pi_{X_1}^{\oplus_1} {\Join^\otimes} q$ is defined as $\hyperq$, and its size is $\intro*\sizeofgamma \eqdef \sizeofq + \sum_{i\in[n]} |X_i|$.
Further, we say that
\AP
a "$K$-aggregate query" $\gamma$ is a ""semiring $K$-aggregate query"" if $(K,\oplus_i,\otimes)$ is a "commutative semiring" for every $i$.

\AP
The ""evaluation task"" $\reintro*\evalPb{\class}$ for a class $\+C$ of "$K$-aggregate queries" is the task of, given $\gamma \in \+C$ and a "$K$-annotated database" $(D,\ann)$, computing $\evalaggD{\gamma}{D,\ann}$.
The concepts of "constant delay enumeration after a linear (or polynomial) pre-processing@constant delay enumeration after a linear pre-processing" for $\class$ are defined analogously.

\subparagraph*{Basic Examples of Aggregate Queries}
We show here some examples of what kind of queries can be 
\AP
expressed via "aggregate queries".
Other examples can be found in \Cref{app:otherexamples}.
  Note that all the examples are of "semiring aggregate queries" and hence the complexity results of this manuscript will apply to such kind of queries.
  
In what follows we consider $+$, $\max$ and $\times$ as the operations of sum, maximum, and product of natural numbers, respectively. 
\AP
We will refer to $(D,\intro*\oneann)$ as a ""$1$-initialized"" "$\Nat$-annotated database" where $\oneann$ is the function that annotates each fact in the database with 1.
\begin{toappendix}
  \subsection{Other Examples of Aggregate Queries}
  \label{app:otherexamples}
\end{toappendix}

\begin{example}[CQ]
  Any "CQ" $q(X)$ can be expressed as a "$K$-aggregate query" $\pi^{\max}_X {\Join^{\max}} q$ where $K=\set{1}$ is a singleton set of "annotations". Indeed, for any "database" $D$, we have that $\evalqD{q(X)}{D}$ coincides with the projection onto the first component of $\evalaggD{(\pi^{\max}_X {\Join^{\max}} q)}{D,\oneann}$.\footnote{Since there is only one semiring on one element, the query could be more intuitively written as $\pi_X {\Join} q$.}
\end{example}

\begin{toappendix}
  \begin{example}[Group-by counting]
    \AP
    On a $(D,\oneann)$ database, an "aggregate query" of the form $\pi^+_{\set{y}} {\Join^\times} \set{\textsf{PersonDistrict}(x,y)}$ would compute the number of inhabitants per district.
    Concretely, the evaluation of a "$\Nat$-aggregate query"  $\pi_X^+ {\Join^\times}q$ outputs pairs of the form $(f,n)$ where $f$ is an assignment for $X$ and $n$ is the number of possible assignments $g$ for the remaining variables such that $f \cup g$ is a "homomorphism". In particular, if $X = \emptyset$, then $\pi^{+}_{\emptyset}{\Join^\times}q$ counts the number of "homomorphisms" from $q$ to the "database".
  \end{example}  
\end{toappendix}

\begin{example}[Counting answers to "CQs"]
  A query of the form $\pi_\emptyset^{+} \pi_X^{\max} {\Join^\times} q$ evaluated on $(D,\oneann)$ outputs a pair $(\emptyset,n)$, where $n$ is the number of answers to the (classical) "CQ" $q(X)$. In the literature this is often written $\# q(X)$. We will study this particular case in \Cref{sec:countingcq}.
\end{example}

\begin{toappendix}  
  \begin{example}[Counting answers to quantified "CQs"]
    Another example is the case of counting solutions to a quantified conjunctive query --known as \AP""shCQC"" \cite[Example~1.3]{KhamisNR16}-- in which case as usual the existential can be modelled as `$\max$' and the universal as `$\times$', again on "$1$-initialized" "$\Nat$-annotated databases" $(D,\oneann)$.  
  \end{example}
\end{toappendix}

\begin{example}[Max group-by count]
  \label{ex:maxgroupcount}
  On $(D,\oneann)$,
  suppose we want to compute the maximum number of inhabitants satisfying some criteria in a district, then the query $\pi_{\emptyset}^{\max} \pi_{\set{y}}^{+} \pi_{\set{x,y}}^{\max} {\Join^{\!\times}}q$ with $q$ of the form
  $ 
  \set{\textsf{Person}(x,\bar w), \textsf{PersonDistrict}(x,y), \textsf{District}(y,\bar z), q_{\textup{criteria}}(\bar z,\bar w)} 
  $ %
  would yield the expected result. More generally, an "aggregate query" of the form $\pi_\emptyset^{\max} \pi_Y^{+} \pi_X^{\max} {\Join^{\!\times}}q$ outputs a pair $(\emptyset,n)$, where $n$ is the maximum number of answers to the (classical) "CQ" $q(X)$ for every possible assignment of the variables $Y \subsetneq X$. 
\end{example}

\begin{toappendix}
  \begin{example}[Witnessing provenance]
    \label{ex:witnessfacts}
    Suppose we want to evaluate a "CQ" $q(X)$ but we want not only the satisfying assignments for $X$ but also some witnessing "facts" for the satisfaction for all query atoms. For example, if we want to evaluate $q(\set x)$ for $q = \set{R(x,y),S(y,z)}$, we would like to obtain, for each assignment $\set{x \mapsto c}$ also some pair of facts $R(c,c')$ and $S(c',c'')$ from the database which justify that $c$ is part of the answer.
    While this can be done via a simple adaptation of any evaluation algorithm, for example "Yannakakis algorithm", in our setting it is just a matter of choosing the right "semiring" to compute the provenance with the right degree of detail needed. Below we show one such way of implementing this.
  
    \AP
    Let $\intro*\Mfacts$ be the set of all finite multisets of "facts".
    Let $\prec$ be a total order on the set of all "facts", and consider the following lexicographic lifting $\prec^*$ of $\prec$ to multisets of "facts": $\mset{\alpha_1 \prec \dotsb \prec \alpha_n} \prec^* \mset{\alpha'_1 \prec \dotsb \prec \alpha'_m}$ if, for some $1 \leq i \leq n$ we have $\alpha_j = \alpha'_j$ for every $j < i$ and either (i) $\alpha_i \prec \alpha'_i$ or (ii) $\alpha_i = \alpha'_i$ and $i=n < m$.
  
    Let $\max_{\prec} : \Mfacts \times \Mfacts \to \Mfacts$ be the $\prec^*$-maximum.
  
    \begin{claim}
      $(\Mfacts,\max_{\prec},\cup)$ is a "commutative semiring".
    \end{claim}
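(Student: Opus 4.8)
The plan is to verify the four defining conditions of a commutative semiring for the structure with additive operation $\oplus = \max_\prec$ and multiplicative operation $\otimes = \cup$. First I would identify the two identities. The multiplicative identity is the empty multiset $\emptyset$, since $M \cup \emptyset = M$. The additive identity $0_\oplus$ must be simultaneously the $\prec^*$-least element (so that $\max_\prec(M,0_\oplus) = M$) and absorbing under $\cup$ (so that condition~(iv) below holds); concretely this is the distinguished ``no-witness'' bottom element, behaving as $0_\oplus \cup M = M \cup 0_\oplus = 0_\oplus$, which is necessarily distinct from $\emptyset$. As a preliminary sanity check I would confirm that $\prec^*$ is a genuine total order on $\Mfacts$ (totality and antisymmetry, reading the lexicographic comparison of the sorted sequences together with the prefix tie-break of case~(ii)); this is what makes the binary operation $\max_\prec$ well defined.

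The monoid axioms (i) and (ii) are then routine. Since $\prec^*$ is a total order, $\max_\prec$ is commutative and associative (the maximum of a finite set is independent of the bracketing), and $0_\oplus$ is its identity by the choice of a least element, giving the commutative monoid $(\Mfacts,\max_\prec)$. Multiset union is associative and commutative with identity $\emptyset$, giving the monoid $(\Mfacts,\cup)$; its commutativity is precisely what upgrades the result from a semiring to a \emph{commutative} semiring. The absorption law (iv) holds exactly because $0_\oplus$ was chosen to be absorbing under $\cup$.

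The main obstacle is distributivity (iii), which for a single summand on each side reduces to the identity $\max_\prec(A,B) \cup C = \max_\prec(A \cup C,\, B \cup C)$ for all $A,B,C \in \Mfacts$. I would derive this from one monotonicity lemma: if $A \preceq^* B$ then $A \cup C \preceq^* B \cup C$. Granting this, the map $Z \mapsto Z \cup C$ is order-preserving, hence commutes with $\max_\prec$ on totally ordered inputs, and the full $(n,m)$-ary distributive law then follows by induction using associativity and commutativity of $\oplus$ and $\otimes$. The heart of the argument, and the delicate point, is the monotonicity lemma itself: one must track the first position at which the sorted sequences of $A$ and $B$ disagree and show that merging the elements of $C$ into both sequences preserves the sign of that disagreement. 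The cases requiring care are the ties, where $A$ and $B$ share a prefix, and the situation where elements of $C$ insert at or around the differing position; this is exactly where the interaction between the lexicographic comparison and multiset union must be controlled so that union respects $\prec^*$.
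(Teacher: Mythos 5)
First, a point of reference: the paper states this claim without any proof, so your argument must stand entirely on its own. It does not, and the reason is concrete: the key lemma you defer is false. You correctly reduce distributivity to the monotonicity statement that $A \preceq^* B$ implies $A \cup C \preceq^* B \cup C$, and you correctly identify the tie/prefix cases as the delicate point --- but with $\prec^*$ exactly as the paper defines it (clause~(ii) makes a proper prefix \emph{smaller}), monotonicity fails precisely there. Take three facts $a \prec b \prec c$ and set $A = \mset{a}$, $B = \mset{a,b}$, $C = \mset{c}$. Clause~(ii) gives $A \prec^* B$, yet $A \cup C = \mset{a,c}$ and $B \cup C = \mset{a,b,c}$ compare at the second position as $b \prec c$, so clause~(i) gives $B \cup C \prec^* A \cup C$: the map $Z \mapsto Z \cup C$ reverses the order. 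Distributivity itself then fails, with $n=2$, $m=1$ in axiom~(iii): $\max_{\prec}(A,B) \cup C = \mset{a,b,c}$ whereas $\max_{\prec}(A \cup C, B \cup C) = \mset{a,c}$. So no amount of care in tracking the merge can prove your lemma; the claim can only be saved by changing the order so that $\cup$ is translation-invariant. For instance, flipping clause~(ii) so that a proper prefix is \emph{larger} works: that order coincides with lexicographic comparison of multiplicity vectors (the $\prec$-smallest fact on which the two multisets disagree decides), and since $\cup$ simply adds multiplicity vectors, monotonicity is immediate --- a cleaner route than a positional analysis of merged sorted sequences.

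Second, your treatment of the identities silently proves a statement about a different structure. The carrier in the claim is $\Mfacts$, the set of all finite multisets of facts; it contains no distinguished `no-witness' bottom element. You are right that it cannot: an element absorbing under $\cup$ would have to satisfy $|M| + |N| = |N|$ for every $M$. Hence on the literal structure $(\Mfacts,\max_{\prec},\cup)$ the only candidate for $0_\oplus$ is $\emptyset$ (the intended $\prec^*$-least element), which is not absorbing, and axiom~(iv) fails outright. Adjoining a fresh $\bot$ with $\bot \cup M = M \cup \bot = \bot$ is indeed the standard tropical-style repair --- and note that under the corrected, prefix-is-larger order it becomes necessary even for axiom~(i), since $\Mfacts$ then has no $\prec^*$-least element at all ($\mset{a} \succ^* \mset{a,a} \succ^* \dotsb$). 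In summary, both points you flagged as needing care are exactly the points where the claim, as literally stated, is false; a correct proof must first repair the definitions (prefix-is-larger tie-breaking plus an adjoined absorbing bottom), after which your outline --- routine monoid axioms, distributivity via order-preservation of $Z \mapsto Z \cup C$ --- does go through.
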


    Consider an "$\Mfacts$-annotated database" $(D,\ann)$ where $\ann(\alpha) = \mset{\alpha}$ for every $\alpha \in D$. Then the evaluation of the "aggregate query" $\pi_X^{\max_{\prec}}{\Join^\cup}q$ on $(D,\ann)$ computes all pairs of the form $(h,\rho)$ where $h$ is a solution to the "CQ" $q(X)$, and $\rho$ is some witnessing multiset of facts. In fact, not just any multiset but the $\prec^*$-maximum one.
  \end{example}

  \begin{example}
    Computing the \emph{how}-provenance, \emph{why}-provenance, or any other sort of provenance \cite{GreenT17} for a "CQ" can also be seen as an "aggregate query" in this abstract framework by instantiating the "semiring" accordingly.
  \end{example}
  
  \begin{example}[Combining]
    Suppose we want to compute a query like the one in \Cref{ex:maxgroupcount} but we want, additionally, to have the facts which maximize the query.
    Observe that 
    \begin{claim}
      $(\Mfacts, \cap, \cup)$ is a "commutative semiring"  
    \end{claim}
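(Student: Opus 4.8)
The plan is to verify the four semiring axioms together with commutativity of $\otimes$ by passing to the standard encoding of a finite multiset $M \in \Mfacts$ as its finitely-supported multiplicity function $m_M$, which sends each fact $\alpha$ to the number of its occurrences in $M$. Under this encoding multiset union and intersection act coordinatewise, namely $m_{M \cup M'}(\alpha) = \max(m_M(\alpha), m_{M'}(\alpha))$ and $m_{M \cap M'}(\alpha) = \min(m_M(\alpha), m_{M'}(\alpha))$. Hence $(\Mfacts, \cap, \cup)$ is exactly the meet--join structure of the distributive lattice of multiplicity functions, and the entire verification reduces, coordinate by coordinate, to checking that $(\Nat, \min, \max)$ obeys the semiring laws with additive operation $\oplus = \min$ and multiplicative operation $\otimes = \max$.

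First I would record the two monoid structures. Coordinatewise $\max$ is associative, commutative and idempotent with the empty multiset $\emptyset$ as identity, so $(\Mfacts, \cup)$ is a commutative monoid; this discharges axiom (ii) and the commutativity of $\otimes$ in one stroke. Coordinatewise $\min$ is likewise associative and commutative, giving axiom (i). The distributive law of $\otimes$ over $\oplus$, i.e. $M \cup (M' \cap M'') = (M \cup M') \cap (M \cup M'')$, follows coordinatewise from $\max(a, \min(b,c)) = \min(\max(a,b), \max(a,c))$ in $\Nat$, which holds since $\Nat$ is a chain and hence a distributive lattice; this settles axiom (iii). The absorption law (iv), $M \cup 0_\oplus = 0_\oplus$, reduces to $\max(a, \top) = \top$.

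The one genuine subtlety --- and the step I expect to be the main obstacle --- is the additive identity $0_\oplus$ of $\cap = \min$: coordinatewise it is the constant function $\alpha \mapsto \infty$, i.e. the \emph{maximal} multiset, which is not itself a finite multiset, and it is precisely this top element that the absorption law (iv) requires. The way I would handle this is to observe that in the intended provenance application every multiset that arises is a sub-multiset of one fixed finite multiset $U$, namely the multiset of all facts of the database $D$, so it suffices to establish the claim inside the bounded sublattice $\{M \in \Mfacts : M \subseteq U\}$. There the top element is the finite multiset $U$ itself, which serves as $0_\oplus$ and is absorbing for $\cup$ since $M \cup U = U$ whenever $M \subseteq U$; all the identities verified above go through verbatim, and because $\emptyset \subseteq U$ the multiplicative identity $\emptyset$ of $\cup$ also lies in the sublattice. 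This yields that $(\Mfacts, \cap, \cup)$, read over this bounded domain, is a commutative semiring.
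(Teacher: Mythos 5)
The paper offers no proof of this claim at all --- it is asserted in passing inside an appendix example --- so the only thing to compare your attempt against is the statement itself, and your verification is the natural one. Encoding finite multisets by their multiplicity functions, $\cup$ and $\cap$ become coordinatewise $\max$ and $\min$, and associativity, commutativity, the identity $\emptyset$ for $\cup$, and distributivity of $\cup$ over $\cap$ all follow from $(\Nat,\min,\max)$ being a distributive lattice, exactly as you argue. More importantly, the subtlety you isolate is a genuine flaw in the statement as written: the paper's definition of semiring requires $(K,\oplus)$ to be a commutative monoid with an identity $0_\oplus$ satisfying $a \otimes 0_\oplus = 0_\oplus$, and the lattice of \emph{all} finite multisets of facts has no top element (multiplicities are unbounded, and the set of facts is infinite), so $(\Mfacts,\cap)$ has no identity and the claim is false as literally stated. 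Your repair is also sound: inside the sublattice of sub-multisets of a fixed finite multiset $U$, the top element $U$ serves as $0_\oplus$, absorption holds, and the intended application never leaves this sublattice, since the annotations are the singletons $\mset{\alpha}$ for facts $\alpha$ of the database and the operations $\cup$, $\cap$, $\max_{\prec}$ preserve being a sub-multiset of the database's facts.

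Two caveats on the repair rather than on its correctness. First, making the semiring depend on the database sits awkwardly with the paper's framework, in which the domain $K$ is fixed before the $K$-annotated database is given; a cleaner fix is to adjoin a single formal top element $\top$ to $\Mfacts$ (with $M \cap \top = M$ and $M \cup \top = \top$), which yields a database-independent commutative semiring in which the example can be run unchanged. Second, one can instead observe that neither the semantics of aggregate queries nor the evaluation algorithms in the paper ever invoke $0_\oplus$ (every $\oplus$-aggregation there ranges over a nonempty set), so weakening axiom (i) to a commutative semigroup would validate the claim verbatim. In short, your proof is correct for a repaired statement, and some repair is genuinely needed: the paper's unproved claim does not hold under its own definition of semiring.
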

    If $(K_1,\oplus_1,\otimes_1)$ and $(K_2,\oplus_2,\otimes_2)$ are "commutative semirings", so is $(K_1 \times K_2,(\oplus_1 \times \oplus_2),(\otimes_1 \times \otimes_2))$, where $(c_1,c_2) (\oplus_1 \times \oplus_2) (c'_1,c'_2) = (c_1 \oplus_1 c'_1, c_2 \oplus_2 c'_2 )$ and likewise for $(\otimes_1 \times \otimes_2)$.
  
    Hence, the expression 
    \[
      \pi_{\emptyset}^{(\max,\max_{\prec})} \pi_{\set{y}}^{(+,\cap)} \pi_{\set{x,y}}^{(\max,\cap)} {\Join^{(\times,\cup)}}q
    \]
    formalizes the query, on "$(\Nat \times \Mfacts)$-annotated databases@annotated database" $(D,\ann)$ where $\ann(\alpha) = (1,\mset{\alpha})$ for every $\alpha \in D$.
  \end{example}

\end{toappendix}

\section{Project-Connex Decompositions for Aggregate Queries}
\label{sec:project-connex-intro}
\AP
We generalize the definition of "free-connex" to "aggregate queries". A "tree decomposition" $(T,\bagmap,\atommap)$ for an "aggregate query" $\gamma = \pi_{X_n}^{\oplus_n} \dotsb \pi_{X_1}^{\oplus_1} {\Join^\otimes} q$ is ""project-connex"" if
for every $i \in [n]$ there are subtrees $T_i$ such that
\begin{enumerate}[(i)]
    \item each $T_i$ is a "witness subtree" for $X_i$ and 
    \item for every $i < n$ we have $\vertex{T_{i+1}} \subseteq \vertex{T_{i}}$.
\end{enumerate}

If $X_n = \emptyset$, then $T_n$ is just the ``empty tree''.
\AP
The ""project-connex generalized hyperwidth"" of 
an "aggregate query" $\gamma$, or $\intro*\pghw(\gamma)$, 
is the minimum "width" among all its "project-connex" "tree decompositions".
Observe that a "tree decomposition" $(T,\bagmap,\atommap)$ of $\hyperq$ is "free-connex" for the "CQ" $q(X)$ (also written $\pi_X q$) if{f} it is "project-connex" for the "aggregate query" $\pi^\oplus_X {\Join^\otimes} q$ (or for $\pi^{\oplus_2}_\emptyset\pi^{\oplus_1}_X {\Join^\otimes} q$).
We will henceforth assume that whenever we are given a "project-connex" "tree decomposition" we also have available its "witness subtrees"; and conversely, when we build a "project-connex" "tree decomposition" we also produce the corresponding "witness subtrees".

We first show that, without loss of generality, decompositions can be assumed to be `small'. That is, large "project-connex" decompositions can be shrunk in an efficient way.
The following lemma is inspired by a similar useful result for "free-connex" "tree decomposition" \cite[Lemma~3.3]{berkholz2020constant}. We provide a generalization to "project-connex" and slight bound 
improvement.\footnote{The size of the "tree decomposition" is bounded by $|\vars(\gamma)|^2$ in \cite[Lemma~3.3]{berkholz2020constant}.}
\begin{lemma}\AP\label{lem:bound-decompositions-agg-queries}
    Given a "project-connex" "tree decomposition" $(T,\bagmap,\atommap)$ for an "aggregate query" $\gamma$, one can produce in linear time  another "project-connex" "tree decomposition" $(\widehat T,\widehat \bagmap,\widehat \atommap)$ for $\gamma$ of the same "width" so that $|\vertex{\widehat T}| \in \+O(\sizeofgamma)$.
\end{lemma}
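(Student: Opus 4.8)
The plan is to obtain $(\widehat T,\widehat\bagmap,\widehat\atommap)$ by repeatedly contracting any edge whose lower endpoint carries a bag that is a subset of its parent's, after first rooting $T$ so that all witness subtrees become closed under taking parents. Concretely, let $\vertex{T_n}\subseteq\dotsb\subseteq\vertex{T_1}$ be the nested witness subtrees, and let $T_j$ be the smallest one that is nonempty; since $X_1\supsetneq\dotsb\supsetneq X_n$, only the innermost set $X_n$ can be empty, so $j=n$ unless $X_n=\emptyset$, in which case $j=n-1$ (and if every $X_i$ is empty, or if $n=0$, the project-connex constraint is vacuous and we root $T$ arbitrarily). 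I would root $T$ at some $r\in\vertex{T_j}$. Because $\vertex{T_j}\subseteq\vertex{T_i}$ for all $i\le j$ while the remaining $X_i$ are empty, every nonempty witness subtree contains $r$; being a connected subtree through $r$, each $T_i$ is then \emph{ancestor-closed}: if $v\in\vertex{T_i}$ then every ancestor of $v$ up to $r$ lies in $\vertex{T_i}$.

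Next I would exhaustively contract every edge $(u,v)$ with $v$ the parent of $u$ and $\bagmap(u)\subseteq\bagmap(v)$: delete $u$, reattach its children to $v$, set $\widehat\bagmap(v)=\bagmap(v)$ and $\widehat\atommap(v)=\atommap(v)$, and delete $u$ from each $\vertex{T_i}$ containing it. A single contraction keeps a valid tree decomposition: completeness holds because any hyperedge $e\subseteq\bagmap(u)$ also satisfies $e\subseteq\bagmap(v)$ and $v$ survives; covering and width are untouched since $\widehat\bagmap(v)\subseteq\bigcup\widehat\atommap(v)=\bigcup\atommap(v)$ and no $\atommap$ grows; connectivity survives since $u$'s children reattach to $v$. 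The delicate point, and the main obstacle, is preserving the project-connex structure, namely keeping each $T_i$ connected and covering \emph{exactly} $X_i$ (not a proper superset), with the inclusions intact. This is exactly where the rooting pays off: if $u\in\vertex{T_i}$ then ancestor-closedness gives $v\in\vertex{T_i}$, so removing $u$ neither disconnects $T_i$ (its $T_i$-children reattach to $v\in\vertex{T_i}$) nor changes $\bigcup_{w\in\vertex{T_i}}\bagmap(w)$ (as $\bagmap(u)\subseteq\bagmap(v)$); hence $T_i$ still covers exactly $X_i$, and $\vertex{T_{i+1}}\subseteq\vertex{T_i}$ is plainly maintained. Without the rooting, a contraction could straddle the boundary of some $T_i$ and enlarge its covered variable set beyond $X_i$.

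For the size bound I use the standard tree-decomposition shrinking argument. After saturation, every non-root survivor $u$ introduces a variable $x_u\in\bagmap(u)\setminus\bagmap(p)$, where $p$ is the parent of $u$ in $\widehat T$ (otherwise $(u,p)$ would have been contracted). By the connectivity condition the bags containing $x_u$ form a subtree, which is confined to the subtree rooted at $u$ because $x_u\notin\bagmap(p)$; thus $u$ is the unique topmost bag containing $x_u$. Therefore $u\mapsto x_u$ is injective into $\vars(q)$, giving $|\vertex{\widehat T}|\le|\vars(q)|+1\le\sizeofq+1=\+O(\sizeofgamma)$. To guarantee the width stays exactly equal, rather than merely non-increasing, I would single out one input bag of maximum $|\atommap|$ and never contract it away; this exempts at most one node from the injective count, leaving the bound unchanged.

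Finally, all of this is realizable in time linear in the size of the input decomposition by a single root-to-leaf pass: propagating, for each node, the nearest surviving ancestor $a$ and testing $\bagmap(u)\subseteq\bagmap(a)$ decides contraction with no re-examination, and the tests are performed in total linear time by keeping the elements of the relevant bag in a bucket array so that each test costs $\+O(|\bagmap(u)|)$. Rebuilding $\widehat T,\widehat\bagmap,\widehat\atommap$ together with the trimmed witness subtrees is then immediate, which yields the claimed linear-time construction.
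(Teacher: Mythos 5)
Your proof is correct, but it takes a genuinely different route from the paper's. The paper never compares bags: it fixes an anchor set $Z$ of size $\+O(\sizeofgamma)$ --- for each atom of $q$ a bag covering it (via completeness), and for each $x \in X_1$ a bag of the witness subtree $T_1$ containing $x$ --- and then repeatedly deletes leaves outside $Z$ and contracts nodes outside $Z$ having a single child; its size bound follows by counting leaves and one-child nodes against $|Z|$. You instead root $T$ inside the innermost nonempty witness subtree, which makes every witness subtree ancestor-closed, and exhaustively contract a child into its parent whenever its bag is contained in the parent's bag; your size bound is the standard tightness argument that each surviving non-root bag introduces a private variable, giving $|\vars(\gamma)|+\+O(1)$ nodes. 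Your rooting observation is the real work, and it is exactly what makes preservation of the witness subtrees clean: the parent of any contracted node of $T_i$ lies again in $T_i$, so connectivity and exact coverage of $X_i$ are immediate. By contrast, the paper restricts the witness subtrees to the surviving nodes and asserts they remain witnesses; this hinges on the anchors, and since its map $g$ only anchors coverage inside $T_1$, your per-subtree bookkeeping is, if anything, more careful about the inner subtrees $T_2,\dotsc,T_n$. Two further small gains on your side: protecting one width-maximizing bag yields literally the same width (the paper's construction only guarantees width at most $k$), and the bound $|\vars(\gamma)|+2$ is slightly sharper than $\+O(|\atoms(q)|+|X_1|)$. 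The price you pay is the bag-inclusion tests, which require the bucket-array/stack bookkeeping to stay within linear time, whereas the paper's per-node tests (membership in $Z$, being a leaf, having a single child) are constant-time.
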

\begin{proof}
    Suppose the "width" of $(T,\bagmap,\atommap)$ is $k$, and let $\gamma = \pi_{X_n}^{\oplus_n} \dotsb \pi_{X_1}^{\oplus_1} {\Join^\otimes} q$.
    Let $T'_n, \dotsc, T'_1$ be the "witness subtrees" of $T$ for the "project-connex" property, and let $g : X_1 \to \vertex{T'_1}$ be such that $x \in \bagmap(g(x))$ for every $x \in X_1$.
    Let $f : q \to \vertex{T}$ be any function such that $\vars(\alpha) \subseteq \bagmap(f(\alpha))$ for every $\alpha \in q$ ---it exists due to the "completeness condition".
    
    Let $Z$ be the union of the images of $f$ and $g$, note that $|Z| \leq |\atoms(q)| + |X_1| = \+O(\sizeofgamma)$. 
    Now we repeatedly remove parts of $T$ to obtain a `small' decomposition.
    \begin{enumerate}[(1)]
        \item If there exists some leaf "bag" $v \in \vertex{T}$ which is not in $Z$, we can simply eliminate $v$ from $T$ and from every $T'_i$ such that $v \in \vertex{T'_i}$. It follows readily that the restriction of $(T,\bagmap, \atommap)$ is still a "tree decomposition" for $\gamma$ of "width" at most $k$, and that the resulting restrictions of $T'_1, \dotsc, T'_n$ are "witness subtrees".

        \item If there is a "bag" $v \in \vertex{T}$ which is not in $Z$ and that it has only one child $v'$ in $T$, then the edge $\set{v,v'}$ can be contracted. That is, we define $(\widehat T, \widehat\bagmap, \widehat \atommap)$ where  $\widehat T$ is the result of replacing $v,v'$ with one vertex $v^*$, 
        we let
        $\widehat \bagmap \eqdef \set{u \mapsto \bagmap(u) : u \in \vertex{T} \setminus \set{v,v'}} \cup \set{v^* \mapsto \bagmap(v')}$, and similarly for $\widehat \atommap$.
        The "witness subtrees" $\widehat T'_i$ are defined to contain $v^*$ "iff" $T'_i$ contained $v'$.
        Note that all properties of "tree decomposition" are inherited from $(T,\bagmap, \atommap)$, and each $T'_i$ is still connected and covers the variables $X_i$ ("ie", it is a "witness subtree").
    \end{enumerate}
    It follows that by repeatedly performing these two simplifications in any order until saturation, we arrive at some $(\widehat T,\widehat \bagmap,\widehat \atommap)$ which is a "project-connex" "tree decomposition" for $\gamma$ of "width" ${\leq}k$ such that (i) $\set{\widehat T'_i}_i$ are the "witness subtrees" (ii) $\widehat T$ has a linear number of leaves and (iii) there is a linear number of vertices of $\widehat T$ having only one child.
    Hence, $|\vertex{\widehat T}| \in \+O(\sizeofgamma)$.

    Further, this construction can be done in linear time. We proceed bottom-up in $T$, we first mark each leaf as ``processed and deleted'' or ``processed and not deleted'' depending on whether the leaf belongs to $Z$. We then iteratively treat nodes such that all its children have been processed: if it has at least two children processed and not deleted we mark the node as processed and not deleted, otherwise we check if we are in one of the two cases above, and mark the node appropriately. Once the tree is completely marked, it is easy to produce the resulting "tree decomposition" $(T,\bagmap, \atommap)$ and the "witness subtrees" in linear time.
\end{proof}

\begin{toappendix}
The following can be seen as a corollary of the proof of \Cref{lem:bound-decompositions-agg-queries}.
\begin{lemma}
    \Cref{lem:bound-decompositions-agg-queries} can be improved to obtain, in polynomial time, a linear-size decomposition for $\gamma= \pi^{\oplus_1}_{X_1} \dotsb \pi^{\oplus_n}_{X_n} {\Join^\otimes} q$ where the size\footnote{We assume that $X_1 \neq \emptyset$, otherwise the bound is $|\vertex{T_i}| \leq \Pi_{i=2}^n  (|X_i| - |X_{i-1}|)$.} of the "witness subtrees" $(T_i)_i$ are such that $|\vertex{T_i}| \leq |X_1| \cdot \Pi_{i=2}^n  (|X_i| - |X_{i-1}|)$.
\end{lemma}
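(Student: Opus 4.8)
The plan is to revisit the pruning procedure behind \Cref{lem:bound-decompositions-agg-queries} and carry it out level by level, from the innermost "witness subtree" outwards, while tracking how much each level grows. Write $\gamma = \pi^{\oplus_1}_{X_1}\dotsb\pi^{\oplus_n}_{X_n}{\Join^\otimes}q$, so that $X_1 \subsetneq \dotsb \subsetneq X_n$, and let $T_1 \subseteq \dotsb \subseteq T_n$ be the nested "witness subtrees" witnessing that the decomposition is "project-connex". The first step is to root the whole "tree decomposition" at an arbitrary "bag" of $T_1$. Since each $T_i$ is connected and contains this root, each $T_i$ is then \emph{ancestor-closed}: if a "bag" $v$ lies in $T_i$, so does the entire path from $v$ up to the root. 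This normal form is exactly what enables the level-wise counting below.

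The key combinatorial observation uses the "connectivity condition". For a "variable" $x$, the set of "bags" containing $x$ is connected in $T$, so its intersection with the connected subtree $T_i$ is again connected and thus has a unique top-most "bag" (the one closest to the root), which I say \emph{introduces} $x$ in $T_i$. Put $N_i \defeq X_i \setminus X_{i-1}$ (with $X_0 \defeq \emptyset$), so that $|N_i| = |X_i| - |X_{i-1}|$. Because $T_{i-1}$ is ancestor-closed and covers exactly $X_{i-1}$, the top-most occurrence in $T_i$ of any $x \in X_{i-1}$ already lies in $T_{i-1}$; hence every "bag" of $T_i$ not already in $T_{i-1}$ can only introduce "variables" of $N_i$. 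I then prune as in \Cref{lem:bound-decompositions-agg-queries}: delete any leaf that introduces no "variable", and contract into its child any degree-two "bag" that introduces no "variable" and carries no "atom" required by the "completeness condition" (keeping the child's "bag" and its label, so that the "width", the "completeness condition" and the "connectivity condition" are preserved). After saturation, every surviving "bag" of $T_i$ outside $T_{i-1}$ either introduces a "variable" of $N_i$ --- and each of the $|N_i|$ "variables" is introduced exactly once --- or is a branching "bag", of which a tree has fewer than its number of leaves. This bounds the "bags" added at level $i$ by a quantity controlled by $|N_i|$, and the induction with base case $|\vertex{T_1}| \leq |X_1|$ gives $|\vertex{T_i}| \leq |X_1| \cdot \prod_{j=2}^i (|X_j| - |X_{j-1}|)$; in fact the same count yields the stronger additive bound $|\vertex{T_i}| \in \+O(|X_i|)$, so the produced decomposition has $\+O(\sizeofgamma)$ "bags" overall.

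The procedure runs in polynomial time, since rooting, computing the introducing "bag" of each "variable", and the leaf-deletion/contraction sweep are all handled by one bottom-up traversal per level, exactly as in the linear-time argument of \Cref{lem:bound-decompositions-agg-queries}. The main obstacle is not the counting itself but the bookkeeping ensuring the pruning is \emph{simultaneously} valid for all levels: a "bag" must be protected from contraction whenever it is essential for \emph{some} inner "witness subtree" (a branching point, or a "bag" introducing a "variable" at a smaller level) or it carries an "atom" required by the "completeness condition"; one must then verify that contracting any unprotected "bag" keeps each $T_i$ a "witness subtree" for $X_i$ and preserves the nesting $\vertex{T_{i-1}} \subseteq \vertex{T_i}$. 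Once this protection rule is in place, each contraction visibly preserves the three "tree decomposition" conditions and the "project-connex" structure, and the claimed size bound follows from the per-level count.
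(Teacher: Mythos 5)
Your route is genuinely different from the paper's. The paper's proof does no explicit pruning: it invokes the known fact that any tree decomposition can be shrunk, in polynomial time, to one with at most as many bags as vertices (\cite{GottlobLS02}, see also the proof of Lemma~3.3 in \cite{berkholz2020constant}), applies it first to the witness subtree $T_1$ and then to each connected component of $T_i$ minus $\vertex{T_{i-1}}$, and multiplies the per-component bound $|X_i\setminus X_{i-1}|$ by the number of components. Your alternative --- root $T$ at a bag of $T_1$, so that every $T_i$ becomes ancestor-closed, and count introduction points --- is sound as far as it goes: the intersection argument giving each variable a unique topmost bag is correct, and so is the key claim that a bag of $T_i$ outside $T_{i-1}$ can only introduce variables of $N_i = X_i\setminus X_{i-1}$.

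The gap is that your final count does not match your own pruning rules. You classify every surviving bag of $T_i$ outside $T_{i-1}$ as either introducing a variable of $N_i$ or branching, but your rules create survivors of neither kind: (i) a degree-two bag that you protect because it carries an atom required by the completeness condition introduces nothing and does not branch, and counting such bags adds a term of order $|\atoms(q)|$, which is not dominated by the claimed bound (a function of the $|X_j|$ only); (ii) your rules test degrees and leaves in the whole tree $T$, while the count concerns $T_i$ --- a bag of $T_i$ with children hanging outside $T_i$ is never contracted (its degree in $T$ exceeds two) yet may be non-branching inside $T_i$ and introduce nothing, and a leaf of $T_i$ that is internal in $T$ cannot be deleted at all. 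Both defects vanish once you exploit what rooting actually gives you: if $v$ introduces no variable then, by the connectivity condition, $\bagmap(v)\subseteq\bagmap(u)$ where $u$ is the parent of $v$; hence $v$ can be contracted into $u$ \emph{regardless of its degree}, with its atoms re-assigned to $u$ (still covered, since $\bagmap(u)$ is a superset), and all decomposition conditions, the witness subtrees and their nesting survive because $u$ lies in every $T_j$ containing $v$ (ancestor-closedness). After saturating this single rule, every non-root bag of $T$ introduces some variable; since bags of $T_i$ contain only variables of $X_i$, this yields $|\vertex{T_i}|\leq |X_i|+1$ with no protection or branching bookkeeping whatsoever. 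Note, finally, that this additive bound is not stronger than the multiplicative one in the statement but incomparable to it (when the sets $X_j$ grow by a single element per level, the multiplicative bound is the smaller one), so to recover the literal statement you would still need the paper's per-component accounting on top of your argument; for the purpose of a linear-size decomposition, however, the additive bound is what your argument naturally proves, and it suffices.
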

\begin{proof}
    It is known \cite{GottlobLS02} (see also \cite[proof of Lemma~3.3]{berkholz2020constant}) that for every "tree decomposition" of a "CQ" $q(X)$ one can find, in polynomial time, another "tree decomposition" of whose number of vertices is bounded by $|\vars(q)|$. Applying this construction to the  "witness subtree" $T_1$ for $X_1$ from the "project-connex" "tree decomposition" given by \Cref{lem:bound-decompositions-agg-queries}, we can bound $|\vertex{T_1}| \leq |X_1|$. In general, we can also bound the size of  each connected component of the forest resulting from deleting $\vertex{T_{i-1}}$ from the "witness subtree" $T_i$ for $X_i$. Since there are at most $|\vertex{T_{i-1}}|$ connected components, and since each one can be bounded by $|X_{i} \setminus X_{i-1}|$ by the same argument above, we obtain that $\vertex{T_i}$ is bounded by 
    \begin{align*}
        |\vertex{T_{i-1}}| \cdot |X_{i} \setminus X_{i-1}| =
        |X_1| \cdot (|X_2| - |X_1|) \dotsb (|X_{i}| - |X_{i-1}|).
    \end{align*}
\end{proof}

\begin{corollary}[of~\Cref{lem:bound-decompositions-agg-queries}]\AP\label{lem:bound-decompositions}
    Given a "free-connex" "tree decomposition" $(T,\bagmap,\atommap)$ for a "CQ" $q(X)$, one can produce in linear time\footnote{This improves on the previously known quadratic bound \cite[Lemma~3.3]{berkholz2020constant}.}  another "free-connex" "tree decomposition" $(\widehat T,\widehat \bagmap,\widehat \atommap)$ of the same "width" so that $|\vertex{\widehat T}| \in \+O(\sizeofq)$.
\end{corollary}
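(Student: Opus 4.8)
The plan is to obtain this statement as an immediate instance of \Cref{lem:bound-decompositions-agg-queries}, exploiting the observation (recorded when "project-connex" decompositions were defined) that a "tree decomposition" of $\hyperq$ is "free-connex" for the "CQ" $q(X)$ if and only if it is "project-connex" for the single-projection "aggregate query" $\gamma \defeq \pi^\oplus_X {\Join^\otimes} q$. Here the choice of the "monoids" $\oplus$ and $\otimes$ is immaterial, since neither the "underlying hypergraph" nor the "project-connex" condition depends on them; I simply fix arbitrary ones so that $\gamma$ is a well-formed "aggregate query" with $\hyperq[\gamma] = \hyperq$.

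First I would read the given "free-connex" "tree decomposition" $(T,\bagmap,\atommap)$ of $q(X)$, together with its "witness subtree" for $X$, as a "project-connex" "tree decomposition" of $\gamma$ (its single "witness subtree" being precisely the one witnessing free-connexity). Applying \Cref{lem:bound-decompositions-agg-queries} to $\gamma$ then yields, in linear time, a "project-connex" "tree decomposition" $(\widehat T,\widehat\bagmap,\widehat\atommap)$ of $\gamma$ of the same "width" with $|\vertex{\widehat T}| \in \+O(\sizeofgamma)$. Reading this output back through the same equivalence, $(\widehat T,\widehat\bagmap,\widehat\atommap)$ is a "free-connex" "tree decomposition" of $q(X)$ of the same "width", as required.

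It then only remains to convert the bound $\+O(\sizeofgamma)$ into $\+O(\sizeofq)$. Since $\gamma$ has a single projection onto $X = X_1 \subseteq \vars(q)$, by definition $\sizeofgamma = \sizeofq + |X|$. As every "variable" of $q$ occurs in at least one "atom" and hence contributes to $\sizeofq$, we have $|X| \leq |\vars(q)| \leq \sizeofq$, so $\sizeofgamma \leq 2\sizeofq \in \+O(\sizeofq)$, which gives the claimed size.

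Main obstacle: I do not anticipate a genuine difficulty, as the statement is a specialization of the preceding lemma. The only points needing care are (i) confirming that the "free-connex"/"project-connex" equivalence transports decompositions in \emph{both} directions while preserving "width", so that the produced decomposition is itself "free-connex" of the same "width", and (ii) the elementary estimate $\sizeofgamma \in \+O(\sizeofq)$ that turns the generic bound into the claimed one. The linear running time is inherited verbatim from \Cref{lem:bound-decompositions-agg-queries}.
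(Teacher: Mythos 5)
Your proposal is correct and matches the paper's intended derivation exactly: the corollary is stated without a separate proof precisely because, as you argue, a free-connex decomposition of $q(X)$ (with its witness subtree) \emph{is} a project-connex decomposition of $\gamma = \pi^{\oplus}_X {\Join^{\otimes}} q$, so \Cref{lem:bound-decompositions-agg-queries} applies verbatim, and your estimate $\sizeofgamma = \sizeofq + |X| \leq 2\sizeofq \in \+O(\sizeofq)$ correctly converts the bound. The only (negligible) caveat, which the paper's own stated equivalence shares, is the corner case $X = \vars(q)$: there the projection $\pi^{\oplus}_X$ is not syntactically permitted (projections require a proper subset), and one should instead take $\gamma = {\Join^{\otimes}} q$ with no projections, for which the claim is immediate.
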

\end{toappendix}

\paragraph*{Computing Tree Decompositions}
\label{para:comput-tree-dec}

We now focus on computing "project-connex" "tree decompositions". We will show that this can be achieved by computing (classical) "tree decompositions" on a different query.
A \AP""frontier-path"" 
\footnote{Generalizes a notion from \cite{DBLP:conf/icdt/ChenM15}; more details in \Cref{sec:chenmengel} and \Cref{ft:frontier-path}.} 
in an "aggregate query" $\gamma = \pi_{X_1}^{\oplus_1} \dotsb \pi_{X_n}^{\oplus_n} {\Join^\otimes} q$  between two distinct "variables" $x_1$ and $x_m$ is an alternating sequence of "variables" and "atoms" of $q$
$$
    x_1 \xrightarrow{\alpha_1} x_2 \xrightarrow{\alpha_2} \dotsb \xrightarrow{\alpha_{m-2}} x_{m-1} \xrightarrow{\alpha_{m-1}} x_m
$$
such that (i) $\alpha_i \in \atoms(q)$ and $x_i \in \vars(\alpha_i)$ for every $i \in [1,m-1]$,  (ii) $x_i \in \vars(\alpha_{i-1})$ for every $i \in [2,m]$, and (iii) for some $i \in [1,n]$ we have 
$X_i \cap \set{x_1, \dotsc, x_m} = \set{x_1, x_m}$. Whenever we want to make explicit the witnessing set $X_i$ we shall write 
`\reintro{$X_i$-frontier-path}'.

\AP
We define the ""augmented query"" of $\gamma$, noted $\intro*\aug{\gamma}$, as the "Boolean" "CQ" $q()$, 
where $q$ consists of all the "atoms" of $\gamma$ plus an "atom" $\PRel(x,y)$ for every pair of variables $x,y \in \vars(\gamma)$ connected via a "frontier-path", where $\intro*\PRel$ is any fixed binary "relation name". Observe that $\aug{\gamma}$ can be computed from $\gamma$ in polynomial time, and it may be of quadratic size with respect to the size of $\gamma$.
\begin{toappendix}
    \begin{lemma}
        The "augmented query" $\aug \gamma$ can be computed from $\gamma$ in polynomial time.
    \end{lemma}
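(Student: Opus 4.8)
The plan is to reduce the construction of $\aug{\gamma}$ to deciding, for each unordered pair of distinct variables $x,y \in \vars(\gamma)$, whether they are joined by a "frontier-path". Once this predicate is known, $\aug{\gamma}$ is obtained by taking $\atoms(q)$ and adding one atom $\PRel(x,y)$ for each such pair; there are at most $\binom{|\vars(\gamma)|}{2}$ of them, so writing them out costs time quadratic in $\sizeofgamma$. Thus everything hinges on a polynomial-time test for frontier-path connectivity, and the rest is bookkeeping.

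To run that test I would first fix the witnessing projection set. Writing $\gamma = \pi_{X_1}^{\oplus_1} \dotsb \pi_{X_n}^{\oplus_n} {\Join^\otimes} q$, every "frontier-path" comes equipped with some index $i \in [n]$ satisfying condition~(iii), and since $n \le \sizeofgamma$ I can iterate over all $i$ and, for each, compute the pairs joined by an $X_i$-"frontier-path". For a fixed $i$, let $G$ be the primal graph of $q$, that is, the graph on $\vars(\gamma)$ whose edges link variables that co-occur in some atom of $q$. Conditions~(i)--(ii) say exactly that consecutive variables of a frontier-path are adjacent in $G$, so frontier-paths are precisely walks in $G$, while condition~(iii) says that the only vertices of such a walk lying in $X_i$ are its two endpoints. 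The key claim, which I would establish by a short two-way argument, is that $x,y \in X_i$ with $x \ne y$ are joined by an $X_i$-"frontier-path" if and only if either (a) $x$ and $y$ co-occur in an atom of $q$, or (b) $x$ and $y$ are both adjacent in $G$ to a common connected component $C$ of the induced subgraph $\restrictG{\vars(\gamma) \setminus X_i}$. For the forward direction, a length-$1$ path yields~(a), whereas a longer one has all its interior vertices outside $X_i$ and pairwise adjacent, hence lying in a single component $C$ of $\restrictG{\vars(\gamma) \setminus X_i}$ to which both endpoints are adjacent, giving~(b). Conversely, (a) is a length-$1$ frontier-path, and in case~(b) a simple path inside $C$ joining a neighbour of $x$ to a neighbour of $y$ extends to a walk $x, \dotsc, y$ whose interior avoids $X_i$, which is the required $X_i$-"frontier-path".

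The resulting algorithm builds $G$ once; then for each $i \in [n]$ it computes the connected components of $\restrictG{\vars(\gamma) \setminus X_i}$ by a single graph traversal, records for each component the set of adjacent $X_i$-variables, marks every pair inside such a set as connected, and separately marks every pair of $X_i$-variables that share an atom. Each phase is polynomial in $\sizeofgamma$ and there are $n \le \sizeofgamma$ iterations, so the total cost is polynomial, as claimed. The main obstacle is the correctness argument rather than the running time: one must faithfully translate the \emph{global} constraint~(iii) --- that the endpoints are the only path-variables in $X_i$ --- into the \emph{local} connected-component condition~(b), proving both implications of the equivalence above; in particular one must not forget the degenerate case~(a), since a direct edge between two $X_i$-variables is invisible to the component analysis of $\restrictG{\vars(\gamma) \setminus X_i}$ and would otherwise be missed.
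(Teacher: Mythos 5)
Your proof is correct and takes essentially the same approach as the paper's: the paper likewise reduces existence of an $X_i$-frontier-path between a pair of variables to a graph-connectivity test (it cites st-connectivity), just stated in one sentence. Your per-projection connected-component computation, with the explicit equivalence lemma and the separate treatment of the direct-edge case, is a fleshed-out (and slightly batched) rendering of that same idea.
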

    \begin{proof}
        For every $x,y \in \vars(\gamma)$ and projection $\pi_X^\oplus$ of $\gamma$, we can test whether there is an "$X$-frontier-path" between $x$ and $y$ in linear time (it can be seen as an instance of the ``st-connectivity'' problem).
    \end{proof}
\end{toappendix}
The interest of $\aug\gamma$ is that it allows the computation of the "project-connex" "tree decomposition" of $\gamma$ via the classical "tree decomposition" of $\aug\gamma$, as shown next.\footnote{\Cref{thm:width-augmented-char} can be seen as an analog to the claim of \cite[Theorem 34]{JoglekarPR16} for a related measure. Unfortunately, the proof has a flaw and in fact the theorem is false, see \Cref{sec:problemAJAR} for details.}
\begin{theoremrep}\label{thm:width-augmented-char}
    For every "aggregate query" $\gamma$ we have $\pghw(\gamma) = \ghw(\aug\gamma)$. 
    Further, decompositions can be obtained from one another in linear time.
\end{theoremrep}
\begin{proofsketch}
    On the one hand, a "project-connex" "tree decomposition" of "width" $k$ for an "aggregate query" $\gamma$ is in fact also a "tree decomposition" for $\aug\gamma$ since it can be shown that every pair of variables connected via a "frontier-path" must be contained in a "bag". 

    On the other hand, given a "tree decomposition" $(T,\bagmap,\atommap)$ of $\aug\gamma$, one can build a "project-connex" "tree decomposition"  of $\gamma$ preserving its "width". We proceed by induction on the number of projections of $\gamma$. The base case with no projections is trivial. For the inductive case, suppose we could already build a "project-connex" "tree decomposition" $(\widehat T, \widehat\bagmap, \widehat\atommap)$ of $\pi_{X_2}^{\oplus_2} \dotsb \pi_{X_n}^{\oplus_n} {\Join^\otimes} q$ and we want to add a $\pi^{\oplus_1}_{X_1}$-projection.

    Let $C_1, \dotsc, C_k \subseteq (\vars(q) \setminus X_1)$ be the partition of $\vars(q) \setminus X_1$ induced by the  
    connected components of $\restrictG[\hyperq]{\vars(q) \setminus X_1}$.
    Let $C^+_i$ be the set of variables of $X_1$ incident to $C_i$, and note that any pair of distinct variables $x,y \in C^+_i$ is connected via a "frontier-path". This in turn implies that there are "bags" $v_1, \dotsc, v_k \in \vertex{T}$ such that $C^+_i \subseteq (\bagmap(v_i)\cap X_1)$ for every $i$.

    We can now create a "project-connex" "tree decomposition" by: (1) `""relativizing""' $(T,\bagmap,\atommap)$ onto $X_1$ ("ie", replacing $\bagmap(v)$ with $\bagmap(v) \cap X_1$ for every $v$), and (2) for each $i \in [k]$ attaching, as a child of $v_i$, a fresh copy of $(\widehat T ,\widehat \bagmap, \widehat \atommap)$ "relativized" onto $C_i \cup C_i^+$.
\end{proofsketch}
\begin{proof}
    Let $\gamma=\pi_{X_1}^{\oplus_1} \dotsb \pi_{X_n}^{\oplus_n} {\Join^\otimes} q$.

    \proofcase{$\pghw(\gamma) \geq \ghw(\aug\gamma)$}
    We show how to build a "tree decomposition" of $\aug\gamma$ from a "project-connex" "tree decomposition" of $\gamma$ preserving its "width".
    Let $(T,\bagmap,\atommap)$ be a "width" $k$ "project-connex" "tree decomposition" for $\gamma$ and let $T_i$ be the "witness subtree for" each $X_i$. We claim that $(T,\bagmap,\atommap)$ is in fact a "tree decomposition" for $\aug\gamma$. It suffices to show that for every pair $x,y$ of variables connected via a "frontier-path" there is a "bag" $v \in \vertex{T}$ so that $\set{x,y} \subseteq \bagmap(v)$. 
    Suppose $x,y \in X_i$ and the witnessing "$X_i$-frontier-path" is
    $$
        x_1 \xrightarrow{\alpha_1} x_2 \xrightarrow{\alpha_2} \dotsb \xrightarrow{\alpha_{m-2}} x_{m-1} \xrightarrow{\alpha_{m-1}} x_m
    $$    
    If $m=2$, then $x,y$ appear in the same "atom" and hence by the "completeness condition" on $(T,\bagmap,\atommap)$ there must be a "bag" containing them.
    
    Suppose then that $m>2$.
    The "frontier-path" induces a ``path'' $(v_1,x_1), \dotsc, (v_m,x_m)$ in $T$, where (i) $x_i \in \bagmap(v_i)$ for every $i$, and (ii) for every $i \in [1,m-1]$, either (a) $v_i = v_{i+1}$ or (b) $x_i = x_{i+1}$ and $\set{v_i,v_{i+1}} \in \edges{T}$. 
    Let $i_x$ be the maximum index such that $x_{i_x} = x$ and $v_{i_x} \in \vertex{T_i}$ and $i_y$ be the minimum index such that $x_{i_y} = y$ and $v_{i_x} \in \vertex{T_i}$. Note that $i_x < i_y$. By definition of "frontier-path", $v_{i_x}, \dotsc, v_{i_y}$ is a path in $T$ whose only vertices in $T_i$ are the first and last (and since $T_i$ is a connected subtree of $T$), we must have that $v_{i_x} = v_{i_y}$ ---otherwise $T$ would not be a tree.
    Hence, $\set{x,y} \subseteq \bagmap(v_{i_x})$.

    \bigskip

    \proofcase{$\pghw(\gamma) \leq \ghw(\aug\gamma)$}
    We show how to build a "project-connex" "tree decomposition" of $\gamma$ from a "tree decomposition" $(T,\bagmap,\atommap)$ of $\aug\gamma$ preserving its "width". 
    Further, we show that for every set $C \subseteq \vars(\gamma)$ which forms a ""$\PRel$-clique"" in $\aug\gamma$ ("ie", $\PRel(x,y)\atoms(\aug\gamma)$ for every pair of distinct $x,y \in C$), the produced decomposition has a "bag" containing $C$ ---we shall refer to this as the \AP""$\PRel$-clique property"".
    We proceed by induction on the number $n$ of projections. 
    
    The case $n=0$ where there are no projections is trivial, since in this case a "project-connex" "tree decomposition" is just a "tree decomposition" and the "augmented query" does not add any "atom". Observe that the "$\PRel$-clique property" is met.
    
    For the inductive case, suppose we could already build a "project-connex" "tree decomposition" $(\widehat T, \widehat\bagmap, \widehat\atommap)$ of $\pi_{X_2}^{\oplus_2} \dotsb \pi_{X_n}^{\oplus_n} {\Join^\otimes} q$, with $\set{\widehat W^{(j)}}_{j \in [2,n]}$ as "witness subtrees". Assume also that the "$\PRel$-clique property" holds.

    Let $\bagmap'$ be the \AP""relativization"" of $\bagmap$ onto $X_1$ ("ie", $\bagmap'(v) \eqdef \bagmap(v) \cap X_1$ for all $v \in \vertex{T}$), and observe that $(T,\bagmap',\atommap)$ verifies the "connectivity@connectivity condition" and "covering@covering condition" conditions of "tree decompositions".

    Let $C_1, \dotsc, C_k \subseteq (\vars(q) \setminus X_1)$ be the partition of $\vars(q) \setminus X_1$ induced by the  connected components of $\restrictG[\hyperq]{\vars(q) \setminus X_1}$.
    For each $i$, let $C^+_i$ be the set of variables of $X_1$ incident to $C_i$, that is, $C^+_i \eqdef \set{x \in X_1 : \set{x,x'} \subseteq e \text{ for some } x' \in C_i  \text{ and } e \in \edges{\hyperq}}$.
    Observe that for each $i$, every pair of distinct variables $x,y \in C^+_i$ is connected via a "frontier-path", and thus $\PRel(x,y) \in \aug\gamma$. In other words, $\aug\gamma$ contains a "$\PRel$-clique" on each $C^+_i$. This means that there must be "bags" $v_1, \dotsc, v_k \in \vertex{T}$ such that $C^+_i \subseteq \bagmap'(v_i)$ for every $i$ due to \Cref{lem:cliques-in-bags}.

    For $i \in [1,k]$, let $(\widehat T_i,\widehat \bagmap_i,\widehat \atommap_i)$ be pairwise vertex-disjoint copies of $(\widehat T ,\widehat \bagmap, \widehat \atommap)$ (we assume also vertex-disjointness with $T$), and let $\widehat \bagmap'_i$ be the "relativization" of $\widehat \bagmap_i$ onto $C_i \cup C_i^+$ ("ie", $\widehat \bagmap'_i(v) \eqdef \widehat \bagmap_i(v) \cap (C_i \cup C_i^+)$ for all $v \in \vertex{\widehat T_i}$). 
    By the "$\PRel$-clique property", for every $i$ there is a "bag" $\widehat v_i \in \vertex{\widehat T_i}$ such that $C^+_i \subseteq \widehat\bagmap'_i(\widehat v_i)$.

    We now produce a ``collage of trees'' $\widetilde T$ as the result of taking the disjoint union of $T,\widehat T_1, \dotsc, \widehat T_k$, and adding an edge between $v_i$ and $\widehat v_i$ for every $i \in [1,k]$. We finally produce $(\widetilde T, \widetilde\bagmap, \widetilde\atommap )$ 
    for $\widetilde \bagmap = \bagmap' \cup \bigcup_{i=1}^k \widehat\bagmap'_i$ and $\widetilde \atommap = \atommap \cup \bigcup_{i=1}^k \widehat \atommap_i$. 
    \begin{claim}
        $(\widetilde T, \widetilde\bagmap, \widetilde\atommap )$ is a "project-connex" "tree decomposition" of $\gamma$. 
    \end{claim}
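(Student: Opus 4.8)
The plan is to verify, one after another, the three defining conditions of a "tree decomposition", then the two conditions making it "project-connex", and finally to re-establish the "$\PRel$-clique property" so that the outer induction closes. Throughout I rely on the inductive hypothesis that $(\widehat T,\widehat\bagmap,\widehat\atommap)$ is a "project-connex" decomposition of $\gamma'=\pi_{X_2}^{\oplus_2}\dotsb\pi_{X_n}^{\oplus_n}{\Join^\otimes}q$ of "width" ${\leq}k$ with nested "witness subtrees" $\widehat W^{(2)}\subseteq\dotsb\subseteq\widehat W^{(n)}$, together with a slightly strengthened invariant that I would carry through the induction: every "$\PRel$-clique" of $\aug\gamma$ contained in the smallest projection set $X_2$ lies inside a single "bag" of the smallest "witness subtree" $\widehat W^{(2)}$.

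\emph{Tree-decomposition conditions.} For the "completeness condition" I would split the atoms of $q$ into those whose variables lie entirely in $X_1$ --- already covered by the "relativized" top part $(T,\bagmap',\atommap)$, since $T$ covers them in $\aug\gamma$ --- and those touching some component $C_i$; for the latter, all non-$X_1$ variables lie in the single component $C_i$ and all $X_1$ variables lie in $C_i^+$, so the atom sits in $C_i\cup C_i^+$ and is covered inside the copy $\widehat T_i$. The "covering condition" is inherited by the copies (a subset of a covered set stays covered after "relativization") and for the top part is kept by $\atommap$; the only care needed is that $\widetilde\atommap(v)$ consist of edges of $\hyperq$, for which I would argue that $T$'s atom map may be taken over $q$-atoms only. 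The crux is the "connectivity condition": fixing a variable $x$, if $x\in X_1$ its "bags" split into the connected family inside $T$ and, for each $i$ with $x\in C_i^+$, the connected family inside $\widehat T_i$, and these are glued along $\set{v_i,\widehat v_i}$ whose \emph{both} endpoints contain $x$ (as $x\in C_i^+\subseteq\bagmap'(v_i)$ and $x\in C_i^+\subseteq\widehat\bagmap'_i(\widehat v_i)$); if instead $x\in C_i$ for the unique such $i$, then $x$ occurs only inside $\widehat T_i$, where connectivity is inherited.

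\emph{Project-connex.} I would take $\widetilde W^{(1)}\defeq T$, which is connected and covers exactly $X_1$, and for $i\geq 2$ set $\widetilde W^{(i)}\defeq T\cup\bigcup_{j}\widehat W^{(i)}_j$. Using the strengthened invariant I would choose each gluing "bag" $\widehat v_j$ \emph{inside} $\widehat W^{(2)}_j$ (legitimate, since $C_j^+\subseteq X_1\subseteq X_2$ is a "$\PRel$-clique"), so that $\widehat v_j\in\widehat W^{(2)}_j\subseteq\widehat W^{(i)}_j$ and the copy's contribution stays attached to the connected top part at $v_j\in T$. A short computation then gives $\bigcup_{v\in\vertex{\widetilde W^{(i)}}}\widetilde\bagmap(v)=X_1\cup\bigcup_j\big(X_i\cap(C_j\cup C_j^+)\big)=X_i$, and the nesting $\widetilde W^{(1)}\subseteq\widetilde W^{(2)}\subseteq\dotsb$ follows from $\widehat W^{(i)}_j\subseteq\widehat W^{(i+1)}_j$.

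\emph{Re-establishing the $\PRel$-clique property, and the main obstacle.} For an arbitrary "$\PRel$-clique" $C$ of $\aug\gamma$ I would show that either $C\subseteq X_1$ --- whence, as a clique sits in a "bag" of $T$, it sits in a "relativized" top "bag", which belongs to $\widetilde W^{(1)}=T$ --- or every non-$X_1$ vertex of $C$ lies in one component $C_j$ and every $X_1$-vertex of $C$ lies in $C_j^+$ (because a "frontier-path" witnessing $\PRel$ cannot cross $X_1$ internally), so $C\subseteq C_j\cup C_j^+$ lies in a "relativized" "bag" of $\widehat T_j$; this simultaneously re-establishes the strengthened invariant for the new smallest set $X_1$. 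I expect the main obstacle to be exactly this strengthened invariant: one must guarantee that the \emph{fixed} attachment "bag" $\widehat v_j$ can be taken \emph{within} $\widehat W^{(2)}_j$, since otherwise connecting a copy's $X_i$-witness back to the top part would drag in variables outside $X_i$ and destroy the "witness subtree" property. A secondary technical point, mentioned above, is justifying that $T$'s atom map may be assumed to use only edges of $\hyperq$, so that the produced decomposition is genuinely one of $\gamma$ and its "width" is preserved.
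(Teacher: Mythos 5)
Your proposal follows the paper's own route: the same case-split for the completeness condition, the same gluing argument for connectivity along the edges $\set{v_i,\widehat v_i}$, and the same witness subtrees $\widetilde W^{(1)}=T$ and $\widetilde W^{(j)}$ induced by $\vertex{T}\cup\bigcup_i\vertex{\widehat W^{(j)}_i}$. Structurally the two proofs coincide.

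The substantive difference is that your strengthened invariant repairs a genuine gap in the paper's argument, exactly where you predicted. The paper chooses $\widehat v_i$ merely as \emph{some} bag of $\widehat T_i$ whose relativized bag contains $C_i^+$, and then asserts that each $\widetilde W^{(j)}$ is a connected subtree; since $\set{v_i,\widehat v_i}$ is the only edge joining the copy $\widehat T_i$ to $T$, this does require $\widehat v_i\in\vertex{\widehat W^{(2)}_i}$, which the paper's choice does not guarantee. Moreover, the clique property the paper carries through the induction is defective on two counts: (a) $C_i^+$ is a $\PRel$-clique of $\aug\gamma$ but in general \emph{not} of the augmentation of the inner query $\pi^{\oplus_2}_{X_2}\dotsb$ (the witnessing frontier-paths may pass through $X_2\setminus X_1$), so the inductively carried property cannot even be invoked for it; and (b) the paper re-establishes the property via the assertion that every maximal $\PRel$-clique of $\aug\gamma$ equals some $C_i^+$, which is false: for $q=\set{R(x,a),S(a,y),U(y,b)}$ with $X_1=\set{x,y}\subsetneq X_2=\set{x,y,a}$, the set $\set{x,y,a}$ is a maximal $\PRel$-clique while the $C_i^+$ are $\set{x,y}$ and $\set{y}$. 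Carrying the property relative to the full $\aug\gamma$, localized to the smallest witness subtree, as you do, fixes both points, and your two-case re-establishment is correct. (Once $C_i^+$ is known to lie in some bag $b$ of $\widehat T_i$, its presence in a bag of $\widehat W^{(2)}_i$ also follows from a standard argument: each element of $C_i^+\subseteq X_2$ occurs in the connected subtree $\widehat W^{(2)}_i$, so by the connectivity condition all of $C_i^+$ lies in the bag of $\widehat W^{(2)}_i$ closest to $b$.)

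By contrast, your secondary technical point cannot be discharged the way you hope, and it is a real problem for the surrounding theorem rather than for this claim alone. One cannot assume that the given decomposition of $\aug\gamma$ covers its bags using edges of $\hyperq$ only: for $q=\set{R_1(x_1,z),R_2(x_2,z),R_3(x_3,z)}$ and $\gamma=\pi^{\oplus}_{\set{x_1,x_2,x_3}}{\Join^{\otimes}}q$, the query $\aug\gamma$ adds a $\PRel$-triangle on $\set{x_1,x_2,x_3}$, and the one-bag decomposition covered by the $\PRel$-edge $\set{x_1,x_2}$ together with $\set{x_3,z}$ has width $2$; yet in any project-connex decomposition of $\gamma$ the bags containing $z$ form a connected subtree disjoint from the witness subtree, each $x_i$ occurs in both, hence all three $x_i$ occur in the witness bag closest to the $z$-bags, a bag that only the star edges can cover, one $x_i$ each --- so $\pghw(\gamma)=3>2=\ghw(\aug\gamma)$. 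Consequently, the constructed object is a genuine decomposition of $\gamma$ only after re-covering its bags with $q$-atoms (always possible, and it preserves every other condition, so the present claim survives), but the width-preservation claim that follows it, and with it the equality of \Cref{thm:width-augmented-char} for $\ghw$ as literally defined, fails; this is the same covering phenomenon the paper itself points out for \cite[Theorem 34]{JoglekarPR16}. (For tree-width, where the $\atommap$-component is irrelevant, the construction and your proof go through unchanged.)
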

    \begin{nestedproof}
        \proofcase{Tree property.} Indeed, $\widetilde T$ is a tree because each $v_i$ is attached to a different tree $\widehat T_i$, hence no cycles can appear.

        \proofcase{"Completeness condition".} Let $e \in \edges{\hyperq}$. If $e \subseteq X_1$, then there must be some vertex $v$ of $T$ such that $\bagmap'(v) \supseteq e$. Otherwise, $e \subseteq  (C_i \cup C_i^+)$ for some $i$ and by the "completeness condition" of $(\widehat T, \widehat \bagmap, \widehat \atommap)$ there must be some $v \in \vertex{\widehat T}$ such that $\widehat\bagmap(v) \supseteq e$. Hence, the corresponding copy of $v$ in $\widehat T_i$ must be such that $\widetilde \bagmap(v) = \widehat \bagmap'_i(v) = \widehat\bagmap(v) \cap (C_i \cup C_i^+) \supseteq e$.

        \proofcase{"Connectivity condition".} Take any variable $x \in \vars(\gamma)$. If $x \in X_1$, then $x$ is "connected@@tdec" in $(T,\bagmap',\atommap)$. 
        If $x$ appears in some $\widehat T_i$, then must $x$ belong to $C_i^+$ by definition of $\widehat T_i$. In this case, the "connectivity@@tdec" of $x$ is inherited from "connectivity@@tdec" of $x$ on $(\widehat T_i,\widehat \bagmap'_i, \widehat\atommap_i)$ and the fact that the edge $\set{\widehat v_i, v_i}$ between $\widehat T_i$ and $T$ links two "bags" "containing@@bag" $x$.
        If, on the other hand, $x \not\in X_1$, then $x$ belongs to some $C_i$, and thus "connectivity@@tdec" is inherited from $(\widehat T_i,\widehat \bagmap'_i, \widehat\atommap_i)$.

        \proofcase{"Covering condition".} Each vertex $v \in \vertex{\widetilde T}$ is such that $\widetilde \bagmap(v)$ and $\widetilde \atommap(v)$ come from a "tree decomposition" verifying the "covering condition" with the exception that the bag has been perhaps shrunk.%
        Hence, the "covering condition" is inherited from the corresponding "tree decomposition".

        \proofcase{"Project-connectedness".}
        Let $\widehat W^{(j)}_i$ be the copy of $\widehat W^{(j)}$ in $\widehat T_i$.
        We define the "witness subtree" $\widetilde W^{(1)}$ for $X_1$ to be simply $T$.
        For every $j \in [2,n]$, we define the "witness subtree" $\widetilde W^{(j)}$ as the subtree of $\widetilde T$ induced by the union $\vertex{T} \cup \bigcup_{i=1}^k \vertex{\widehat W^{(j)}_i}$. It is easy to see that 
        \begin{enumerate}
            \item $\bigcup_{v\in \vertex{\widetilde W^{(j)}}}\widetilde\bagmap(v) = X_j$ for every $j \in [1,n]$, 
            \item the nesting $\vertex{\widetilde W^{(1)}} \subseteq \dotsb \subseteq \vertex{\widetilde W^{(n)}}$ is verified, and 
            \item each $\widetilde W^{(j)}$ is a connected subtree of $\widetilde T$.
        \end{enumerate}
    \end{nestedproof}
    \begin{claim}\label{cl:characterization-augmented-preserves-width}
        The "width" of $(\widetilde T, \widetilde\bagmap, \widetilde\atommap )$ equals the "width" of $(T,\bagmap,\atommap)$.
    \end{claim}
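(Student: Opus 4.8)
The plan is to prove the equality by comparing, bag by bag, the atom-map cardinalities of $(\widetilde T, \widetilde\bagmap, \widetilde\atommap)$ against those of $(T,\bagmap,\atommap)$, since the width of a decomposition is by definition the maximum of $|\atommap(v)|$ over all bags $v$. First I would partition the bags of $\widetilde T$ into two families: the bags inherited from $T$, and the bags of the attached copies $\widehat T_1, \dotsc, \widehat T_k$.

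For the first family the key observation is that the relativization altered only $\widetilde\bagmap$ (shrinking each bag to its intersection with $X_1$) while leaving the atom-map untouched, so $\widetilde\bagmap(v)\subseteq\bagmap(v)$ but $\widetilde\atommap(v) = \atommap(v)$ for every $v \in \vertex{T}$. Consequently these bags contribute exactly $\max_{v\in\vertex{T}}|\atommap(v)|$, i.e.\ the width of $(T,\bagmap,\atommap)$; in particular a bag of $T$ realizing that width still realizes it in $\widetilde T$, which already gives the lower bound $\mathrm{width}(\widetilde T) \geq \mathrm{width}(T,\bagmap,\atommap)$. For the second family, each $\widehat \atommap_i$ is a vertex-disjoint copy of $\widehat\atommap$, so the bags of $\widehat T_i$ contribute exactly the width of $(\widehat T, \widehat\bagmap, \widehat\atommap)$. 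Hence $\mathrm{width}(\widetilde T) = \max\big(\mathrm{width}(T,\bagmap,\atommap),\, \mathrm{width}(\widehat T,\widehat\bagmap,\widehat\atommap)\big)$, and it remains only to bound the second term.

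To close the argument I would invoke the induction driving the whole construction. Because every frontier-path of $\gamma' = \pi_{X_2}^{\oplus_2}\dotsb\pi_{X_n}^{\oplus_n}{\Join^\otimes}q$ is witnessed by some $X_i$ with $i \geq 2$, which is still a projection set of $\gamma$, every frontier-path of $\gamma'$ is also a frontier-path of $\gamma$; thus the atoms of $\aug{\gamma'}$ form a subset of those of $\aug\gamma$, and $(T,\bagmap,\atommap)$ is itself a tree decomposition of $\aug{\gamma'}$ of width at most $\mathrm{width}(T,\bagmap,\atommap)$. Since $(\widehat T, \widehat\bagmap, \widehat\atommap)$ is precisely the decomposition produced by applying the construction to this decomposition of $\aug{\gamma'}$, the inductive hypothesis (this very claim, one projection shorter) yields $\mathrm{width}(\widehat T,\widehat\bagmap,\widehat\atommap) \leq \mathrm{width}(T,\bagmap,\atommap)$, and the maximum above collapses to $\mathrm{width}(T,\bagmap,\atommap)$, giving the desired equality.

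I expect the main obstacle to lie in the passage from $\aug\gamma$ to $\aug{\gamma'}$ and, ultimately, to $\gamma$ itself: the atom-map of $(T,\bagmap,\atommap)$ may mention $\PRel$-atoms that witness frontier-paths of $\gamma$ but not of $\gamma'$, whereas $\widetilde\atommap$ must range only over genuine atoms of $\gamma$ (the companion claim keeps $\atommap$ verbatim in the covering condition). The delicate step is therefore to argue that one may assume without loss of generality that the atom-map of the $\aug\gamma$ decomposition never uses a $\PRel$-atom to cover a bag --- equivalently, that discarding all $\PRel$-atoms from each $\atommap(v)$ preserves the covering condition without increasing width. The intended justification is that whenever a $\PRel$-atom $\PRel(x,y)$ could serve as a cover, the frontier-path between $x$ and $y$, together with the $\PRel$-clique property and the fact that a clique must sit inside a common bag (\Cref{lem:cliques-in-bags}), forces the endpoints to be covered by genuine atoms already present in the bag; making this precise, so that no spurious width is introduced, is the technical heart of the claim.
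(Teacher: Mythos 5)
Your first two paragraphs retrace the paper's own (two-line) proof, with one genuine improvement: for the lower bound the paper asserts $\bagmap(v)=\widetilde\bagmap(v)$ for $v\in\vertex{T}$, which is false after relativization (there $\widetilde\bagmap(v)=\bagmap(v)\cap X_1$); the observation that actually matters, and that you make, is that the atom-labels are kept verbatim, $\widetilde\atommap(v)=\atommap(v)$, and width counts only those. Your third step is where you depart from the paper, and rightly so: the upper bound needs $\mathrm{width}(\widehat T,\widehat\bagmap,\widehat\atommap)\leq \mathrm{width}(T,\bagmap,\atommap)$, which can only come from applying the inductive hypothesis to some width-preserving decomposition of $\aug{\gamma'}$, the natural candidate being $(T,\bagmap,\atommap)$ itself. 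The paper never identifies which decomposition of $\aug{\gamma'}$ feeds the induction, and never addresses whether $(T,\bagmap,\atommap)$ qualifies; you correctly isolate the obstruction, namely that $\atommap$ may use $\PRel$-atoms of $\aug\gamma$ that are atoms neither of $\aug{\gamma'}$ nor of $\gamma$.

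The problem is that this obstacle is not a discharg­eable technicality: your proposed patch (assume without loss of generality that no $\atommap(v)$ uses a $\PRel$-atom) provably cannot preserve width. Consider $\gamma=\pi^{\oplus}_{\set{x_1,x_2,x_3}}\,{\Join^{\otimes}}\set{R_1(x_1,z),R_2(x_2,z),R_3(x_3,z)}$. Every pair $x_i,x_j$ is connected by a frontier-path through $z$, so the hypergraph of $\aug\gamma$ is a clique on $\set{x_1,x_2,x_3,z}$, and the one-bag decomposition with $\atommap=\set{\set{x_1,x_2},\set{x_3,z}}$ (the first a $\PRel$-edge) has width $2$. On the other hand, the first half of the proof of \Cref{thm:width-augmented-char} shows that any project-connex decomposition of $\gamma$ is a tree decomposition of $\aug\gamma$, so by \Cref{lem:cliques-in-bags} it has a bag containing $\set{x_1,x_2,x_3}$; covering that bag by edges of $\hyperq$, each of which contains exactly one $x_i$, requires three atoms. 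Hence $\pghw(\gamma)=3>2=\ghw(\aug\gamma)$: when the input $\atommap$ genuinely exploits $\PRel$-edges, no construction can satisfy the claim, and stripping the $\PRel$-atoms from the covers (your WLOG) must increase the width. The equality does survive for width notions determined by bag contents alone under a fixed measure (e.g.\ tree-width), since relativization only shrinks bags, but for generalized hyperwidth the claim --- and with it \Cref{thm:width-augmented-char} as stated --- fails. In short: your reconstruction is faithful to the paper, your diagnosis of the fragile point is exactly right, and the reason you could not close the gap is that it cannot be closed; it is a flaw in the construction, not in your write-up.
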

    \begin{nestedproof}
        \proofcase{($\leq$)} This follows from the fact that for every "bag" $v$ of $\widetilde T$ we have $\widetilde\atommap(v)$ is equal to either $\atommap(v')$ or $\widehat\atommap(v')$ for some $v'$.

        \proofcase{($\geq$)} On the other hand, for every "bag" $v$ of $T$, we have that $v$ is also in $\widetilde T$ and further $\bagmap(v) = \widetilde\bagmap(v)$.
    \end{nestedproof}
    \begin{claim}
        $(\widetilde T, \widetilde\bagmap, \widetilde\atommap )$ has the "$\PRel$-clique property".
    \end{claim}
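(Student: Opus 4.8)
The plan is to fix an arbitrary set $C \subseteq \vars(\gamma)$ that forms a "$\PRel$-clique" in $\aug\gamma$ and to exhibit a "bag" of $\widetilde T$ containing $C$, splitting on whether $C \subseteq X_1$. If $C \subseteq X_1$, then $\aug\gamma$ contains $\PRel(x,y)$ for every pair in $C$, so its "underlying hypergraph" carries a clique on $C$; since $(T,\bagmap,\atommap)$ is a "tree decomposition" of $\aug\gamma$, \Cref{lem:cliques-in-bags} gives a "bag" $v$ of $T$ with $C \subseteq \bagmap(v)$, and because $C \subseteq X_1$ the "relativized" bag satisfies $\widetilde\bagmap(v) = \bagmap(v) \cap X_1 \supseteq C$. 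Thus $v$, which survives in $\widetilde T$, is the required "bag".

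If $C \not\subseteq X_1$, I would first confine $C$ to a single component. Choose $z \in C \setminus X_1$; it lies in a unique connected component $C_i$ of $\restrictG[\hyperq]{\vars(q) \setminus X_1}$. For any $w \in C \setminus \set{z}$ there is, by the clique hypothesis, a "frontier-path" $P$ realizing $\PRel(z,w)$ in $\aug\gamma$. Its witnessing projection set cannot be $X_1$, since condition (iii) would force both endpoints into $X_1$ whereas $z \notin X_1$; hence $P$ is witnessed by some $X_j$ with $j \geq 2$, and all its internal vertices lie outside $X_j \supseteq X_1$. Consequently $z$ together with the internal vertices of $P$ forms a connected trail avoiding $X_1$, so it stays inside $C_i$, and following $P$ to its endpoint forces $w \in C_i \cup C_i^+$. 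This proves $C \subseteq C_i \cup C_i^+$, so a bag of the "relativized" copy $\widehat T_i$ is the natural candidate: any "bag" $u$ of $\widehat T$ with $\widehat\bagmap(u) \supseteq C$ yields $\widetilde\bagmap(u) = \widehat\bagmap(u) \cap (C_i \cup C_i^+) \supseteq C$ in the copy $\widehat T_i$.

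The step I expect to be the main obstacle is guaranteeing such a "bag" $u$ of $\widehat T$ exists, that is, that the inductive hypothesis can legitimately be invoked on $C$. The hypothesis equips $(\widehat T,\widehat\bagmap,\widehat\atommap)$ with the "$\PRel$-clique property" relative to $\aug{\gamma'}$ for $\gamma' = \pi_{X_2}^{\oplus_2} \dotsb \pi_{X_n}^{\oplus_n} {\Join^\otimes} q$, and $\aug{\gamma'}$ may have strictly fewer $\PRel$-atoms than $\aug\gamma$: exactly the edges witnessed only by "frontier-paths" whose projection set is $X_1$ are absent. The tracing above shows that every pair of $C$ with an endpoint outside $X_1$ is reconnected by a "frontier-path" witnessed by some $X_j$, $j \geq 2$, and hence survives in $\aug{\gamma'}$; but two variables of $C \cap X_1 \subseteq C_i^+$ may be joined in $\aug\gamma$ only through $C_i$ via an $X_1$-witnessed "frontier-path", which need not survive. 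So $C$ need not be a "$\PRel$-clique" of $\aug{\gamma'}$, and a naive appeal to the inductive "$\PRel$-clique property" fails.

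To get around this I would carry the invariant relative to the \emph{fixed} query $\aug\gamma$ throughout the induction, rather than to the shrinking $\aug{\gamma'}$: at every level the construction "relativizes" the single "tree decomposition" $(T,\bagmap,\atommap)$ of $\aug\gamma$ and glues recursively built copies onto the components. Since $T$ already places every "$\PRel$-clique" of $\aug\gamma$ in a "bag" by \Cref{lem:cliques-in-bags}, and "relativizing" only intersects bags with the current variable set while the localization argument of the second paragraph confines any fixed $\aug\gamma$-clique to exactly one "relativized" subtree, the witnessing "bag" is never destroyed. With this strengthened invariant the inductive hypothesis delivers the bag $u$ directly from the $\aug\gamma$-clique $C$, yielding the "$\PRel$-clique property" for $(\widetilde T,\widetilde\bagmap,\widetilde\atommap)$ and closing the induction.
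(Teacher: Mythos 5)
Your proof is correct, and it takes a genuinely different --- and in fact more careful --- route than the paper. The paper settles this claim in one line, by asserting that every maximal $\PRel$-clique of $\aug\gamma$ equals $C_i^+$ for some $i\in[1,k]$, so that the bag $\widetilde\bagmap(v_i)\supseteq C_i^+$ suffices. That assertion is false in general, for exactly the reason your analysis isolates: $\PRel$-edges arising from frontier-paths witnessed by inner projections $X_j$, $j\geq 2$, may join variables lying outside $X_1$. Concretely, for $q=\set{R(x,w),S(w,u),T(u,y)}$ with $X_1=\set{x,y}$ and $X_2=\set{x,u,y}$, the query $\aug\gamma$ contains $\PRel(x,y)$ (witnessed by $X_1$) together with $\PRel(x,u)$ and $\PRel(u,y)$ (witnessed by $X_2$); hence $\set{x,u,y}$ is a maximal $\PRel$-clique while the only $C_i^+$ is $\set{x,y}$, and no bag of the $X_1$-relativized copy of $T$ can contain it --- it must be found inside an attached copy $\widehat T_i$, which is precisely what your strengthened invariant provides. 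The same example confirms your other diagnosis: $C_1^+=\set{x,y}$ is a $\PRel$-clique of $\aug\gamma$ but not of $\aug{\gamma'}$ (every path from $x$ to $y$ goes through $u\in X_2$), so even the paper's earlier appeal to the inductive $\PRel$-clique property, used to obtain the bags $\widehat v_i$, already requires the invariant to be carried relative to the fixed $\aug\gamma$ rather than to the shrinking $\aug{\gamma'}$. Your three ingredients --- the split on whether $C\subseteq X_1$ (using \Cref{lem:cliques-in-bags} plus relativization), the localization of any other clique into a single $C_i\cup C_i^+$, and the induction stated against the fixed $\aug\gamma$ --- are exactly what is needed to make this claim, and the surrounding induction, sound; what the paper's shortcut buys is brevity, but at the cost of an argument that does not hold as written.
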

    \begin{nestedproof}
        Observe that any maximal "$\PRel$-clique" $Z$ on $\aug\gamma$ is, necessarily, equal to $C_i^+$ for some $i \in [1,k]$. Since $\widetilde \bagmap(v_i)$ contains $C_i^+$, the claim follows.
    \end{nestedproof}

    \begin{claim}
        The "project-connex" "tree decomposition" for $\gamma$ can be computed in linear time.
    \end{claim}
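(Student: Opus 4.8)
\begin{nestedproof}
  The plan is to implement the inductive construction above directly, ensuring that each primitive operation runs in time linear in the objects it touches and that the nesting over the $n$ projections accumulates only linearly. As preprocessing I would, by the standard shrinking result for "tree decompositions" \cite{GottlobLS02} (also invoked in the proof of \Cref{lem:bound-decompositions-agg-queries}), assume that the input $(T,\bagmap,\atommap)$ has $\+O(|\vars(\gamma)|)$ "bags", and build once an index mapping each "variable" to the "bags" "containing@@bag" it.

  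For a single inductive step each ingredient is cheap: the connected components $C_1,\dotsc,C_k$ of $\restrictG[\hyperq]{\vars(q)\setminus X_1}$, the boundaries $C^+_i$, and the witnessing "bags" $v_i$ with $C^+_i\subseteq\bagmap(v_i)$ (which exist by \Cref{lem:cliques-in-bags}, as each $C^+_i$ is a "$\PRel$-clique" of $\aug\gamma$) are read off from a single traversal of $\hyperq$ together with the precomputed index; the "relativization" of $\bagmap$ and the attachment of the relativized copies are linear scans. Crucially I would \emph{not} materialise the "witness subtrees" $\widetilde W^{(1)}\subseteq\dotsb\subseteq\widetilde W^{(n)}$ explicitly---doing so could already be quadratic---but instead label every "bag" $v$ of $\widetilde T$ with the least index $j$ with $v\in\widetilde W^{(j)}$; by the nesting this single label per "bag" determines all "witness subtrees", and the labelling has total size $\+O(\sizeofgamma)$.

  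The hard part, and the step I expect to be the main obstacle, is controlling the cost \emph{across} the $n$ levels: a naive unrolling re-traverses every surviving "atom" at every level it passes through and re-copies the previously-built decomposition for each of the $k$ components, yielding an $\+O(n\cdot\sizeofgamma)$ bound. To obtain genuine linear time I would exploit that the projections are nested, $X_1\subsetneq\dotsb\subsetneq X_n$, so that the free parts form a filtration $\vars(q)\setminus X_1\supseteq\dotsb\supseteq\vars(q)\setminus X_n$. Read in reverse, the successive component computations become a sequence of vertex \emph{insertions} whose order is fixed in advance, so the entire hierarchy of components over all $n$ levels---and with it every boundary $C^+_i$ and attachment point $v_i$---can be produced by a single offline union--find pass, which under the "RAM" model runs in linear time; the decomposition and the witness-subtree labels are then emitted in one bottom-up sweep that reuses the shared structure rather than duplicating it, keeping every intermediate decomposition of size $\+O(\sizeofgamma)$ via \Cref{lem:bound-decompositions-agg-queries}. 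The delicate accounting I anticipate is the treatment of the shared boundaries: each "bag" and "variable" must be charged only once---at the level where it settles into a spine, respectively first enters a "witness subtree"---and one must check that the repeated boundaries telescope, using that each "variable" of $X_1$ is incident to at most as many components as there are "atoms" containing it, whence $\sum_{i\in[k]}|C^+_i|=\+O(\sizeofq)$ and the total work stays $\+O(\sizeofgamma)$.
\end{nestedproof}
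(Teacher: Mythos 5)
Your reconstruction of the per-level step is faithful to the paper's construction (components of $\restrictG[\hyperq]{\vars(q)\setminus X_1}$, boundaries $C_i^+$, witnessing bags obtained from \Cref{lem:cliques-in-bags}, relativization and attachment), and encoding the nested witness subtrees by a single level-label per bag is a legitimate succinct representation. But the gap lies exactly in the part you flag as the main obstacle. The paper never aims at time linear in $\sizeofgamma$: its proof of this claim is pure output-sensitive bookkeeping. It observes that the decomposition being built is a union of relativized copies $(T,\bagmap\cap Z,\atommap)$ of the given decomposition, where $Z$ ranges over the sets $X_i$ and the sets $C_i\cup C_i^+$ (of which there are only linearly many distinct ones), joined by linearly many new edges; each copy is produced from $(T,\bagmap,\atommap)$ in linear time, so the total cost is proportional to what is written down. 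No union--find, no amortization, no sharing: the ``naive'' accounting you dismissed is essentially the paper's entire proof.

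Your stronger target is moreover unreachable by the route you sketch, because both devices introduced to kill the factor $n$ fail. First, the telescoping of boundaries is false. Your per-level bound $\sum_i|C_i^+|=\+O(\sizeofq)$ is correct (each atom containing $x$ meets at most one component), but nothing cancels across levels: take atoms $E(x_1,y_i,z_i)$ for $i\in[m]$ and unary atoms $A(x_l)$ for $l\in[n]$, with $X_j=\set{x_1,\dotsc,x_j}$. Every $\set{y_i,z_i}$ is a component of $\vars(q)\setminus X_j$ at \emph{every} level $j$, always with boundary $\set{x_1}$, so the total boundary mass over all levels is $\Theta(n\cdot m)$, which for $m=n^2$ is $\Theta\big(\sizeofgamma^{3/2}\big)$. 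Second, sharing is incompatible with the object to be produced: a tree decomposition is a tree, and the construction requires \emph{pairwise vertex-disjoint} copies of the inductively built decomposition, one per component; keeping them shared yields a DAG, and emitting the actual tree costs its full size, which in examples like the one above is super-linear in $\sizeofgamma$. Your fallback of shrinking each intermediate decomposition via \Cref{lem:bound-decompositions-agg-queries} does not help either, since each application costs time linear in the \emph{current} decomposition, reinstating the $\+O(n\cdot\sizeofgamma)$ you set out to avoid. So your argument does not close; the claim has to be read, as the paper implicitly reads it, as linear in the size of the decomposition produced.
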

    \begin{nestedproof}
        Given the "tree decomposition" $(T,\bagmap,\atommap)$ of $\aug\gamma$, observe that we build $(\widehat T, \widehat \bagmap, \widehat \atommap)$ by adding edges between tree decompositions of the form ``$(T,\bagmap \cap Z,\atommap)$'', where $\bagmap \cap Z$ is the "relativization" of $\bagmap$ onto $Z$, and $Z$ can either be (i) some $X_i$ or (ii) some $C_i \cup C_i^+$. It is not hard to see that there are only a linear number of such $Z$.
        Hence, there is a linear number of $(T,\bagmap \cap Z,\atommap)$ decompositions, each $(T,\bagmap \cap Z,\atommap)$ can be obtained from $(T,\bagmap,\atommap)$ in linear time, and the set of all $Z$'s can be also obtained in linear time. After adding a (linearly many) edges we obtain $(\widehat T, \widehat \bagmap, \widehat \atommap)$, so the total cost of producing it remains in linear time.
    \end{nestedproof}
This completes the proof of \Cref{thm:width-augmented-char}.
\end{proof}

\paragraph*{Computing Acyclic Decompositions}
\label{para:comp-acyclic-dec}
In the context of "conjunctive queries", a notable characterization by Brault-Baron shows: 
\begin{theorem}[{\cite[Theorem 13, first two items]{brault2013pertinence}, see also \cite[Theorem 5.2]{berkholz2020constant}}]
    \label{prop:brault-baron-char}
    A "CQ" $q(X)$ has a "width" 1 "free-connex" "tree decomposition" if{f} both $q$ and $q \cup \set{R(\bar x)}$ have "join trees", where $R$ is a fresh "relation name" and $\vars(\bar x) = X$.    
\end{theorem}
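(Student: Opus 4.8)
The plan is to prove both implications, using throughout the fact (\Cref{lem:jointree:width1=jointree}) that a CQ has a join tree exactly when it has a width-$1$ tree decomposition, together with the observation that the fresh atom $R(\bar x)$ contributes to the underlying hypergraph exactly one new hyperedge, namely $X=\vars(\bar x)$. Thus the two right-hand conditions read ``$q$ is acyclic'' and ``$q\cup\{R(\bar x)\}$ is acyclic'', and the task reduces to relating these to the existence of a width-$1$ free-connex decomposition of $q(X)$.

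For the forward direction, let $(T,\bagmap,\atommap)$ be a width-$1$ free-connex tree decomposition of $q(X)$ with witness subtree $T'$ for $X$. It is already a width-$1$ tree decomposition of $q$, so $q$ has a join tree. To get one for $q\cup\{R(\bar x)\}$ I would contract the whole subtree $T'$ to a single node $v^{*}$, set $\bagmap(v^{*})=X$ and cover it by the new atom $R(\bar x)$ (legal since $\vars(R(\bar x))=X$), leaving the rest of $T$ untouched. Contracting a connected subtree of a tree yields a tree; completeness is preserved (every hyperedge of $q$ previously covered inside $T'$ is now $\subseteq X=\bagmap(v^{*})$, and the new hyperedge $X$ is covered by $v^{*}$); and the width stays $1$. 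The delicate point is connectivity, where I use that $\bigcup_{v\in T'}\bagmap(v)=X$ \emph{exactly}: no bound variable occurs in $T'$, so the occurrence-subtree of a bound variable is disjoint from $T'$ and unaffected, while for $x\in X$ collapsing the (connected) part of its occurrence-subtree lying in $T'$ to $v^{*}$ keeps it connected. This yields a width-$1$ tree decomposition, hence a join tree, of $q\cup\{R(\bar x)\}$.

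For the converse, assume $q$ and $q\cup\{R(\bar x)\}$ are both acyclic. Two ingredients drive the construction. First, from a join tree $T_q$ of $q$ I obtain that the projection hypergraph $H$ on vertex set $X$ with hyperedges $\{\vars(\alpha)\cap X:\alpha\in\atoms(q)\}$ is acyclic: relativizing each bag of $T_q$ to $X$ gives a tree decomposition of $H$ in which every bag $\vars(\alpha)\cap X$ is covered by the single (projected) atom, i.e.\ a width-$1$ decomposition, so $H$ has a join tree $T_W$ whose bags union to $X$. Second, I root a join tree $T^{+}$ of $q\cup\{R(\bar x)\}$ at the node $u$ carrying $R(\bar x)$, whose bag is $X$; its child subtrees $S_1,\dots,S_m$ are join trees covering all atoms of $q$, with interfaces $I_j=\bagmap(c_j)\cap X$ at the child roots $c_j$. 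Standard join-tree connectivity yields the separation facts: $\bigcup_j I_j=X$, each $I_j$ lies in the single atom at $c_j$, and any variable occurring in two distinct $S_j$ must lie in $X$. I then glue: use $T_W$ as the witness subtree for $X$ and hang each $S_j$ (unchanged) onto a node $t_j$ of $T_W$ whose bag contains $I_j$ — such a node exists since $\vars(\alpha_{c_j})\cap X$ is a hyperedge of $H$. Completeness holds because the $S_j$ already cover every atom; the width stays $1$; and $T_W$ is a connected subtree whose bags union to exactly $X$, giving free-connexity.

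The step I expect to be most delicate — in both directions — is the connectivity bookkeeping under these tree surgeries, and in particular, for the $\Leftarrow$ direction, verifying it for the glued tree: for $x\in X$ one must check that the (connected) occurrence-subtree of $x$ in $T_W$, the single attaching edge $\{t_j,c_j\}$ (whose endpoints both contain $x$ whenever $x\in I_j$), and the occurrences of $x$ inside the relevant $S_j$ together stay connected and acyclic, while for bound variables the separation fact confines all occurrences to one $S_j$. I would also stress that ``$q$ acyclic'' is a genuinely separate hypothesis that cannot be dropped in favour of ``$q\cup\{R(\bar x)\}$ acyclic'' alone: for the triangle $q=\{R_1(a,b),R_2(b,c),R_3(a,c)\}$ with $X=\{a,b,c\}$ the query $q$ is cyclic, yet adding the atom on $\{a,b,c\}$ turns every original edge into a subedge of $X$ and renders $q\cup\{R(\bar x)\}$ acyclic.
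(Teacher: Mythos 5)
Your proof is correct and takes essentially the same route as the paper: the statement itself is cited from Brault-Baron, but the paper proves its generalization (\Cref{lem:char-project-connex-acyclic}) by exactly these two surgeries --- for the forward direction, collapsing the witness subtree for $X$ into a node whose bag is $X$ and is covered by the fresh atom $R(\bar x)$; for the converse, relativizing the join tree of $q$ onto $X$ to serve as the witness subtree and grafting onto it the components obtained from the join tree of $q \cup \set{R(\bar x)}$ by deleting the node carrying $R(\bar x)$. Your detour through the projected hypergraph $H$ and your use of contraction (rather than rewiring the witness subtree into a star centered at the new node) are only cosmetic variants of the paper's construction.
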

 In other words, an "aggregate query" of the form $\pi_X^\oplus {\Join^\otimes} q$ has a "width" 1 "project-connex" "tree decomposition" if{f} both $q$ and $q \cup \set{R(\bar x)}$ have "join trees", where $R$ is a fresh "relation name" and $\vars(\bar x) = X$.
A corollary of \Cref{thm:width-augmented-char} shows that in fact, it is not necessary to test for \emph{two} different queries for "free-connex" "acyclicity" but just for one:
\begin{corollary}
    A "CQ" $q(X)$ is "free-connex" "acyclic" if{f} $\aug \gamma$ is "acyclic", for $\gamma = \pi_X^\oplus {\Join^\otimes} q$.
\end{corollary}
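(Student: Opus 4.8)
The plan is to obtain this as an immediate corollary of \Cref{thm:width-augmented-char}, reading off both sides of the claimed equivalence as the single-projection special case. First I would record that for $\gamma = \pi_X^\oplus {\Join^\otimes} q$ we have $\pghw(\gamma) = \fghw(q(X))$. This is not a new argument but exactly the observation already made right after the definition of "project-connex generalized hyperwidth": a "tree decomposition" of $\hyperq$ is "free-connex" for $q(X)$ if and only if it is "project-connex" for $\gamma$, because in both formulations the only extra requirement is the existence of a single "witness subtree" for $X$. Since this correspondence is decomposition-by-decomposition and leaves the "width" untouched, minimizing over all admissible decompositions on each side gives $\fghw(q(X)) = \pghw(\gamma)$.

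With that identity in hand, the proof is a short chain of equivalences. Unfolding the definition of "acyclic" ("ie", "generalized hyperwidth" $1$), being "free-connex" "acyclic" means $\fghw(q(X)) = 1$, equivalently $\pghw(\gamma) = 1$ by the identity above, equivalently $\ghw(\aug\gamma) = 1$ by \Cref{thm:width-augmented-char}, which is by definition the statement that $\aug\gamma$ is "acyclic". I would also note that since $\gamma$ carries a single projection, $\fghw(q(X)) = 1$ is the least attainable value for a nonempty query, so it genuinely captures "free-connex" "acyclicity" and no degenerate width-$0$ case arises.

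I do not expect any real obstacle here: all the difficulty has already been absorbed into \Cref{thm:width-augmented-char}, whose nontrivial direction builds a "project-connex" decomposition from a "tree decomposition" of $\aug\gamma$ while preserving "width". The only points to be careful about are bookkeeping ones: matching the notion "acyclic" ($\ghw = 1$) on $\aug\gamma$ with "free-connex" "acyclicity" ($\fghw = 1$) on $q(X)$, and confirming that the single-projection instantiation of $\gamma$ is exactly the one for which "project-connex" collapses to "free-connex". Finally, I would contrast this with \Cref{prop:brault-baron-char}: whereas Brault-Baron's characterization tests "join trees" of \emph{two} queries $q$ and $q \cup \set{R(\bar x)}$, here a single "acyclicity" test on $\aug\gamma$ suffices, which is the whole point of the corollary.
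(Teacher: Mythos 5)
Your proof is correct and follows exactly the route the paper intends: the observation (stated right after the definition of "project-connex") that a decomposition is "free-connex" for $q(X)$ if{f} it is "project-connex" for $\pi_X^\oplus {\Join^\otimes} q$, giving $\fghw(q(X)) = \pghw(\gamma)$, chained with $\pghw(\gamma) = \ghw(\aug\gamma)$ from \Cref{thm:width-augmented-char} and the definition of "acyclic" as "width" $1$. The paper leaves this chain implicit (it states the corollary without proof as an immediate consequence of \Cref{thm:width-augmented-char}), so your write-up is just a more explicit rendering of the same argument.
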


However, the characterization in \Cref{prop:brault-baron-char} leads to a linear-time algorithm for testing "free-connex" "acyclicity", since 
testing if a "CQ" is "acyclic" is in linear time via ``GYO decompositions'' \cite{abiteboul1995foundations} (producing a "width" 1 "tree decomposition" ---"aka" ``"join tree"''--- if successful).
In our case, building $\aug\gamma$ seems to be at least cubic-time, and 
its size may be quadratic,
even when restricted to "conjunctive queries". Hence, \Cref{thm:width-augmented-char} only leads to a polynomial algorithm for testing (and producing) "acyclic" "free-connex" "tree decompositions". 
The noticeable difference comes from the fact that $\aug\gamma$ adds cliques on a binary relation, whereas \Cref{prop:brault-baron-char} adds a single atom, hence increasing the size of the query only linearly.
We will show next that for the specific case of "acyclicity" ("ie", "project-connex" "generalized hyperwidth" 1) a modest generalization of \Cref{prop:brault-baron-char} can be shown to hold, which will enable a linear-time algorithm for finding a decomposition (provided the number of projection operators is bounded).

\begin{lemmarep}\AP\label{lem:char-project-connex-acyclic}
    An "aggregate query" $\gamma=\pi_{X_1}^{\oplus_1} \dotsb \pi_{X_n}^{\oplus_n} {\Join^\otimes} q$ has a "width" 1 "project-connex" "tree decomposition" if{f} every 
    "CQ" $\set{q_i}_{0 \leq i \leq n}$  has a "join tree", where 
    $q_0 = q$ and $q_i=q \cup \set{R_i(\bar x_i)}$ for every $i\in[1,n]$, where each $R_i$ is a fixed "relation name" and $\vars(\bar x_i) = X_i$.

    Further, the "width" 1 "project-connex" "tree decomposition" and its "witness subtrees" can be produced from the "join trees" in linear time.
\end{lemmarep}
\begin{proofsketch}
    Given a "width" 1 "project-connex" "tree decomposition" $(T,\bagmap,\atommap)$ for $\gamma$ with $\set{T_i}_i$ as "witness subtrees" and $m \leq n$, consider $(\widehat T,\widehat \bagmap,\widehat \atommap)$ to be the result of: (1)
        removing all edges between elements of $\vertex{T_m}$, 
        (2)
        adding a new vertex $v_m$ with $\widehat\bagmap(v_m) \eqdef X_m$, $\widehat\atommap(v_m) \eqdef \set{X_m}$, 
        and (3)
        adding an edge $\set{v_m,v}$ for every $v \in \vertex{T_m}$.
    It follows that $(\widehat T,\widehat \bagmap,\widehat \atommap)$ is a "tree decomposition" of "width" 1 for $q_m$, and hence that $q_m$ has a "join tree".

    On the other hand, if every "CQ" $q_i$ has a "join tree" we can produce a "width" 1 "project-connex" "tree decomposition" by follwowing the same strategy as \cite[proof of Theorem~5.2]{berkholz2020constant}.
    By induction on $n$, suppose that we have a "project-connex" "tree decomposition" $(T,\bagmap, \atommap)$ for $\gamma'=\pi_{X_{1}}^{\oplus_{1}} \dotsb \pi_{X_{n-1}}^{\oplus_{n-1}} {\Join^\otimes} q$ and a "join tree" $(T',\bagmap', \atommap')$ for $q_n$ with the corresponding bijection $\phi : \vertex{T} \to q_n$. 
    Let $f$ be any mapping $f: q \to \vertex{T}$ such that $\vars(\alpha) \subseteq \bagmap(f(\alpha))$ for every $\alpha \in q$ ---it exists due to the "completeness condition". Let $\tilde v \in \vertex{T'}$ be such that $\phi(\tilde v) = R_n(\bar x_n)$. We can produce a "project-connex" "tree decomposition" for $\gamma=\pi_{X_{1}}^{\oplus_{1}} \dotsb \pi_{X_n}^{\oplus_n} {\Join^\otimes} q$ by (i) removing $\tilde v$ from $T'$, (ii) attaching each neighbor $v$ of $\tilde v$ to $f(v)$ from $T$ and (iii) "relativizing" $T$ to $X_n$, as shown in \Cref{fig:charac-pc-width1body}.
    \begin{figure}%
        \includegraphics[width=1\textwidth]{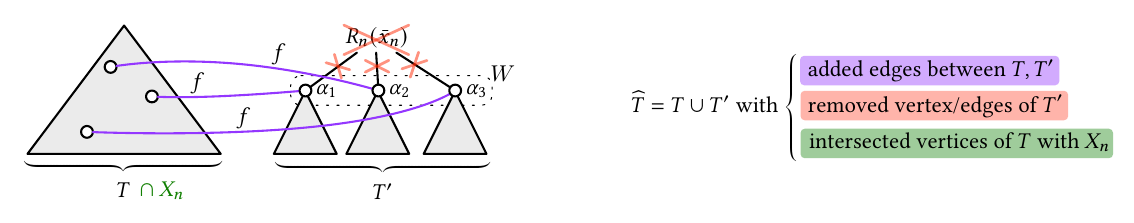}
        \caption{Construction of $(\widehat T,\widehat \bagmap,\widehat \atommap)$ in \Cref{lem:char-project-connex-acyclic}.}
        \label{fig:charac-pc-width1body}
    \end{figure}
\end{proofsketch}

\begin{proof}
    \proofcase{Left-to-right}
    Let $(T,\bagmap,\atommap)$ be a "width" 1 "project-connex" "tree decomposition" for $\gamma$ and let $T_i$ be the "witness subtree for" each $X_i$. We show that $q_m$ has a "join tree", for every $m \leq n$.
    Consider $(\widehat T,\widehat \bagmap,\widehat \atommap)$ to be the result of
    \begin{itemize}
        \item removing all edges between elements of $\vertex{T_m}$,
        \item adding a new vertex $v_m$ with $\widehat\bagmap(v_m) \eqdef X_m$, $\widehat\atommap(v_m) \eqdef \set{X_m}$,
        \item adding an edge $\set{v_m,v}$ for every $v \in \vertex{T_m}$.
    \end{itemize}

    It follows that $(\widehat T,\widehat \bagmap,\widehat \atommap)$ is a "tree decomposition" of "width" 1 for $q_m$, and hence that $q_m$ has a "join tree" by \cref{lem:jointree:width1=jointree} of \Cref{lem:jointree}.

    \medskip

    \proofcase{Right-to-left}
    We proceed by induction on the number $n$ of projections. 
    The base case of $n=0$ is trivial from the fact that in this case a "width" 1 "project-connex" "tree decomposition" is just a "width" 1 "tree decomposition", and that a "join tree" for ${\Join^\otimes} q$ is in particular a "width" 1 "tree decomposition".
    We focus then on the inductive case.
    Suppose then that we have a "project-connex" "tree decomposition" $(T,\bagmap, \atommap)$ for $\gamma'=\pi_{X_{1}}^{\oplus_{1}} \dotsb \pi_{X_{n-1}}^{\oplus_{n-1}} {\Join^\otimes} q$ and a "join tree" $(T',\bagmap', \atommap')$ for $q_n$ with the corresponding bijection $\phi : \vertex{T} \to q_n$. 
    Let $T_1, \dotsc, T_{n-1}$ be the "witness subtrees for" $X_1, \dotsc, X_{n-1}$, respectively.
    
    We need to produce a 
    "project-connex" "tree decomposition" for $\gamma=\pi_{X_{1}}^{\oplus_{1}} \dotsb \pi_{X_n}^{\oplus_n} {\Join^\otimes} q$. We assume $\vertex{T} \cap \vertex{T'} = \emptyset$, and we follow the same strategy as \cite[proof of Theorem~5.2]{berkholz2020constant}.

    Let $f$ be any mapping $f: q \to \vertex{T}$ such that $\vars(\alpha) \subseteq \bagmap(f(\alpha))$ for every $\alpha \in q$ ---it exists due to the "completeness condition".

    Let $\tilde v \in \vertex{T'}$ be such that $\phi(\tilde v) = R_n(\bar x_n)$, and let $W$ be the set of all $\tilde v$'s neighbors in $T'$. 

    We can now define the "tree decomposition" $(\widehat T,\widehat \bagmap,\widehat \atommap)$:
    \begin{itemize}
        \item the tree $\widehat T$ consists of
        \begin{itemize}
            \item all the vertices and edges from $T$,
            \item all the vertices from $\vertex{T'} \setminus \set v$, and the $T'$-induced edges,
            \item the edge $\set{v,f(v)}$ for every $v \in W$.
        \end{itemize}
        \item We define $\widehat \bagmap$ as follows
        \begin{itemize}
            \item $\widehat \bagmap(v) \eqdef \bagmap(v) \cap X_n$ for every $v \in \vertex{T}$
            \item $\widehat \bagmap(v) \eqdef \bagmap'(v)$ for every $v \in \vertex{T'}$.
        \end{itemize} 
        \item We define $\atommap$ as follows
        \begin{itemize}
            \item $\widehat \atommap(v) \eqdef \atommap(v)$ for every $v \in \vertex{T}$
            \item $\widehat \atommap(v) \eqdef \atommap'(v)$ for every $v \in \vertex{T'}$.
        \end{itemize}
    \end{itemize}

    Finally, we define the "witness subtree for" $X_n$ to be the tree $T_n$ induced by $\vertex{T}$. See \Cref{fig:charac-pc-width1} for an explanatory illustration of the construction.
    \begin{figure}
        \includegraphics[width=\textwidth]{charac-pc-width1.pdf}
        \caption{Construction of $(\widehat T,\widehat \bagmap,\widehat \atommap)$ in \Cref{lem:char-project-connex-acyclic}.}
        \label{fig:charac-pc-width1}
    \end{figure}

    \begin{claimrep}
        $(\widehat T,\widehat \bagmap,\widehat \atommap)$ is a "tree decomposition" of $\gamma$.
    \end{claimrep}
    \begin{nestedproof}[Proof of claim]
        The fact that $\widehat T$ is a tree follows from the fact that no two distinct vertices of $W$ belong to the same connected component of $T' \setminus \tilde v$ ("ie", the result of removing $\tilde v$ from $T'$).

        The "completeness condition" is inherited from $(T',\bagmap',\atommap')$: for each atom $\alpha$ we have $\vars(\alpha) \subseteq \widehat\bagmap(\phi^{-1}(\alpha)) = \bagmap'(\phi^{-1}(\alpha))$. 
        
        The "connectivity condition" for any variable $x \in X_n$ is the result of observing that (i) relativizing the bags of $(T,\bagmap,\atommap)$ to $X_n$ preserves the "connectivity condition" for $x$, and (ii) for any new edge $\set{v,f(v)}$, $x \in \widehat\bagmap(v)$ if{f} $x \in \widehat\bagmap(f(v))$; hence the property follows by the inherited "connectivity condition" for $x$ of $(T',\bagmap',\atommap')$.
        
        The "connectivity condition" for any variable $x' \not\in X_n$ comes from noticing (i) $x \not\in \widehat\bagmap(v)$ for every $x' \in \vertex{T}$ and (ii) $(T',\bagmap',\atommap')$ has the "connectivity condition" for $x'$. This means that for every distinct pair $v,v' \in W$ we cannot have $x' \in \bagmap'(v) \cap \bagmap'(v')$ (otherwise this would imply that $x' \in X_n$). Hence, the "connectivity condition" for $x'$ is inherited from the "connectivity condition" of the different connected components of $(T',\bagmap',\atommap')$ after removing $\tilde v$.
        
        Finally, the "covering condition" is straightforward by noticing that removing elements from the "bag" preserves the "covering condition".    
    \end{nestedproof}

    \begin{claimrep}
        $(\widehat T,\widehat \bagmap,\widehat \atommap)$ is "project-connex".
    \end{claimrep}
    \begin{nestedproof}[Proof of claim]
        Observe that $T_1, \dotsc, T_{n-1}$ are still "witness subtrees for" each $\set{X_i}_{i \leq n-1}$: indeed when relativizing $(T,\bagmap,\atommap)$ onto $X_n$ the bags for these subtrees remain unaltered since $X_n \supseteq X_i$ for every $i$.
        On the other hand $T_n$ is a "witness subtree for" $X_n$ since $\bigcup_{v \in \vertex{T}} \widehat\bagmap(v) = X_n$. Further, $\vertex{T_n} \supseteq \vertex{T_i}$ for every $i$, and thus $T_1, \dotsc, T_n$ witness the fact that $(\widehat T,\widehat \bagmap,\widehat \atommap)$ is "project-connex".
    \end{nestedproof}
    This concludes the proof.
\end{proof}

\begin{corollaryrep}
    \AP\label{cor:lineartime-width1-pc-decompositions}
    For every "aggregate query" $\gamma = \pi_{X_1}^{\oplus_1} \dotsb \pi_{X_n}^{\oplus_n} {\Join^\otimes} q$ we can test in $\+O(n \cdot \sizeofq)$ whether $\gamma$ has a "width" 1 "project-connex" "tree decomposition". Further, if possible, one such decomposition as well as the "witness subtrees" can be produced within the same time bounds.
\end{corollaryrep}
\begin{proof}
    This is a direct consequence of \Cref{lem:char-project-connex-acyclic} and \cref{lem:jointree:lineartime} of \Cref{lem:jointree}.
\end{proof}

\begin{toappendix}
    \subsection{Previously Claimed Width Characterization}
\label{sec:problemAJAR}
    \cite{JoglekarPR16} defines a set of "hypergraphs" of an AJAR query, called ""characteristic hypergraphs"" \cite[Definition 33]{JoglekarPR16}, and \cite[Theorem 34]{JoglekarPR16} claims that the width of the ``decomposable''\footnote{This is very close to our "project-connex" "tree decompositions", although for this counterexample it suffices to know that it is a restriction of "tree decompositions".}  tree decomposition  coincides with the  maximum width of the "characteristic hypergraphs". Further, this is claimed to hold for any monotonic width measure. In particular, for "generalized hyperwidth".
    
    \begin{claim}
        \cite[Theorem 34]{JoglekarPR16} does not hold.
    \end{claim}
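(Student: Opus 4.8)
The plan is to refute \cite[Theorem 34]{JoglekarPR16} by exhibiting a single explicit query for which the value it predicts---the maximum "generalized hyperwidth" of the "characteristic hypergraphs" of \cite[Definition 33]{JoglekarPR16}---is \emph{strictly smaller} than the width of every admissible (``decomposable'') decomposition. The error I would target is the \emph{contraction} step inherent in the characteristic-hypergraph construction: the connected components of the projected-away variables are collapsed to single vertices, and such a collapse can destroy a cycle that lives entirely among those variables, thereby forgetting width that no genuine decomposition can avoid. Crucially, because a decomposable decomposition is in particular a "tree decomposition" of the query's "underlying hypergraph" $\hyperq[\gamma]$, it suffices to show that already the plain $\ghw(\hyperq[\gamma])$ exceeds the predicted characteristic width; this is why the footnote remark that ``decomposable'' is merely a restriction of "tree decompositions" is all that is needed.

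Concretely, I would take $\gamma = \pi_{\{x\}}^{\oplus}{\Join^\otimes}q$ with $q = \{A(x,u),\,B(u,v),\,C(v,w),\,D(w,u)\}$, so that the three atoms $B,C,D$ put the variables $u,v,w$ pairwise into a common edge, i.e.\ $\{u,v,w\}$ is a clique of $\hyperq[\gamma]$. By \Cref{lem:cliques-in-bags}, every "tree decomposition"---hence every decomposable one---must have a "bag" "containing@@bag" $\{u,v,w\}$; since no single atom covers all three of $u,v,w$, such a bag needs at least two atoms, so the decomposable width is at least $2$. As a consistency cross-check of the correct answer, note that $\{x\}$ is the only output set, so there are no "frontier-paths" between distinct free variables, $\aug\gamma = q$, and \Cref{thm:width-augmented-char} gives the genuine value $\pghw(\gamma)=\ghw(\aug\gamma)=2$. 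On the other hand, the unique characteristic hypergraph associated with projecting away the connected set $\{u,v,w\}$ contracts it to a single vertex, producing an acyclic hypergraph (essentially the edge $\{x,\bullet\}$) of "width" $1$; as any remaining characteristic hypergraph is trivially acyclic, the predicted maximum is $1 < 2$, contradicting \cite[Theorem 34]{JoglekarPR16} for the monotone measure $\ghw$.

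The main obstacle is to reconstruct \cite[Definition 33]{JoglekarPR16} faithfully and to verify that it really does contract each component of projected variables to a point, so that the internal triangle $u,v,w$ genuinely disappears from every characteristic hypergraph; the entire discrepancy rests on that single step. A secondary point, needed to make the comparison legitimate, is to confirm that $\gamma$ is a well-formed query of their framework and that the claimed equality is asserted for $\ghw$. I would also remark that this failure is the exact analogue, in the aggregate setting, of the well-known phenomenon (recalled in the footnote on measuring width on both the "core" and the "contraction") that a characterization built from the contracted query alone cannot see the width hidden inside the quantified part; the lower bound is therefore robust, using only that decomposable decompositions restrict "tree decompositions" together with \Cref{lem:cliques-in-bags}.
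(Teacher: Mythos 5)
Your argument collapses at exactly the step you yourself flagged as its main obstacle: the faithful reconstruction of \cite[Definition 33]{JoglekarPR16}. The characteristic hypergraphs do \emph{not} contract a connected component of projected-away variables to a point. As the paper's own computation shows, for the query whose hypergraph $H$ is the clique on $x_1,x_2,x_3,x_4$ with $x_4$ aggregated away, the component $C_1=\{x_4\}$ appears \emph{in full} in $\mathcal{H}_1^+=\{\{x_1,x_4\},\{x_2,x_4\},\{x_3,x_4\},\{x_1,x_2,x_3\}\}$, together with every edge incident to it; the only summarization is the addition of one hyperedge on the neighborhood $N(C_1)=\{x_1,x_2,x_3\}$, and it is only the output-side hypergraph $\mathcal{H}_0$ that omits the component. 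Redoing your example under this definition, with $q=\{A(x,u),B(u,v),C(v,w),D(w,u)\}$ and output $\{x\}$: the aggregated component is $\{u,v,w\}$ with neighborhood $\{x\}$, so $\mathcal{H}_0=\{\{x\}\}$ has width $1$, while $\mathcal{H}_1^+=\{\{x,u\},\{u,v\},\{v,w\},\{w,u\},\{x\}\}$ still contains your triangle and has generalized hyperwidth $2$. The predicted maximum is therefore $2$, which coincides with your own lower bound of $2$ and with your cross-check $\pghw(\gamma)=2$; on your instance the equality claimed by \cite[Theorem 34]{JoglekarPR16} holds and nothing is refuted. This is not repairable by tweaking the query: any width hidden \emph{inside} an aggregated component is seen verbatim by its hypergraph $\mathcal{H}_i^+$, so the Chen--Durand--Mengel analogy you invoke cuts the other way --- the $\mathcal{H}_i^+$'s exist precisely to capture the quantified parts that a contraction alone would forget.

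The paper's counterexample exploits the opposite, and genuine, blind spot: the neighborhood $N(C_i)$ is handed to \emph{both} characteristic hypergraphs as a \emph{single} hyperedge, so a set of output variables that is expensive to cover with the edges of $H$ becomes coverable at cost $1$ there. Hence the paper places the clique on the output side rather than inside the aggregated part: with $H$ the clique on four vertices and $\alpha=(x_4,\oplus)$, the characteristic hypergraphs have widths $1$ and $2$, while any decomposable decomposition, being in particular a tree decomposition of $H$, must pay for covering the whole clique inside a single bag (\Cref{lem:cliques-in-bags}) using only the binary edges of $H$. One quantitative caveat if you rework the argument along these lines: two disjoint binary edges already cover four vertices, and in general $\ghw(K_n)=\lceil n/2\rceil$, so to obtain an incontestable gap over the characteristic value $2$ one should take a clique on at least five vertices --- this is what the paper's closing remark, that the difference can be made arbitrarily large by enlarging the clique, is pointing at.
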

    \begin{proof}
        The theorem fails in the sense that
        $$
        \max\set{ \ghw(G) : G\text{ is a "characteristic hypergraph" of }(H,\alpha) } \geq \ghw(H).
        $$
        is not necessarily true. 
    
        Concretely, take the AJAR query $(H,\alpha)$ such that $H$ is a clique on 4 vertices $x_1, x_2, x_3, x_4$ ("ie", $H = \set{ \set{x_i, x_j} : i  \neq j }$) and $\alpha = (x_4,\oplus)$. The "characteristic hypergraphs" are just two:
        \begin{align*}
             \+H_0 &= \set{ \set{x_1, x_2}, \set{x_2,x_3}, \set{x_1, x_3}, \set{x_1, x_2, x_3} } \hspace{1em} \text{and}\\
             \+H_1^+ &= \set{ \set{x_1, x_4}, \set{x_2, x_4}, \set{x_3,x_4}, \set{x_1, x_2, x_3} }.
        \end{align*}
    
    Hence, 
    \begin{align*}
        max\set{ \ghw(\+H) : \+H \text{ "characteristic hypergraph" of } G } =\\
         =\max\set{ \ghw(\+H_0), \ghw(\+H_1^+) } = \max\set{ 1, 2 } = 2
    \end{align*}
         
    but
       $\ghw(H) = 3$.
    
    Further, this difference can be made arbitrarily large (by having a clique on more variables).     
    \end{proof}

While in principle this would jeopardize the subsequent sections \S5.2, \S5.3, \S5.4 of \cite{JoglekarPR16}, we observe that due to our result of \Cref{thm:width-augmented-char}, a similar statement to that of 
    \cite[Corollary 35]{JoglekarPR16} can be established, and thus we do not see any obstacle to obtaining the results contained in  \cite[\S5.2--\S5.4]{JoglekarPR16}.

\end{toappendix}

\section{Evaluation of Aggregate Queries}
\label{sec:evalAgg}

In the previous section we have seen that "project-connex" decompositions can be computed effectively and that they can be assumed to be small. 
We now study the complexity of the "evaluation problem" for "semiring aggregate queries" by means of exploiting such "tree decomposition". To this end, we will use the standard approach of the semi-join, although it needs to be adapted to the "semiring" at hand. "Wlog" we assume there are no constants in queries (see \Cref{sec:constants} for details).

\begin{toappendix}
\begin{assumption}
    We will henceforth assume for simplicity that there are no constants in our queries. This is without loss of generality, see \Cref{sec:constants} for details.
\end{assumption}    
\end{toappendix}

\begin{toappendix}
    \subparagraph*{Semiring Semi-join}
\end{toappendix}
We shall first define the semi-join for any semiring and establish the complexity bound for computing it.
\AP
Given a "commutative semiring" $\aK = (K,\oplus,\otimes)$, a  ""$\aK$-semi-join"" is a "semiring $K$-aggregate query" of the form $\pi^{\oplus}_X {\Join^\otimes} \set{R(\bar x), S(\bar y)}$ for $X = \vars(\bar x)$, which we will denote ``$R(\bar x) \intro*\sjoin[\aK] S(\bar y)$'', where $\bar x$ and $\bar y$ may contain common variables. 
Observe that the classical Yannakakis ""semi-join"" can be seen as the "$\aK$-semi-join" on the "trivial semiring" $\aK = (\set{1},\oplus,\otimes)$. The following bound is obtained by extending the standard algorithm for semi-joins. 
\begin{toappendix}
    \AP
    For a "commutative monoid" $\oplus$ over $K$ and a mapping $g : X \to K$, we write $\reintro*\aboplus{x \in X}{g(x)}$ to denote $g(c_1) \oplus \dotsb \oplus g(c_\ell)$ where $c_1, \dotsc, c_\ell$ is the list without repetitions of all the elements of $X$ (the order is irrelevant due to commutativity). In particular, $\aboplus{x \in \emptyset}{g(x)} = 0_\oplus$, where $0_\oplus$ is the identity element of $\oplus$. We do likewise for $\intro*\abotimes{x \in X}{g(x)}$.
\end{toappendix}

\begin{lemmarep}\AP\label{lem:semijoin-complexity}
    For every "commutative semiring" $\aK = (K,\oplus,\otimes)$, the evaluation of a "$\aK$-semi-join" on a "$K$-annotated database" $(D,\ann)$ is in $\+O(\maxSize{D})$ (in combined complexity and under "RAM" model). It further admits "constant delay enumeration after a linear pre-processing".
\end{lemmarep}
\begin{proof}
    This is in $\+O(\maxSize{D} \cdot \log(\maxSize{D}))$ for Turing machines (by first appropriately sorting the relations). However, the "semi-join" is in $\+O(\maxSize{D})$ under the "RAM" regime.

    Suppose we are given $R(\bar x \bar y) \intro*\sjoin[\aK] S(\bar y \bar z)$, where $\vars(\bar x) \cap \vars(\bar z) = \emptyset$ (and note that we can assume this form without loss of generality).
    We first compute an array $A$ of the dimension of $\bar y$ such that $A[\bar t]$ contains
    \AP
    $\intro*\aboplus{S(\bar t \bar t') \in D}{\ann(S(\bar t \bar t'))}$.
    For this, we iterate over all the $S$-"facts" $S(\bar t \bar t')$, at each iteration we check if $A[\bar t]$ contains $\bot$.\footnote{This implies iterating over each constant of the fact $S(\bar t \bar t')$ and adding the corresponding index to $A$.}
     If so, then write $A[\bar t] = \ann(S(\bar t \bar t'))$; otherwise write $A[\bar t] = A[\bar t] \oplus \ann(S(\bar t \bar t'))$. Each element of $A[\bar t]$ can be accessed in time $O(\dim(\bar y))$ and thus the process is in $\+O(n_S \cdot \dim(\bar y)) \leq \+O(\maxSize{D})$, where $n_S$ is the number of $S$-"facts".

    Now we use an array $B$ of the dimension of $\bar x \bar y$ to encode the evaluation output.
    We iterate over each $R$-"fact" $R(\bar t' \, \bar t)$ of $D$ and we obtain $c = A[\bar t]$. If $c \neq \bot$, we write $B[\bar t' \bar t] = \ann(R(\bar t' \, \bar t)) \otimes c$. 
    This second process takes 
    $\+O(n_R \cdot (\underbrace{\dim(\bar y)}_{\text{access to $A[\bar t]$}} + \underbrace{\dim(\bar x \bar y)}_{\text{access to $B[\bar t' \bar t]$}})) \leq \+O(\maxSize{D})$,
    where $n_R$ is the number of $R$-"facts".

    The final output is described by $\set{((\bar x \bar y \assto \bar t' \bar t), n) : B[\bar t' \bar t] = n \neq \bot}$. 
    The output admits "constant delay enumeration after a linear pre-processing", by simply adding, as part of the pre-processing, another iteration of the $R$-"facts" $R(\bar t' \, \bar t)$ of $D$, checking if $B[\bar t' \bar t] \neq \bot$, and if so adding the pair $((\bar x \bar y \assto \bar t' \bar t), B[\bar t' \bar t])$ to a chained list $\+L$ (initially empty).
    The output can now be enumerated by writing, one by one, all the elements of $\+L$.
\end{proof}

We can now show how the algorithm works on "aggregate queries" which are "project-connex-acyclic@project-connex generalized hyperwidth" yielding bounds on \emph{combined complexity} evaluation.

\begin{lemma}\AP\label{lem:evalAggQuery:width1}
    Given a "$K$-annotated database" $(D,\ann)$, a %
    "semiring $K$-aggregate query" 
    $\gamma = \pi_{X_n}^{\oplus_n} \dotsb \pi_{X_1}^{\oplus_1} {\Join^\otimes} q$,
    and a "project-connex" "join tree" $(T,\bagmap,\atommap)$ thereof, 
    the "evaluation" result $\evalaggD{\gamma}{D,\ann}$ admits a "constant delay enumeration after a linear pre-processing".
    If $X_n=\emptyset$, the evaluation result can be obtained in $\+O\big(\sizeofgamma \cdot \maxSize{D} \big)$ in "combined complexity".
\end{lemma}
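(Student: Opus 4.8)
The plan is to run a semiring version of the Yannakakis full reducer, processing the projections from the innermost $X_1$ to the outermost $X_n$ and folding away one "witness subtree" at a time. Set $T_0 \eqdef T$, $X_0 \eqdef \vars(q)$, and let $T_1 \supseteq \dotsb \supseteq T_n$ be the "witness subtrees" of the given "project-connex" "join tree". I would maintain the invariant that after treating the first $i$ projections I hold, at each node $v \in \vertex{T_i}$, a $K$-annotated relation $r_v$ over $\bagmap(v) \subseteq X_i$ such that the annotated natural join ${\Join^\otimes}\set{r_v : v \in \vertex{T_i}}$ equals $\evalaggD{\pi_{X_i}^{\oplus_i}\dotsb\pi_{X_1}^{\oplus_1}{\Join^\otimes}q}{D,\ann}$. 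The base case $i=0$ is immediate, taking $r_v$ to be the facts of the atom $\phi(v)$ annotated by $\ann$.

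For the step $i-1 \to i$ I would use that $\vertex{T_i} \subseteq \vertex{T_{i-1}}$, so $\vertex{T_{i-1}} \setminus \vertex{T_i}$ is a forest dangling off $T_i$. I fold each dangling part into $T_i$ bottom-up, repeatedly replacing the annotation of a parent $v$ by $r_v \sjoin[\aK_i] r_w$, the "$\aK$-semi-join" for the "commutative semiring" $\aK_i = (K,\oplus_i,\otimes)$, where $w$ is a leaf child being eliminated; each such semi-join costs $\+O(\maxSize{D})$ by \Cref{lem:semijoin-complexity}. The variables summed out at this stage are exactly those of $X_{i-1} \setminus X_i$, and this is where the "project-connex" structure is essential: no bag of $T_i$ mentions them, so by the "connectivity condition" each is confined to a single dangling component and is genuinely eliminated when that component is folded, neither prematurely nor left pending. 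Distributivity of $\otimes$ over $\oplus_i$ (the "semiring $K$-aggregate query" hypothesis) then justifies computing the $\oplus_i$-marginalization of the $\otimes$-join onto $X_i$ by this variable-elimination cascade, re-establishing the invariant with $T_i$ viewed as a "join tree" over $X_i$.

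Since each edge of $T$ is folded at most once over all stages and, by \Cref{lem:bound-decompositions-agg-queries}, we may assume $|\vertex{T}| \in \+O(\sizeofgamma)$, the folding phase costs $\+O(\sizeofgamma \cdot \maxSize{D})$. If $X_n = \emptyset$, then $T_n$ is empty and the final stage collapses all of $T_{n-1}$ into one annotated relation which $\pi^{\oplus_n}_\emptyset$ sums to a single scalar $m$, giving the single output $(\emptyset,m)$ in $\+O(\sizeofgamma \cdot \maxSize{D})$ combined complexity; this settles the second claim. For the enumeration claim, the invariant leaves a "join tree" $T_n$ over $X_n$ whose annotated join is $\evalaggD{\gamma}{D,\ann}$, with each $X_n$-assignment occurring at most once. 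I would then run a two-directional filtering pass of Boolean semi-joins (the "$\aK$-semi-join" over the "trivial semiring", which discards dangling tuples but leaves annotations untouched) to make the instance globally consistent, again in $\+O(\sizeofgamma \cdot \maxSize{D})$. On a globally consistent "acyclic" instance the classical "constant delay enumeration after a linear pre-processing" applies; I carry the annotation along, accumulating the $\otimes$-product of the visited node annotations, so that each emitted pair $(f,m)$ has $m = \bigotimes_{v} r_v(f|_{\bagmap(v)})$. Since per-answer work is $\+O(\sizeofgamma)$, i.e. constant in "data complexity", this is constant delay after a linear pre-processing.

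The step I expect to be the main obstacle is the correctness of the level-by-level folding: one must verify that the "project-connex" nesting together with the witness-subtree and "connectivity condition" properties confines each block of eliminated variables to exactly the subtree being folded (so nothing is summed too early or left unsummed), and that commutative-semiring distributivity legitimizes pushing each successive $\oplus_i$-marginalization through the $\otimes$-join. The remaining complexity and enumeration bookkeeping is then routine.
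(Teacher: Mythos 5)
Your proposal is correct and takes essentially the same approach as the paper's own proof: both run a semiring-parameterized Yannakakis pass that repeatedly folds a leaf into its parent via the $\aK$-semi-join for $\aK=(K,\oplus_i,\otimes)$ (each fold costing $\+O(\maxSize{D})$ by \Cref{lem:semijoin-complexity}), relying on the project-connex structure and the connectivity condition to ensure that exactly the right variables are $\oplus_i$-eliminated, and then both finish by enumerating the residual acyclic full join with constant delay while $\otimes$-accumulating annotations. The differences are purely organizational: you schedule the folds stage-by-stage along the nested witness subtrees $T_n \subseteq \dotsb \subseteq T_1$, whereas the paper drives a single loop by the innermost remaining projection (folding any leaf whose atom is not contained in $X_1$, and dropping that projection once it becomes trivial), and you inline the full reducer plus enumeration where the paper invokes the known full-CQ constant-delay enumeration algorithm as a black box with annotations stored in an extra column.
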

\begin{proof}
    This is essentially "Yannakakis algorithm", with the exception that now we need to choose the "semiring" $\aK$ for the "$\aK$-semi-join" at each application. %
    Let $\phi: \vertex{T} \to q$ be the "witnessing bijection" of the "join tree".
    If $X_1 = \vars(q)$, then $\gamma$ is "equivalent" to $\pi_{X_n}^{\oplus_n} \dotsb \pi_{X_2}^{\oplus_2} {\Join^\otimes} q$. So let us suppose there is a leaf $v$ of $T$ and a corresponding atom $\alpha_v = \phi(v)$ such that
    such that $\vars(\alpha_v) \not\subseteq X_1$. Let $v'$ be the parent of $v$ and $\alpha_{v'} = \phi(v')$ the corresponding "atom" of $q$. We now perform a "$\aK$-semi-join" for $\aK = (K,\oplus_{1},\otimes)$ on $\alpha_{v'} \sjoin[\aK] \alpha_v$, we update accordingly the relation of $\alpha_{v'}$, and we remove $\alpha_v$ from the "aggregate query" and $v$ from $T$. 
    
    More concretely, if $\alpha_{v'} = R(\bar x)$, we define $(D',\ann')$ to be like $(D,\ann)$ for all relations except for $R$. For the relation $R$ we add,
    for every $(f,n) \in \evalaggD{(\alpha_{v'} \sjoin[\aK] \alpha_v)}{D,\ann}$, the "fact" $R(f(\bar x))$ to $D'$ and define $\ann'(R(f(\bar x))) = n$. By \Cref{lem:semijoin-complexity} this can be done in $\+O(\maxSize{D})$.

    We can now eliminate one atom from $\gamma$, obtaining $\gamma' = \pi_{X_n}^{\oplus_n} \dotsb \pi_{X_1}^{\oplus_1} {\Join^\otimes} (q \setminus \set{\alpha_v})$.
    We finally have
        $\evalaggD{\gamma}{D,\ann} = \evalaggD{\gamma'}{D',\ann'}$.

    We iterate the argument above for $\gamma'$ and $(D',\ann')$, and the result of removing $v'$ from $T$, which satisfies all hypotheses.
    We are finally left with a query $\gamma$ of the form (a) $\gamma = \pi_{X}^{\oplus} {\Join^\otimes} \set{\alpha}$ where $\alpha$ is an "atom", or (b) $\gamma = {\Join^\otimes} q$ for some %
    "full CQ" $q$. Let us see how to deal with these.

    \proofcase{(a)} "Wlog" suppose $\alpha = R(\bar x \bar y)$ and $X = \vars(\bar x)$. Let $A$ be an array of dimension $\dimtup(\bar x)$. For each fact $R(\bar t \, \bar t')$ of $D$ compatible%
            \footnote{That is, $\dimtup(\bar t) = \dimtup(\bar x)$, $\dimtup(\bar t') = \dimtup(\bar y)$, and for every $i,j$, if $(\bar x \bar y) [i] = (\bar x \bar y) [j]$ then $(\bar t  \bar t')[i] = (\bar t \bar t')[j]$.} 
    with $\bar x \bar y$ with "annotation" $\ann(R(\bar t \, \bar t')) = n$, we test if $A[\bar t] = \bot$. If so, we write $A[\bar t] = n$, otherwise we write $A[\bar t] = A[\bar t] \oplus n$. The resulting array $A$ represents the set of "annotation" answers.
    Now we do a second iteration, and for each compatible "facts" $R(\bar t \, \bar t')$ of $D$ we add the pair $(\bar x \assto \bar t,A[\bar t])$ to a linked list $\+L$ (initially empty).
    By iterating the list $\+L$ we can then obtain the "constant delay enumeration@constant delay enumeration after a linear pre-processing" of the output.

    Observe that in the case $X=\emptyset$, $\+L$ will contain just one element and the whole process will be bounded by $\+O(\sizeofq \cdot \maxSize{D})$.

    \proofcase{(b)} Suppose $q = \set{R_1(\bar x_1), \dotsc, R_s(\bar x_s)}$. This is essentially the known algorithm of "constant delay enumeration after a linear pre-processing" for "full CQs" \cite[Theorem~21]{BaganDG07}%
        \footnote{See \cite[\S 4.1]{berkholz2020constant} for a clear and didactic presentation of its proof.} 
    with the addition of $\otimes$-multiplying the "annotations". Such an addition to the algorithm is very easy; in fact, it is not even necessary to revisit the algorithm. 
    It can be implemented by extending each "relation name" $R_\ell$ with an extra column and each "atom" $R_\ell(\bar x_\ell)$ with an extra variable $R_\ell(\bar x_\ell y_\ell)$. We replace "facts" $R_\ell(\bar t)$ in $D$ with $R_\ell(\bar t \, n)$, for $n= \ann(R_\ell(\bar t))$. 
    Now we can use the enumeration for standard "full CQs" as a ``black box'': each time the algorithm outputs an "assignment" $(\bar x_1y_1 \dotsb \bar x_s y_s \assto \bar t_1 n_1 \dotsb \bar t_s n_s)$, we rather output the pair $((\bar x_1 \dotsb \bar x_s \assto \bar t_1 \dotsb \bar t_s), (n_1 \otimes \dotsb \otimes n_s))$.
\end{proof}
We can now simply reduce the general case to the previous one in a rather standard way, by pre-computing the "atoms" on the "bags" and giving them fresh "relation names".
\begin{theoremrep}\AP\label{thm:complexity-aggregate-evaluation}
    Given a "$K$-annotated database" $(D,\ann)$, a "semiring $K$-aggregate query" $\gamma = \pi_{X_n}^{\oplus_n} \dotsb \pi_{X_1}^{\oplus_1} {\Join^\otimes} q$, and a "project-connex" "tree decomposition" of "width" $k$ thereof, the "evaluation" result $\evalaggD{\gamma}{D,\ann}$ admits a "constant delay enumeration after a polynomial 
    $\+O\big(\maxSize{D}^k\big)$ pre-processing".
    If $X_n=\emptyset$, the 
    evaluation 
    is
    in 
    $\+O\big(\sizeofgamma \cdot \maxSize{D}^k\big)$
    in "combined complexity".
\end{theoremrep}
\begin{proof}
    Let $(T,\bagmap,\atommap)$ be a "width" $k$ "project-connex" "tree decomposition"  of $\gamma$. In light of \Cref{lem:bound-decompositions-agg-queries}, we can assume that $|\vertex{T}| \in \+O(\sizeofq)$.
    Let 
    $\atommaplab : \vertex{T} \to \pset{q}$ be an "atom labeling", and  let $f$ be any mapping $f: q \to \vertex{T}$ such that $\vars(\alpha) \subseteq \bagmap(f(\alpha))$ for every $\alpha \in q$ ---it exists due to the "completeness condition".

    Take any vertex $v \in \vertex{T}$ and consider the result of evaluating  $\atommaplab(v)$ (seen as a "full CQ") on $D$, and using this to obtain the set $H_v$ of all pairs $(h|_{\bagmap(v)},n)$ where $h : \atommaplab(v) \homto D$ and 
    $n = \abotimes{\alpha \in f^{-1}(v)}{\ann(h(\alpha))}$.
    This can be obtained in $\+O(\maxSize{D}^k)$. 

    Let $\bar x_v$ be the tuple associated\footnote{"Ie", $\bar x_v$ is any tuple of "variables" without repetitions such that $\vars(\bar x_v)=\bagmap(v)$.} with $\bagmap(v)$.
    Now we create a "$K$-annotated database" having a relation $R_v$ of "arity" $|\bagmap(v)|$. It has a fact $R_v(\bar t)$ with "annotation" $n$ if $\bar t = h(\bar x_v)$ for some $(h,n) \in H_v$. 

    We do the same for all vertices $v \in \vertex{T}$, obtaining a new "$K$-annotated database" $(D',\ann')$ on a "schema" $\set{R_v : v \in \vertex{T}}$.
    Note that this is done in $\+O(|\vertex{T}| \cdot \maxSize{D}^k) \leq \+O(\sizeofgamma \cdot \maxSize{D}^k)$, and that $\maxSize{D'} = \+O(\maxSize{D}^k)$.
    
    Consider $q'$ to be $\set{R_v(\bar x_v) : v \in \vertex{T}}$ in $\gamma$, and $\gamma'$ to be the result of replacing $q$ with $q'$ in $\gamma$.
    \begin{claim}
        $\evalaggD{\gamma}{D,\ann} = \evalaggD{\gamma'}{D',\ann'}$.
    \end{claim}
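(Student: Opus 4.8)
The plan is to peel the equality down to its innermost layer and prove a clean one-to-one correspondence there. The crucial observation is that $\gamma$ and $\gamma'$ differ only in their base join (${\Join^\otimes}q$ versus ${\Join^\otimes}q'$): they apply the \emph{same} sequence of projection operators $\pi_{X_1}^{\oplus_1},\dotsc,\pi_{X_n}^{\oplus_n}$, and the semantics of each $\pi_{Y}^{\oplus}$ is a function purely of the set of (assignment, annotation) pairs it receives. Since $\vars(q')=\vars(q)$ (every variable lies in some $\bagmap(v)$ by the "completeness condition" and "covering condition", and bags introduce no new variables), the sets $X_i$ are meaningful for both queries. Hence it suffices to prove $\evalaggD{{\Join^\otimes}q}{D,\ann} = \evalaggD{{\Join^\otimes}q'}{D',\ann'}$ as sets of pairs; applying the identical projection stack then yields the claim by a trivial induction on $n$.

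To prove this innermost equality I would first record well-definedness of $\ann'$. Choosing $f$ so that $\alpha \in \atommaplab(f(\alpha))$ for every atom (possible since, without loss of generality, the "atom labeling" uses each atom of $q$), every $\alpha \in f^{-1}(v)$ satisfies $\vars(\alpha)\subseteq\bagmap(v)$; therefore the value $n = \bigotimes_{\alpha\in f^{-1}(v)}\ann(h(\alpha))$ depends only on $h|_{\bagmap(v)}$, so $\ann'(R_v(\bar t))$ is unambiguous. The correspondence is then simply the identity on assignments to $\vars(q)=\vars(q')$. For the forward inclusion, if $h\colon q \homto D$ then for each $v$ the restriction of $h$ to $\atommaplab(v)\subseteq q$ is a homomorphism into $D$, so $R_v(h(\bar x_v))\in D'$ and $h$ is a homomorphism of $q'$. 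For the backward inclusion, if $g\colon q'\to D'$, then for an atom $\alpha$ with $v=f(\alpha)$ the fact $R_v(g(\bar x_v))\in D'$ witnesses some $h\colon\atommaplab(v)\homto D$ with $h|_{\bagmap(v)}=g|_{\bagmap(v)}$; since $\alpha\in\atommaplab(v)$ and $\vars(\alpha)\subseteq\bagmap(v)$, we get $g(\alpha)=h(\alpha)\in D$, hence $g\colon q\homto D$.

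It remains to match annotations on a corresponding pair $h=g$. Because $\{f^{-1}(v)\}_{v\in\vertex{T}}$ partitions $q$ and $\otimes$ is commutative and associative, $\bigotimes_{\alpha\in q}\ann(h(\alpha)) = \bigotimes_{v}\bigotimes_{\alpha\in f^{-1}(v)}\ann(h(\alpha)) = \bigotimes_{v}\ann'(R_v(h(\bar x_v)))$, which is exactly the annotation assigned to $h$ by ${\Join^\otimes}q'$ over $(D',\ann')$. Thus the two sets of pairs coincide element for element.

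The main obstacle is the backward direction of the correspondence, where one shows that a homomorphism into the bag-relations $D'$ recovers a genuine homomorphism of $q$ into $D$. This is precisely the point where one must use that \emph{every} atom of $q$ is enforced at some bag, i.e.\ that the atom labeling covers all atoms and $f$ respects it ($\alpha\in\atommaplab(f(\alpha))$): otherwise a relation $R_v$ would relax the constraint of an uncovered atom and the equality would fail (the minimal example being two atoms on the same variables of which only one is covered). The annotation bookkeeping must be kept aligned with this same choice of $f$, charging each atom's annotation to exactly one bag, so that no annotation is omitted or counted twice; this is routine once the partition $\{f^{-1}(v)\}_v$ is fixed.
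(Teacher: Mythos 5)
Your overall skeleton matches the paper's: reduce the claim to the equality of the innermost joins $\evalaggD{{\Join^\otimes}q}{D,\ann} = \evalaggD{{\Join^\otimes}q'}{D',\ann'}$ (the projection stacks being identical), observe that the identity on assignments gives a correspondence of homomorphisms, and match annotations through the partition $\set{f^{-1}(v)}_{v}$ using commutativity and associativity of $\otimes$. But there is a genuine gap in your backward direction: you require $f$ to satisfy $\alpha \in \atommaplab(f(\alpha))$ for every $\alpha \in q$, justified by the assertion that \emph{without loss of generality the atom labeling uses each atom of $q$}. This is not a valid WLOG. An atom labeling must satisfy $\atommap(v) = \set{\vars(\alpha) : \alpha \in \atommaplab(v)}$, so $\alpha$ can be placed in $\atommaplab(v)$ only if $\vars(\alpha) \in \atommap(v)$. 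Take $q = \set{R(x,y), S(x)}$ with the width-$1$ decomposition consisting of a single bag $\bagmap(v) = \set{x,y}$, $\atommap(v) = \set{\set{x,y}}$, $\atommaplab(v) = \set{R(x,y)}$: the atom $S(x)$ cannot appear in \emph{any} atom labeling of this decomposition, and forcing it in means enlarging $\atommap(v)$, which changes the width and, worse, changes the construction of $(D',\ann',q')$ that the claim is actually about. Even when every atom does occur in some $\atommaplab(v)$, you additionally need that very vertex to satisfy $\vars(\alpha) \subseteq \bagmap(f(\alpha))$, a compatibility that generalized hypertree decompositions do not guarantee, since covering atoms may have variables outside the bag.

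The paper closes this direction without any constraint on $f$ beyond $\vars(\alpha) \subseteq \bagmap(f(\alpha))$. The mechanism is that $\ann$ is a function with domain $D$, so the value $n = \abotimes{\alpha \in f^{-1}(v)}{\ann(h(\alpha))}$ exists only when $h(\alpha) \in D$ for \emph{every} $\alpha \in f^{-1}(v)$; hence membership of a pair in $H_v$, and thus of a fact in $D'$, already enforces all atoms charged to $v$ by $f$, not merely those of $\atommaplab(v)$. Given $h : q' \homto D'$ and $\alpha \in q$ with $v = f(\alpha)$, the witnessing pair in $H_v$ agrees with $h$ on $\bagmap(v) \supseteq \vars(\alpha)$, so $h(\alpha) \in D$ follows. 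This also defuses your claimed minimal counterexample (two atoms on the same variables, only one covered): the uncovered atom is enforced through the annotation lookup, so the equality does \emph{not} fail there. In short, your instinct that the backward direction is the crux is right, but the condition you impose on $f$ is both unnecessary and, in general, impossible to realize, while the paper's argument goes through for an arbitrary legal $f$.
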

    \begin{nestedproof}[Proof of Claim]
        We show that $\evalaggD{{\Join^\otimes}q}{D,\ann} = \evalaggD{{\Join^\otimes}q'}{D',\ann'}$, from which the statement follows. %
        
        On the first hand, 
        observe that, for every mapping $h$, we have $h : q \homto D$ "iff" $h: q' \homto D'$.

        On the other hand, we have
        \begin{align*}
            \abotimes{\alpha \in q}{\ann(h(\alpha))}
            = 
            \abotimes{v \in \vertex{T}}{\abotimes{\alpha \in f^{-1}(v)}{\ann(h(\alpha))}} 
            = 
            \abotimes{v \in \vertex{T}}{\ann'(h(R_v(\bar x_v)))} 
            = 
            \abotimes{\alpha \in q'}{\ann'(h(\alpha))}
        \end{align*}
        and hence $(h,n) \in \evalaggD{{\Join^\otimes}q}{D,\ann}$ "iff" $(h,n) \in  \evalaggD{{\Join^\otimes}q'}{D',\ann'}$.
    \end{nestedproof}

    Observe that $(T,\set{ v \mapsto \bagmap(v)}_{v \in \vertex{T}},\set{ v \mapsto \set{\bagmap(v)}}_{v \in \vertex{T}})$ is a "project-connex" "join tree" for $\gamma'$ with a "witnessing bijection" $\set{v \mapsto R_v(\bar x_v) : v \in \vertex{T}}$. We can now invoke \Cref{lem:evalAggQuery:width1} to conclude.
\end{proof}

For $k=1$ the actual decomposition is not needed to grant linear-time evaluation.
\begin{theorem}\AP\label{thm:complexity-aggregate-acyclic-evaluation}
    Given a "$K$-annotated database" $(D,\ann)$, a "semiring $K$-aggregate query" $\gamma=\pi_{X_n}^{\oplus_n} \dotsb \pi_{X_1}^{\oplus_1} {\Join^\otimes} q$ with $\pghw(\gamma)=1$, the "evaluation" result $\evalaggD{\gamma}{D,\ann}$ admits a "constant delay enumeration after a linear pre-processing".
    If $X_n=\emptyset$, the evaluation result can be obtained in 
    $\+O\big(n \cdot \sizeofq + \sizeofq \cdot \maxSize{D}\big)$
    in "combined complexity".
\end{theorem}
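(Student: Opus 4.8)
The plan is to reduce the statement to the two facts already available for "width" $1$ decompositions: \Cref{cor:lineartime-width1-pc-decompositions}, which produces a decomposition, and \Cref{lem:evalAggQuery:width1}, which evaluates and enumerates along one. First I would apply \Cref{cor:lineartime-width1-pc-decompositions}: since by hypothesis $\pghw(\gamma)=1$, it produces a "width" $1$ "project-connex" "tree decomposition" of $\gamma$ together with its "witness subtrees" in time $\+O(n\cdot\sizeofq)$. As a "width" $1$ "tree decomposition" is a "join tree" (\Cref{lem:jointree}), this is exactly a "project-connex" "join tree", that is, the input required by \Cref{lem:evalAggQuery:width1}.

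Feeding this "join tree" into \Cref{lem:evalAggQuery:width1} immediately settles the first claim: $\evalaggD{\gamma}{D,\ann}$ admits a "constant delay enumeration after a linear pre-processing", the decomposition itself being computed in the query-linear time $\+O(n\cdot\sizeofq)$ preceding the data-linear pre-processing of the lemma.

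For the "combined complexity" bound when $X_n=\emptyset$, the subtle point is that invoking \Cref{lem:evalAggQuery:width1} verbatim only gives $\+O(\sizeofgamma\cdot\maxSize{D})$, and $\sizeofgamma=\sizeofq+\sum_{i\in[n]}|X_i|$ may be as large as $\+O(n\cdot\sizeofq)$ (the sets $X_i$ are strictly nested, so each has at most $|\vars(q)|\le\sizeofq$ variables), which would inflate the factor multiplying $\maxSize{D}$ by $n$. To obtain the sharper $\+O(n\cdot\sizeofq+\sizeofq\cdot\maxSize{D})$ I would split the running time of the algorithm of \Cref{lem:evalAggQuery:width1} into a query-only part and a data-dependent part. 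The only operations touching $D$ are the $\aK$-semi-joins and the final single-atom aggregation; each semi-join costs $\+O(\maxSize{D})$ by \Cref{lem:semijoin-complexity}, and since every semi-join permanently deletes one "atom" (a leaf of the "join tree", which has exactly $|q|$ nodes), the number of semi-joins over all $n$ projection phases is at most $|q|-1=\+O(\sizeofq)$. Hence the data-dependent cost is $\+O(\sizeofq\cdot\maxSize{D})$, while all remaining work --- reading the sets $X_i$ and deciding at each peeling step which monoid $\oplus_i$ to use --- does not depend on $D$ and is absorbed into the $\+O(n\cdot\sizeofq)$ already spent on the decomposition.

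The step I expect to be the main obstacle is precisely this accounting: making rigorous that the overhead of switching between the $n$ nested projection phases is genuinely query-only and never multiplies the $\maxSize{D}$ factor. Concretely, one must precompute once, from the "join tree" and its "witness subtrees", the assignment of each leaf "atom" to the innermost projection set $X_i$ not containing its variables, and check that this preprocessing fits within $\+O(n\cdot\sizeofq)$; only then does the clean separation into $\+O(n\cdot\sizeofq)$ query time plus $\+O(\sizeofq\cdot\maxSize{D})$ data time go through.
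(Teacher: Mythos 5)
You correctly spotted a real subtlety that the paper's own two-line proof glosses over: citing the evaluation theorem as a black box only yields $\+O(\sizeofgamma\cdot\maxSize{D})$, which is weaker than the claimed $\+O(n\cdot\sizeofq+\sizeofq\cdot\maxSize{D})$, so the query-only and data-dependent costs must be separated. However, your repair has a genuine gap at its pivot: what \Cref{cor:lineartime-width1-pc-decompositions} produces is a width-$1$ \emph{project-connex tree decomposition}, not a \emph{project-connex join tree}, and \Cref{lem:evalAggQuery:width1} requires the latter. A join tree comes with a witnessing bijection between tree vertices and the atoms of $q$, each bag being the full variable set of its atom; the decomposition built in \Cref{lem:char-project-connex-acyclic} instead has roughly $(n+1)\cdot|\atoms(q)|$ vertices (one relativized copy of a join tree per projection), whose bags are in general proper subsets of atom variable sets and correspond to no relation of $D$. \Cref{lem:jointree} cannot close this gap: it is an existence equivalence for CQs, and says neither that the decomposition at hand \emph{is} a join tree, nor that some join tree of $q$ carries the witness subtrees needed for project-connexity. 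In fact a project-connex join tree over the original atoms may simply not exist: for $\gamma_0 = \pi^{\oplus}_{\set{x}} {\Join^\otimes} \set{R(x,y)}$ we have $\pghw(\gamma_0)=1$, yet the unique join tree of $\set{R(x,y)}$ has the single bag $\set{x,y}$, which admits no witness subtree for $\set{x}$. Consequently your semi-join accounting (``at most $|\atoms(q)|-1$ semi-joins, since the join tree has exactly $|\atoms(q)|$ nodes'') presupposes a bijection you do not have, and on the actual decomposition there is nothing to semi-join at all.

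What is missing is precisely the content of \Cref{thm:complexity-aggregate-evaluation}, which the paper routes through: first shrink the decomposition to $\+O(\sizeofq)$ vertices via \Cref{lem:bound-decompositions-agg-queries} (an essential step, since materializing all $\Theta(n\cdot|\atoms(q)|)$ original bags would already cost $\+O(n\cdot\sizeofq\cdot\maxSize{D})$ data time), and then materialize each remaining bag as a fresh relation $R_v$ in $\+O(\sizeofq\cdot\maxSize{D})$ time; this yields a genuine project-connex join tree over the new schema $\set{R_v : v \in \vertex{T}}$, to which \Cref{lem:evalAggQuery:width1} applies. Once that materialization step is inserted, your separation into $\+O(n\cdot\sizeofq)$ query-only time plus $\+O(\sizeofq\cdot\maxSize{D})$ data-dependent time goes through and justifies the stated bound, as well as the constant-delay enumeration claim.
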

\begin{proof}
    We first obtain, in $\+O(n \cdot \sizeofq)$, the needed "project-connex" "tree decomposition" for $\gamma$ by \Cref{cor:lineartime-width1-pc-decompositions}, and we then apply the previous \Cref{thm:complexity-aggregate-evaluation}.
\end{proof}

Finally, we observe that in fact any class of bounded "project-connex" "generalized hyperwidth" has tractable evaluation, even in the absence of decompositions.
\begin{theoremrep}\AP\label{thm:bounded-pghw-ptime}
    For every class $\+C$ of "aggregate queries" such that $\pghw(\+C)$ is bounded, $\+C$ 
    admits a "constant delay enumeration after a polynomial pre-processing".
\end{theoremrep}
\begin{proof}
    Suppose $k$ is a bound for the "project-connex generalized hyperwidth" of $\+C$ and let $\gamma \in \+C$.
    Then we know that $\ghw(\aug{\gamma})$ is at most $k$ by \Cref{thm:width-augmented-char}.
    In turn, by \Cref{lem:ghw:hw:bound} we know  
    that the "hyperwidth" is at most $3k+1$. We can then compute a "hypertree decomposition" for $\aug\gamma$ in polynomial time by \Cref{lem:recognizability:hw:ptime}.
    Finally, from such decomposition we can build, in polynomial time, a "project-connex" "tree decomposition" for $\gamma$ of "width" at most $3k+1$ again by \Cref{thm:width-augmented-char}. With such decomposition at hand, we can now apply \Cref{thm:complexity-aggregate-evaluation} to compute the evaluation of $\gamma$.
\end{proof}

\section{The Case of Counting CQs}
\label{sec:countingcq}

\AP
Given a "CQ" $q(X)$ we define the numerical query $\intro*\counting{q(X)} : \DBs[] \to \Nat$ as the query assigning, for every "database" $D$ the number of elements of $\evalqD{q(X)}D$, denoted by $\intro*\evalCounting{q(X)}{D}$. We call such queries ""counting CQs"".
Given a class $\+C$ of "CQs", let 
$\intro*\counting{\+C} \eqdef \set{\counting{q(X)} : q(X) \in \+C}$.
Counting queries have received considerable attention in the literature.
In general, $\evalPb{\counting{\text{"CQ"}}}$ is "shNP"-complete \cite[Proposition 7]{DBLP:conf/sat/BaulandCCHV04}, and it is "shP"-complete for the restricted class of "full" "CQs", as a generalization of $\#$-SAT ---indeed, it is "shP"-hard even for "acyclic" "CQs" with a single quantified variable \cite[Theorem 4]{PichlerS13}.

A major result known for "counting CQs" is the dichotomy of \cite{DurandMengel15} identifying exactly which classes of queries can be evaluated in polynomial time.  This was later sharpened into a trichotomy in the parameterized complexity setting by \cite{DBLP:conf/icdt/ChenM15}, offering a better understanding of the computational landscape.
Since "counting CQs" are a special form of "aggregate queries", we begin by revisiting the concrete bounds on combined complexity for evaluation established by the simple algorithms discussed earlier. 
We will then show an important observation: the polynomial-time tractability criterion of \cite{DurandMengel15,DBLP:conf/icdt/ChenM15} coincides with the class being of bounded "project-connex" width. Hence, a class $\class$ of "counting CQs" is tractable "iff" $\pghw(\class) < \infty$.

Evaluating $\counting{q(X)}$ on a "database" $D$ is equivalent to evaluating the "semiring $\Nat$-aggregate query" $\pi_{\emptyset}^{+}\pi_X^{\max} {\Join^{\times}} q$ on the "$\Nat$-annotated database" $(D,\oneann)$, where remember $\oneann$ is the $1$-constant function and $\times$, $\max$, $+$ are the multiplication, maximum and sum of natural numbers, respectively.
Observe also that the "project-connex" "generalized hyperwidth" of $\gamma$ is the same as the "free-connex" "generalized hyperwidth" of $q(X)$.
We can therefore apply the upper bounds we showed in previous section,\footnote{These combined complexity bounds are closely related to the data complexity bounds described in \cite[Theorem 4.6]{berkholz2020constant}, but our results are not direct corollaries.}
and further generalize tractability for any semantically bounded "free generalized hyperwidth" class.
\begin{toappendix}
\begin{theorem}[Corollary of \Cref{thm:complexity-aggregate-acyclic-evaluation}]\AP\label{thm:count-complexity-width1}
    Given a "database" $D$, and "free-connex" "acyclic" "counting CQ" $\counting{q(X)}$ ("ie", $\fghw(q(X))=1$), we can compute the "evaluation" $\evalCounting{q(X)}{D}$ in $\+O\big(\sizeofq \cdot \maxSize{D}\big)$.
\end{theorem}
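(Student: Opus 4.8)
The plan is to derive the statement directly from \Cref{thm:complexity-aggregate-acyclic-evaluation}. As observed just above, computing $\evalCounting{q(X)}{D}$ is the same as evaluating the "aggregate query" $\gamma \defeq \pi_{\emptyset}^{+}\pi_X^{\max} {\Join^{\times}} q$ on the "$1$-initialized" "$\Nat$-annotated database" $(D,\oneann)$: its "evaluation" $\evalaggD{\gamma}{D,\oneann}$ is the singleton $\set{(\emptyset, n)}$ with $n=\evalCounting{q(X)}{D}$. Since both $(\Nat,+,\times)$ and $(\Nat,\max,\times)$ are "commutative semirings", $\gamma$ is a "semiring $\Nat$-aggregate query", so the hypotheses of \Cref{thm:complexity-aggregate-acyclic-evaluation} are indeed available.

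First I would record the two structural facts that are needed. The query $\gamma$ has exactly $n=2$ projections, with $X_1 = X$ and $X_2 = \emptyset$; in particular its outermost projection set is empty, $X_n=\emptyset$. And, since the "project-connex generalized hyperwidth" of $\gamma$ coincides with the "free generalized hyperwidth" of $q(X)$, the assumption $\fghw(q(X))=1$ yields $\pghw(\gamma)=1$, so $\gamma$ is "project-connex"-acyclic.

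Then I would simply invoke \Cref{thm:complexity-aggregate-acyclic-evaluation} on $\gamma$ and $(D,\oneann)$. Because $X_n=\emptyset$, that theorem produces the answer in "combined complexity" $\+O\big(n\cdot\sizeofq + \sizeofq\cdot\maxSize{D}\big)$; substituting the constant $n=2$ collapses this to $\+O\big(\sizeofq + \sizeofq\cdot\maxSize{D}\big)=\+O\big(\sizeofq\cdot\maxSize{D}\big)$, which is exactly the claimed bound. There is essentially no obstacle here: the only things to verify are that $\gamma$ meets the "semiring" and project-connex-acyclicity requirements of the invoked theorem, and that the number of projections is the fixed constant $2$, so that the $n\cdot\sizeofq$ term is absorbed into $\sizeofq\cdot\maxSize{D}$.
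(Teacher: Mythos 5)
Your proposal is correct and is essentially the paper's intended argument: the paper states this result as a direct corollary of \Cref{thm:complexity-aggregate-acyclic-evaluation}, using exactly the translation of $\counting{q(X)}$ into the semiring aggregate query $\pi_{\emptyset}^{+}\pi_X^{\max}{\Join^{\times}}q$ over $(D,\oneann)$, the identification $\pghw(\gamma)=\fghw(q(X))=1$, and the observation that the outermost projection set is empty. Your explicit bookkeeping that the number of projections is the constant $2$, so the $\+O(n\cdot\sizeofq)$ preprocessing term is absorbed into $\+O(\sizeofq\cdot\maxSize{D})$, is precisely the step that turns the cited theorem's bound into the claimed one.
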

\end{toappendix}
\begin{theoremrep}\AP\label{thm:count-complexity}\label{thm:CQ-bounded-fghw-counting-ptime}
    Given a "database" $D$, a "counting CQ" $\counting{q(X)}$, and a "width" $k$ "free-connex" "tree decomposition"  thereof, we can compute the "evaluation" $\evalCounting{q(X)}{D}$ in $\+O\big(\sizeofq \cdot \maxSize{D}^k\big)$.
    Further, for any class $\+C$ of "CQs" s.t.\ $\fghw(\core(\+C))$ is bounded, $\evalPb{\counting{\+C}}$ is in polynomial time.
\end{theoremrep}
\begin{proof}
    The first statement on combined complexity for evaluation is a direct corollary of \Cref{thm:complexity-aggregate-evaluation}.

    For the second statement, Suppose $k$ is a bound for the "free generalized hyperwidth" of $\core(\+C)$. 
    Let $q'(X) \in \+C$ be such that $\core(q'(X)) = q(X)$; remember that we can compute $q$ from $q'$ in polynomial time due to \Cref{lem:core-sem-bound-ghw-polytime}.
    Then we know that both $\ghw(q(X))$ and $\ghw(\contr(q(X)))$ 
    are at most $k$ by \Cref{lem:bound-free-gen-eq}-(i).
    In turn, by \Cref{lem:ghw:hw:bound} we know  
    that the "hyperwidth" is at most $3k+1$. We can then compute "hypertree decompositions" for $q(X)$ and $\contr(q(X))$ in polynomial time by \Cref{lem:recognizability:hw:ptime}.
    Finally, from these decompositions we can build a "free-connex" "tree decomposition" for $q(X)$ of "width" at most $2\cdot (3 k+1) + 1$ by \Cref{lem:bound-free-gen-eq}. With such decomposition at hand, we can now apply \Cref{thm:count-complexity} above to compute $\counting{q(X)}$ in polynomial time.    
\end{proof}

\AP
We further show that, if we restrict ourselves to the classes $\+C$ of queries with ""bounded arity"" ("ie", there is some $k$ such that all queries of $\+C$ use relations of "arity" at most $k$), then the theorem above is the best one can aim for. That is, classes of unbounded semantic $\fghw$ are not in polynomial time.
This result relies on the characterization of \cite{DBLP:conf/icdt/ChenM15} for the evaluation of $\counting{}$"CQs", which establishes a fine-grained trichotomy in terms of parameterized complexity based on the dichotomy due to Durand and Mengel \cite[Theorem 10]{DurandMengel15}. Concretely, they show that $\evalPb{\counting{\+C}}$ is in polynomial time if, and only if, $\tw(\core(\+C))$ and $\tw(\contr(\+C))$ are both bounded, 
where $\reintro*\contr(\+C)$ is the set of the so-called `"contractions"' of the queries of $\+C$,
which are "hypergraphs" over the free variables (detailed definition in \Cref{sec:chenmengel}).%
We can further show that \cite{DBLP:conf/icdt/ChenM15}'s condition on the class $\+C$ for polynomial time evaluation is equivalent to $\fghw(\core(\+C))$ being bounded, obtaining a dichotomy as corollary.

\begin{toappendix}
\subsection{Chen-Durand-Mengel Characterization}
\label{sec:chenmengel}
In order to state \cite{DBLP:conf/icdt/ChenM15}'s characterization, we need first to introduce some new concepts.
\AP
A ""frontier-path@@CQ""%
    \footnote{\AP\label{ft:frontier-path}In the case of "CQs", a "frontier-path@@CQ" in $q(X)$ is very close to what \cite[Definition 28]{BaganDG07} calls an ``$S$-path'' (and \cite[Definition 5.5]{berkholz2020constant} calls a ``bad path'') in $\hyperq$, with $S = X$. The only difference is that an $S$-path additionally requires that $x$ and $y$ cannot appear in a single hyperedge of $\hyperq$ ("ie", a single "atom" of $q$). 
    It is also equivalent to $x,y$ belonging to the same ``$S$-component'' in the jargon of \cite[p.~115]{DBLP:conf/icdt/ChenM15}; or $x,y$ being adjacent in the ``frontier hypergraph'' in the jargon of \cite[Definition~3.3]{ChenGrecoMengelScarcello23}.} 
in the context of a "CQ" $q(X)$ between two distinct "variables" $x,y$ is a "frontier-path" in the "aggregate query" $\pi^\otimes_X {\Join^\otimes} q$.
\AP
The ""contraction"" of $q(X)$, noted $\intro*\contr(q(X))$, is a "hypergraph" $(V,E)$ where $V$ is the set $X$ of free variables and $E$ contains (i) a hyperedge $e \subseteq V$ if there is an atom $\alpha$ of $q$ whose free variables are $e$ ("ie", $\vars(\alpha) \cap X = e$), and (ii) a hyperedge $\set{x,x'}$
if $x$ and $x'$ are connected via a "frontier-path@@CQ" in $q(X)$. See \Cref{fig:contraction-example} for an example.
\begin{figure}\AP
    \includegraphics[scale=.74]{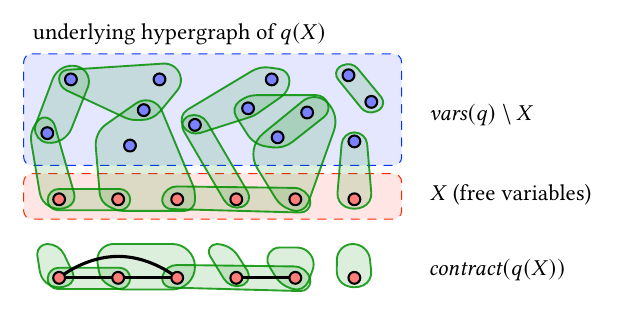}
    \caption{The "contraction" of a "CQ". Black lines represent hyperedges $\set{x,x'}$ derived from the existence of "frontier-paths@@CQ".}
    \label{fig:contraction-example}
\end{figure}
\begin{lemma}\AP\label{lem:tw-contr-core}
    For every "CQ" $q(X)$ we have that $\tw(\contr(q(X))) = \tw(\contr(\core(q(X))))$.
\end{lemma}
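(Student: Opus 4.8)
The plan is to transport tree decompositions across the homomorphic equivalence between $q$ and its core. Write $q^{*} \defeq \core(q(X))$. Since $q^{*}$ arises as the image of a retraction, there is a homomorphism $h\colon q \to q^{*}$ fixing every element of $X$ (and every constant), while $q^{*}$ is, up to renaming, a subset of the atoms of $q$, so the inclusion $q^{*}\subseteq q$ is a homomorphism fixing $X$ as well. Both $\contr(q(X))$ and $\contr(q^{*})$ have the very same vertex set $X$, so it suffices to compare their hyperedges, and in fact only the cliques they impose on $X$, since the $\tw$ of a hypergraph depends only on its primal graph.

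For the easy inequality $\tw(\contr(q^{*})) \leq \tw(\contr(q(X)))$ I would observe that every hyperedge of $\contr(q^{*})$ is already a hyperedge of $\contr(q(X))$. Each atom of $q^{*}$ is an atom of $q$, so every type-(i) hyperedge $\vars(\alpha)\cap X$ of $q^{*}$ is one of $q$; and any frontier-path of $q^{*}$ uses only atoms and bound variables of $q^{*}$, all present and still bound in $q$, hence is a frontier-path of $q$, so every type-(ii) hyperedge of $q^{*}$ is one of $q$. Thus the primal graph of $\contr(q^{*})$ is a subgraph of that of $\contr(q(X))$, and $\tw$ is monotone under subgraphs.

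For the reverse inequality I would take an optimal tree decomposition of $\contr(q^{*})$ and argue it is already a tree decomposition of $\contr(q(X))$ of the same width, i.e.\ that every hyperedge of $\contr(q(X))$ lies inside some bag. The type-(i) hyperedges are handled by $h$ fixing $X$: for an atom $\alpha$ of $q$, $h(\alpha)$ is an atom of $q^{*}$ and $\vars(\alpha)\cap X \subseteq \vars(h(\alpha))\cap X$, so the $q$-hyperedge sits inside a $q^{*}$-hyperedge, hence inside a bag. The decisive case is the type-(ii) hyperedges: given a frontier-path $x_1 \xrightarrow{\alpha_1}\dotsb\xrightarrow{\alpha_{m-1}} x_m$ of $q$ witnessing an edge $\{x,y\}$ with $x=x_1,\ y=x_m$, applying $h$ yields a walk $x=h(x_1),\dotsc,h(x_m)=y$ in $q^{*}$ in which consecutive vertices share an atom of $q^{*}$, and I would try to extract from it a genuine frontier-path of $q^{*}$ between $x$ and $y$, placing $\{x,y\}$ in a bag as before.

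The main obstacle is exactly here: $h$ need not send the interior bound variables $x_2,\dotsc,x_{m-1}$ to bound variables of $q^{*}$, as it may map some of them into $X$. In that case the image is only a walk that may touch $X$ internally, and it decomposes into several frontier-paths (and co-occurrence edges) linking $x$ to $y$ through intermediate \emph{free} vertices $x=z_0,z_1,\dotsc,z_\ell=y$, rather than a single frontier-path. This guarantees only that $x$ and $y$ are connected by a \emph{path} in $\contr(q^{*})$, not that they share a bag, so the naive transport fails precisely on the edges that the retraction ``shortcuts'' through free variables. Closing this gap is the heart of the argument, and it is where one must argue most carefully that the remaining type-(i)/type-(ii) structure of $\contr(q^{*})$ still forces $x$ and $y$ together in every decomposition; the delicate situation to control is a two-edge bound path $x$–$z$–$y$ whose retraction factors through a free vertex $w$ adjacent to both $x$ and $y$, which a priori replaces a clique in $\contr(q(X))$ by a path in $\contr(q^{*})$.
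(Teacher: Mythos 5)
Your first inequality is correct, and it is the only half of the statement that survives scrutiny: since $\core(q(X))$ is, up to renaming, a subquery of $q$ with the same free variables $X$, every hyperedge of $\contr(\core(q(X)))$ is also a hyperedge of $\contr(q(X))$, hence $\tw(\contr(\core(q(X)))) \leq \tw(\contr(q(X)))$. The gap you flag in the converse direction, however, is not a gap you failed to close --- it cannot be closed, because the exact configuration you describe in your last sentence refutes the lemma. Take $X=\set{x,y,w}$ and $q=\set{R(x,z),\,S(z,y),\,R(x,w),\,S(w,y)}$, so that $z$ is the only bound variable. The retraction fixing $X$ with $z\mapsto w$ shows $\core(q(X))=\set{R(x,w),S(w,y)}$. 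Now $\contr(q(X))$ is a triangle on $\set{x,y,w}$: the edges $\set{x,w}$ and $\set{w,y}$ come from atoms, and $\set{x,y}$ comes from the frontier-path $x \xrightarrow{R(x,z)} z \xrightarrow{S(z,y)} y$; its tree-width is $2$. But $\contr(\core(q(X)))$ has only the edges $\set{x,w}$ and $\set{w,y}$, so its tree-width is $1$. This is precisely your ``delicate situation'' (a two-edge bound path whose retraction factors through a free vertex adjacent to both endpoints), realized concretely; no transport argument along $h$ can repair it.

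It is instructive to compare this with the paper's own proof, which consists of the single sentence that every frontier-path between $x$ and $y$ in $q(X)$ is necessarily witnessed by an atom of $\core(q(X))$ containing both $x$ and $y$. That claim is exactly what fails in the example above: the core has no atom containing both $x$ and $y$. So your refusal to wave this step through was the right instinct, and your diagnosis pinpoints the error in the paper rather than a deficiency in your argument. The damage propagates: scaling the example to many pairs $x_i,x_j$, each joined by a private bound path and by two atoms through the common free vertex $w$, gives a class $\+C$ whose cores are star-shaped full acyclic queries (so $\fghw(\core(\+C))=1$ and counting is in polynomial time, since equivalent queries have the same counts), while $\tw(\contr(\+C))$ is unbounded. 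Consequently the conditions in \Cref{cor:freeghw:charPtime} and \Cref{thm:chenmengel:charPtime} that are phrased with $\contr(\+C)$ are only correct when restated with $\contr(\core(\+C))$ --- in which form this lemma is not needed at all, and only your first inequality remains relevant.
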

\begin{proof}
    This is because every "frontier-path@@CQ" between $x$ and $y$ in $q(X)$ is necessarily witnessed by an atom of $\core(q(X))$ containing both $x$ and $y$.
\end{proof}
For a class $\+C$ of "CQs", let $\reintro*\contr(\+C)$ be $\set{\contr(q(X)) : q(X) \in \+C}$. 

The characterizations of Chen, Durand, and Mengel \cite{DurandMengel15,DBLP:conf/icdt/ChenM15} in particular show the following:
\begin{proposition}\cite[corollary of Theorem 22]{DBLP:conf/icdt/ChenM15}
    \AP\label{thm:chenmengel:charPtime}
    Assuming "FPT" $\neq$ "W1", for every r.e. class $\+C$ of "CQs" of "bounded arity" the following are equivalent:
    \begin{enumerate}
        \item $\evalPb{\counting{\+C}}$ is in polynomial time,
        \item $\tw(\core(\+C))$ and $\tw(\contr(\+C))$ are both bounded.%
    \end{enumerate}
\end{proposition}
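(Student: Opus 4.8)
The plan is to read this off the parameterized trichotomy of Chen and Mengel \cite[Theorem 22]{DBLP:conf/icdt/ChenM15}, which refines the dichotomy of Durand and Mengel \cite[Theorem 10]{DurandMengel15}. That theorem classifies, for every recursively enumerable class $\+C$ of "CQs" of "bounded arity", the parameterized complexity of $\evalPb{\counting{\+C}}$ (parameterized by the query) into three regimes: it is "fixed-parameter tractable" exactly when both $\tw(\core(\+C))$ and $\tw(\contr(\+C))$ are bounded, and otherwise it is "W1"-hard under parameterized reductions, with the precise parameterized classification of the two non-tractable regimes being the content of the trichotomy. What remains is to re-express this parameterized statement as the classical polynomial-time dichotomy claimed here, so I would reconcile the tractable regime with genuine polynomial time and each intractable regime with the failure of polynomial time.

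For the direction $(2) \Rightarrow (1)$, which is unconditional (it does not use "FPT" $\neq$ "W1"), I would observe that in the tractable regime the counting algorithm underlying \cite{DBLP:conf/icdt/ChenM15,DurandMengel15} runs in time $f(\sizeofq) \cdot \maxSize{D}^{\+O(1)}$ with the exponent on $\maxSize{D}$ governed by the two width bounds $\tw(\core(\+C))$ and $\tw(\contr(\+C))$. Since these are constants for the fixed class $\+C$, the exponent is a constant and the running time is polynomial in "combined complexity". (Internally to this paper one could instead take the equivalent route through $\fghw(\core(\+C))$ and invoke \Cref{thm:CQ-bounded-fghw-counting-ptime}, but the direct appeal to the tractable case of the trichotomy already suffices and keeps the argument self-contained.)

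For $(1) \Rightarrow (2)$ I would argue the contrapositive, and here the hypothesis "FPT" $\neq$ "W1" enters. Suppose $(2)$ fails, so at least one of $\tw(\core(\+C))$, $\tw(\contr(\+C))$ is unbounded. By the trichotomy, $\evalPb{\counting{\+C}}$ then lies in a non-tractable regime and is in particular "W1"-hard under parameterized reductions. A polynomial-time algorithm for $\evalPb{\counting{\+C}}$ (with the query taken as part of the input) would in particular be a fixed-parameter tractable algorithm, placing the problem in "FPT"; combined with "W1"-hardness this would give "W1" $\subseteq$ "FPT", contradicting the assumption. Hence $(1)$ forces $(2)$.

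The only non-routine point -- the hard part -- is the passage between the parameterized formulation of \cite{DBLP:conf/icdt/ChenM15} and the classical one required here. On the tractable side one must check that the $\maxSize{D}$-exponent of the fixed-parameter algorithm is bounded by the structural widths, and not merely by some arbitrary function of the whole query, so that bounded widths for $\+C$ collapse the algorithm to polynomial combined complexity uniformly over the class; on the intractable side one must use that the hardness is stated under \emph{parameterized} reductions, so that it rules out polynomial time precisely modulo "FPT" $\neq$ "W1". Both facts are immediate from the fine-grained form of \cite[Theorem 22]{DBLP:conf/icdt/ChenM15}, so once the trichotomy is invoked the remaining argument is short.
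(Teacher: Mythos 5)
The paper itself gives no proof of this proposition: it is stated as an imported result, a direct corollary of the Chen--Mengel trichotomy \cite[Theorem 22]{DBLP:conf/icdt/ChenM15}, and your derivation (the tractable regime of the trichotomy yields item (1); both intractable regimes are $\mathsf{W}[1]$-hard, so polynomial time would contradict "FPT" $\neq$ "W1") is exactly the intended reading. Your $(1)\Rightarrow(2)$ direction is sound as written; note only that when $\tw(\core(\+C))$ is unbounded the trichotomy gives $\#\mathsf{W}[1]$-hardness rather than plain $\mathsf{W}[1]$-hardness, but this still yields $\mathsf{W}[1]$-hardness under parameterized Turing reductions (deciding a clique reduces to counting them), so the contradiction with "FPT" $\neq$ "W1" goes through.

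There is, however, one step in your $(2)\Rightarrow(1)$ argument that fails as written: from ``the algorithm runs in time $f(\sizeofq)\cdot\maxSize{D}^{\+O(1)}$ with constant exponent'' you conclude polynomial combined complexity, but this follows only if the query-side factor $f$ is itself polynomially bounded, and that is precisely the non-trivial part. In particular, the tractable algorithm must compute (something equivalent to) the "core" of the input query, a task that is NP-hard for general "CQs" and polynomial here only because cores of bounded width admit polynomial-time core computation (this is \Cref{lem:core-sem-bound-ghw-polytime} in the paper, itself borrowed from Chen--Mengel). Your closing paragraph flags the database exponent but not this multiplicative factor. The gap is easy to close: the tractable case of \cite[Theorem 22]{DBLP:conf/icdt/ChenM15} is stated as genuine polynomial time, not merely fixed-parameter tractability, so one should cite it as such rather than attempt an FPT-to-polynomial upgrade; alternatively, your parenthetical fallback works --- convert the bounds on $\tw(\core(\+C))$ and $\tw(\contr(\+C))$ into a bound on $\fghw(\core(\+C))$ via \Cref{lem:bound-free-gen-eq-body} and \Cref{lem:tw-contr-core}, then invoke \Cref{thm:CQ-bounded-fghw-counting-ptime}.
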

\end{toappendix}

\begin{toappendix}
    \subsection{Characterization in Terms of Project-connex Width}
\end{toappendix}
\begin{lemmarep}\AP\label{lem:bound-free-gen-eq-body}
    For any class $\+C$ of "CQs" we have that $\fghw(\+C)$ is bounded if, and only if, both $\ghw(\+C)$ and $\ghw(\contr(\+C))$ are bounded.%
\end{lemmarep}
\begin{proof}
    We will rather show the more fine-grained version of the statement given below as \Cref{lem:bound-free-gen-eq}, with concrete bounds on the grow of decompositions.
\end{proof}
\begin{toappendix}
\begin{lemma}\AP\label{lem:bound-free-gen-eq}
    For any class $\+C$ of "CQs" the following are equivalent:
    \begin{enumerate}
        \item \AP\label{lem:bound-free-gen-eq:1}  $\fghw(\+C)$ is bounded,
        \item \AP\label{lem:bound-free-gen-eq:2} $\ghw(\+C)$ and $\ghw(\contr(\+C))$ are both bounded.
    \end{enumerate}
    Further, 
    (i) if $\fghw(\+C) \leq k$ then  $\ghw(\contr(\+C))\leq k$ and $\ghw(\+C) \leq k$, and
    (ii) if $\ghw(\+C) \leq k$ and $\ghw(\contr(\+C)) \leq k$, then $\fghw(\+C) \leq 2 k$.
    Witnessing decompositions of these inequalities can be obtained from one another in polynomial time.
\end{lemma}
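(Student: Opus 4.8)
The plan is to route both directions through \Cref{thm:width-augmented-char}. Write $\gamma = \pi_X^\oplus {\Join^\otimes} q$; since a "tree decomposition" of $\hyperq$ is "free-connex" for $q(X)$ exactly when it is "project-connex" for $\gamma$, that theorem gives the first key identity $\fghw(q(X)) = \ghw(\aug\gamma)$. The second identity is definitional: every "frontier-path@@CQ" of $q(X)$ joins two variables of $X$, so restricting each hyperedge of $\hyperq[\aug\gamma]$ to $X$ sends each "atom" $\alpha$ to $\vars(\alpha)\cap X$ and each $\PRel$-edge to a binary "frontier-path@@CQ" edge, whence $\hyperq[\contr(q(X))] = \restrictG[{\hyperq[\aug\gamma]}]{X}$. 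I would prove the two inequalities (i) and (ii) for a single query $q(X)$ with a uniform "width" bound $k$; the class-level equivalence of the two boundedness conditions then follows by taking $k$ to be the maximum of the two bounds.

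For the easy inequality (i), assume $\fghw(q(X)) \le k$. A "width"-$k$ "free-connex" "tree decomposition" of $q(X)$ is in particular a "tree decomposition" of $\hyperq$, so $\ghw(q)\le k$ is immediate. For $\ghw(\contr(q(X)))\le k$ I would take a "width"-$k$ "tree decomposition" of $\aug\gamma$ (which exists by the first identity) and "relativize" it onto $X$, replacing each "bag" $\bagmap(v)$ by $\bagmap(v)\cap X$ and each hyperedge in $\atommap(v)$ by its intersection with $X$. "Relativizing" any "tree decomposition" of a "hypergraph" $G$ onto a vertex set $C$ yields a "tree decomposition" of $\restrictG[G]{C}$ of no larger "width" (completeness and covering are inherited after intersecting, connectivity is unaffected for vertices of $C$), so by the second identity this produces a "width"-$\le k$ decomposition of $\hyperq[\contr(q(X))]$.

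The substance is in (ii). Given a "width"-$k$ "tree decomposition" $D_q=(S,\bagmap_S,\atommap_S)$ of $\hyperq$ and a "width"-$k$ "tree decomposition" $D_H=(U,\bagmap_U,\atommap_U)$ of $\contr(q(X))$, I would build a "width"-$\le 2k$ "tree decomposition" of $\aug\gamma$ and then invoke the first identity to conclude $\fghw(q(X))=\ghw(\aug\gamma)\le 2k$. Let $C_1,\dots,C_p$ be the connected components of $\restrictG[\hyperq]{\vars(q)\setminus X}$ and let $C_i^+\subseteq X$ be the free variables incident to $C_i$. As in the right-to-left direction of \Cref{thm:width-augmented-char}, each pair in $C_i^+$ is joined by a "frontier-path@@CQ", so $C_i^+$ is a "$\PRel$-clique" of $\aug\gamma$ and hence, by \Cref{lem:cliques-in-bags}, lies in a single "bag" $u_i$ of $D_H$. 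I would then use $D_H$ as a backbone --- lifting each bag's cover from "contraction" hyperedges back to "atoms" of $\aug\gamma$ (an atom-projection $\vars(\alpha)\cap X$ lifts to $\vars(\alpha)$, a $\PRel$-edge is already an "atom" of $\aug\gamma$) so that backbone bags keep "width" $\le k$ --- and, for each $i$, attach a fresh copy of $D_q$ "relativized" onto $C_i\cup C_i^+$ with $C_i^+$ added to every one of its bags, joined to $u_i$ by a single edge.

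The factor $2$ appears exactly in the "width" count: a bag of an attached copy carries its original $\le k$ "atoms" together with the $\le k$ "atoms" that cover $u_i$ (these already cover $C_i^+\subseteq\bagmap_U(u_i)$, hence cover the variables just added), for a total of $\le 2k$. The main obstacle, and the reason for adding $C_i^+$ everywhere in each copy, is the "connectivity condition" for a free variable $x$ that is shared between the backbone and several attached copies: each $u_i$ with $x\in C_i^+$ lies in the connected set of backbone bags containing $x$, and since $x$ now lies in every bag of each copy attached at such a $u_i$, all bags containing $x$ remain connected through the backbone. "Completeness" is routine (an "atom" with a bound variable has all its variables inside some $C_i\cup C_i^+$ and is covered by the $i$-th copy, while all-free "atoms" and $\PRel$-edges are covered by the backbone) and "covering" is immediate from the construction. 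Since each step is an explicit "relativization" or copy-and-attach applied to polynomially many pieces, the witnessing decompositions transform into one another in polynomial time.
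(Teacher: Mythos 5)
Your proof is correct, but it takes a genuinely different route from the paper's. The paper argues directly at the level of free-connex decompositions: for (i) it restricts the given decomposition to the witness subtree for $X$ and proves, via an explicit tree-path argument, that every pair of variables joined by a frontier-path must share a bag of that subtree; for (ii) it glues a width-$k$ decomposition of $\contr(q(X))$ (the backbone) with width-$k$ decompositions of the induced subhypergraphs $\restrictG[\hyperq]{C_i}$, which exist because generalized hyperwidth is monotone under taking induced subhypergraphs. You instead route everything through \Cref{thm:width-augmented-char} via two identities, $\fghw(q(X))=\ghw(\aug{\gamma})$ for $\gamma=\pi_X^\oplus{\Join^\otimes}q$, and $\contr(q(X))=\restrictG[{\hyperq[\aug{\gamma}]}]{X}$ --- the second of which the paper never states explicitly but which is indeed immediate from the definitions, since every frontier-path of $q(X)$ has both endpoints in $X$. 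This buys you direction (i) essentially for free: the paper's path argument is subsumed by the $\pghw(\gamma)\geq\ghw(\aug{\gamma})$ direction of \Cref{thm:width-augmented-char}, after which relativizing onto $X$ is a routine verification. In direction (ii) the combinatorial core (the components $C_i$, the incident sets $C_i^+$, \Cref{lem:cliques-in-bags} applied to the $\PRel$-cliques, and a backbone with attached per-component pieces) is the same as the paper's; the differences are that you attach relativized copies of the single global decomposition of $\hyperq$ rather than separate decompositions of the induced pieces, and that you build a width-$2k$ decomposition of $\aug{\gamma}$ and convert it back through the width-preserving, linear-time translation of \Cref{thm:width-augmented-char}, instead of writing down the free-connex decomposition of $q(X)$ directly. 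Both arguments give the same $2k$ bound and polynomial-time witnesses; the paper's version is self-contained and never needs to materialize the (possibly quadratic-size) augmented query, while yours is more economical in reusing machinery already established and makes the relationship between the contraction and the augmented query explicit.
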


\begin{proof}
    \proofcase{\eqref{lem:bound-free-gen-eq:1} $\Rightarrow$ \eqref{lem:bound-free-gen-eq:2}}
    Let $q(X) \in \+C$ have a "free-connex" "tree decomposition" $(T,\bagmap, \atommap)$ of "width" $k$. It follows that $q(X)$ has "generalized hyperwidth" ${\leq} k$. 
    We are left to show that $\contr(q(X))$ has "generalized hyperwidth" ${\leq} k$.
    For this, consider $T'$ to be the "witness subtree for $X$" of $T$,
    and let $\bagmap'$ be the restriction of $\bagmap$ onto $\vertex {T'}$.
    \begin{claim}
        $(T',\bagmap')$ is a "tree decomposition" for $\contr(q(X))$.
    \end{claim}
    \begin{nestedproof}
        To show the claim, take any edge $\set{x,x'}$ of $\contr(q(X))$. 

            If $x,x'$ are both part of an "atom" of $q$, then they are present in an edge of $\hyperq$, and then $\set{x,x'} \subseteq \bagmap(v)$ for some vertex $v$ of $T$ by the "completeness condition" of $(T,\bagmap,\atommap)$. It follows that $\set{x,x'}$ must belong to $\bagmap'(v')$ where $v'$ is the vertex of $T'$ closest to $v$.

            Otherwise, there is a "frontier-path@@CQ" between $x$ and $x'$ of length $>1$. Let $v_x$ and $v_{x'}$ be vertices of $T'$ such that $x \in \bagmap'(v_x)$ and $x' \in \bagmap'(v_{x'})$.
            By the "connectivity condition" of $(T,\bagmap,\atommap)$, there must be a path in $T$ from $v_{x}$ to $v_{x'}$ in $T$ which can be divided into three parts $P_1 P_2 P_3$, where $P_1$ uses "bags" of $T'$ "containing@@bag" $x$, $P_3$ uses "bags" of $T'$ "containing@@bag" $x'$, and $P_2$ uses only 
            "bags" from $\vertex{T} \setminus \vertex{T'}$ witnessing the "frontier-path@@CQ". Since $T$ is a tree, $P_2$ must start and end on the same "bag" $v$ of $T'$, witnessing that $\set{x,x'} \subseteq \bagmap(v)$. This concludes the proof of the claim.
    \end{nestedproof}
    
    Thus, both the "generalized hyperwidth" of $q(X)$ and $\contr(q(X))$ are bounded by $k$.

    \proofcase{\eqref{lem:bound-free-gen-eq:2} $\Rightarrow$ \eqref{lem:bound-free-gen-eq:1}}
    Let $k$ be the bound on the "generalized hyperwidth" of $q(X)$ and of $\contr(q(X))$ for every $q(X) \in \+C$. We will show that there exists a "free-connex" "tree decomposition" of "width" ${\leq} 2k+1$ for $q(X)$. 
    Please refer to the visual aid of Figure~\ref{lem:fig:bound-free-gen-eq} for understanding of the proof that follows.
    \begin{figure*}
        \newcommand{\scalefactor}{0.74}
        \centering
        \begin{subfigure}{.35\linewidth}
            \centering
            \includegraphics[scale=\scalefactor]{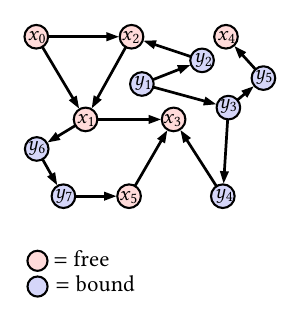}
            \caption{
                \AP\label{lem:fig:bound-free-gen-eq:q}
                An example query $q(X)$.
            }
        \end{subfigure} 
        \hfill 
        \begin{subfigure}{.3\linewidth}
            \centering
            \includegraphics[scale=\scalefactor]{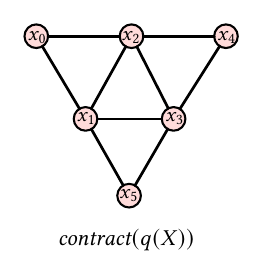}
            \caption{
                \AP\label{lem:fig:bound-free-gen-eq:contr}
                The "contraction" of $q(X)$.
            }
        \end{subfigure}
        \hfill
        \begin{subfigure}{.3\linewidth}
            \centering
            \includegraphics[scale=\scalefactor]{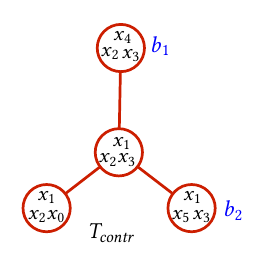}
            \caption{
                \AP\label{lem:fig:bound-free-gen-eq:Tcontr}
                The "tree decomposition" of the contraction of $q(X)$.
            }
        \end{subfigure}
        \\
        \begin{subfigure}{.45\linewidth}
            \centering
            \includegraphics[scale=\scalefactor]{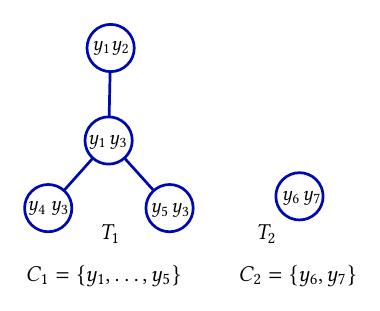}
            \caption{
                \AP\label{lem:fig:bound-free-gen-eq:Ti}
                The "tree decomposition" of each component $C_i$.
            }
        \end{subfigure} 
        \hfill
        \begin{subfigure}{.45\linewidth}
            \centering
            \includegraphics[scale=\scalefactor]{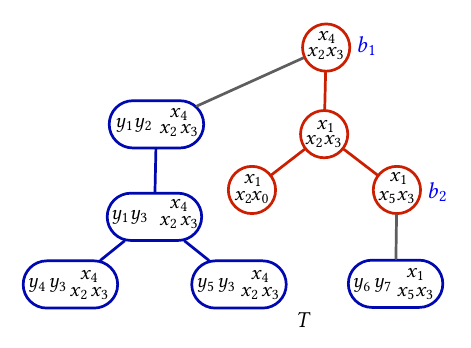}
            \caption{
                \AP\label{lem:fig:bound-free-gen-eq:T}
                The "free-connex" "tree decomposition" of $q(X)$.
            }
        \end{subfigure}
        \caption{
            \AP\label{lem:fig:bound-free-gen-eq}
            Visual aid showing, for an example query \eqref{lem:fig:bound-free-gen-eq:q} (for simplicity, every "relation name" is binary, and $q$ is hence depicted as an edge-labeled directed graph), the construction of a "free-connex" "tree decomposition" \eqref{lem:fig:bound-free-gen-eq:T} from the "tree decompositions" for each bounded component $C_i$ \eqref{lem:fig:bound-free-gen-eq:Ti} and the "tree decomposition" \eqref{lem:fig:bound-free-gen-eq:Tcontr} of its "contraction" \eqref{lem:fig:bound-free-gen-eq:contr}.
        }
    \end{figure*}

    Let $B = \vars(q) \setminus X$ be the set of "bound variables" of $q(X)$.
    Consider the "hypergraph" $\restrictG[\hyperq]{B}$ induced by the "bound variables" of $q$ on its "underlying hypergraph", and suppose it is partitioned into $n$ connected components $C_1 \dcup \dotsb \dcup C_n = B$. 
    Observe that each connected component $\restrictG[\hyperq]{C_i}$ has "generalized hyperwidth" ${\leq} k$, since the "generalized hyperwidth" of $\hyperq$ is ${\leq} k$ and "generalized hyperwidth" is closed under taking induced sub-"hypergraphs". Let then $(T_i,\bagmap_i,\atommap_i)$ be a "tree decomposition" for $\restrictG[\hyperq]{C_i}$ of "width" ${\leq} k$, for each $i \in [n]$.
    
    For every $i \in [n]$, let $F_i \subseteq X$ be the set of all "free variables" incident to $C_i$ in $\hyperq$ ("ie", $F_i \eqdef \set{x \in X : \set{x,x'} \subseteq e \text{ for some } e \in \edges{\hyperq} \text{ and } x' \in C_i}$).
    Notice that, by definition, there is a hyperedge in $\contr(q)$ containing $x$ and $x'$ if{f} either 
    \begin{enumerate}
        \item some "atom" of $q$ contains both $x$ and $x'$, or
        \item $\set{x,x'} \subseteq F_i$ for some $i \in [n]$.
    \end{enumerate}

    Take a "tree decomposition" $(T_{contr},\bagmap_{contr},\atommap_{contr})$ of $\contr(q)$ of "width" ${\leq} k$. By the observation above, for every $i \in [n]$, we have that $F_i$ forms a clique in $\contr(q)$, and therefore there must be a "bag" $b_i$ of $T_{contr}$ "containing@@bag" $F_i$ due to \Cref{lem:cliques-in-bags}. 
    In particular, this means that $|\atommap(b_i)| \leq k$ since $(T_{contr},\bagmap_{contr},\atommap_{contr})$ has "width" ${\leq} k$.
    For every $i \in [n]$, let $T'_i$ be the result of adding $F_i$ to  $\bagmap(v)$ and adding $\atommap(b_i)$ to $\atommap(v)$, for every "bag" $v$ of $T_i$: in the figure, starting with the two $T_i$ decompositions of \Cref{lem:fig:bound-free-gen-eq:Ti} we obtain the two blue $T'_i$ decompositions of \Cref{lem:fig:bound-free-gen-eq:T}. This results in a "tree decomposition" for $\restrictG[\hyperq]{C_i \dcup F_i}$ of "width" ${\leq} 2k$.\footnote{We certainly abuse notation when calling $T'_i$ a ``"tree decomposition" for $\restrictG[\hyperq]{C_i \dcup F_i}$'' since, strictly speaking, $\atommap(b_i)$ may well contain hyperedges with vertices outside $C_i \dcup F_i$.}
    Finally, let $T$ be the result of attaching $T'_i$ as a child of $b_i$ in $T_{contr}$, for every $i \in [n]$. 
    That is, for each $i$ we add $T'_i$ and create an edge between $b_i$ and a node (any node) of $T'_i$. See how \Cref{lem:fig:bound-free-gen-eq:T} is built from \Cref{lem:fig:bound-free-gen-eq:Ti} and \Cref{lem:fig:bound-free-gen-eq:Tcontr}.
    It follows that $T$ is indeed a "free-connex" "tree decomposition" for $q(X)$ whose every "bag" has size ${\leq} 2k$.%
\end{proof}
\end{toappendix}
\begin{corollaryrep}[of \Cref{lem:bound-free-gen-eq-body} and {\cite[Theorem 22]{DBLP:conf/icdt/ChenM15}}]\AP\label{cor:freeghw:charPtime-body}
    If "FPT" $\neq$ "W1", for every "re" class $\+C$ of "CQs" of "bounded arity" we have:
        $\evalPb{\counting{\+C}}$ is in "PTime"  $\Leftrightarrow$  $\fghw(\core(\+C))$ is bounded.
\end{corollaryrep}
\begin{proof}
    Instead of proving \Cref{cor:freeghw:charPtime-body} we show a slightly more fine-grained \Cref{cor:freeghw:charPtime}, which can be found below.
\end{proof}

\begin{toappendix}
\begin{corollary}\AP\label{cor:freeghw:charPtime}
    Assuming "FPT" $\neq$ "W1", for every "re" class $\+C$ of "CQs" of "bounded arity" the following are equivalent:
    \begin{enumerate}
        \item\AP\label{cor:freeghw:charPtime:1} $\evalPb{\counting{\+C}}$ is in polynomial time,
        \item\AP\label{cor:freeghw:charPtime:2} $\fghw(\core(\+C))$ is bounded,
        \item\AP\label{cor:freeghw:charPtime:3} $\ghw(\core(\+C))$ and $\ghw(\contr(\+C))$ are bounded,
        \item\AP\label{cor:freeghw:charPtime:4} $\tw(\core(\+C))$ and $\tw(\contr(\+C))$ are bounded,
        \item \AP\label{cor:freeghw:charPtime:5} for some $k \in \Nat$, every "CQ" in $\+C$ is "equivalent" to a "CQ" of $\fghw$ at most $k$.
    \end{enumerate}
\end{corollary}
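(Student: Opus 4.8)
The plan is to prove the five conditions equivalent by arranging them into the chain \eqref{cor:freeghw:charPtime:1} $\Leftrightarrow$ \eqref{cor:freeghw:charPtime:4} $\Leftrightarrow$ \eqref{cor:freeghw:charPtime:3} $\Leftrightarrow$ \eqref{cor:freeghw:charPtime:2} $\Leftrightarrow$ \eqref{cor:freeghw:charPtime:5}, using the Chen--Durand--Mengel characterization as the backbone and the lemmas of this section to rephrase its boundedness condition. Concretely, \Cref{thm:chenmengel:charPtime} already delivers \eqref{cor:freeghw:charPtime:1} $\Leftrightarrow$ \eqref{cor:freeghw:charPtime:4} (polynomial-time evaluation iff $\tw(\core(\+C))$ and $\tw(\contr(\+C))$ are bounded) under the assumption "FPT"~$\neq$~"W1", so all that remains is to translate condition \eqref{cor:freeghw:charPtime:4} successively into \eqref{cor:freeghw:charPtime:3}, \eqref{cor:freeghw:charPtime:2}, and \eqref{cor:freeghw:charPtime:5}. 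The three tools for this are \Cref{lem:bound-free-gen-eq}, \Cref{lem:semantic-fghw}, and \Cref{lem:tw-contr-core}, combined throughout with the bounded-arity comparison of $\tw$ and $\ghw$ from \Cref{rk:ghw-leq-tw}.

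First I would establish \eqref{cor:freeghw:charPtime:4} $\Leftrightarrow$ \eqref{cor:freeghw:charPtime:3}. Since $\ghw(G)\le\tw(G)$ always and $\tw(G)\le r\cdot\ghw(G)-1$ on any "hypergraph" of "arity" at most $r$ (\Cref{rk:ghw-leq-tw}), bounded "tree-width" and bounded "generalized hyperwidth" coincide on every class of "bounded arity". This applies directly to $\core(\+C)$, and also to $\contr(\+C)$: each "contraction" hypergraph $\contr(q(X))$ has hyperedges that are either projections $\vars(\alpha)\cap X$ of "atoms" or binary "frontier-path" edges, so $\arity(\contr(q(X)))\le\arity(q(X))$ and $\contr(\+C)$ is of "bounded arity" whenever $\+C$ is. Hence boundedness of $\tw(\core(\+C))$ and $\tw(\contr(\+C))$ is equivalent to boundedness of $\ghw(\core(\+C))$ and $\ghw(\contr(\+C))$.

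Next I would prove \eqref{cor:freeghw:charPtime:3} $\Leftrightarrow$ \eqref{cor:freeghw:charPtime:2} by applying \Cref{lem:bound-free-gen-eq} to the class $\core(\+C)$, which yields that $\fghw(\core(\+C))$ is bounded iff both $\ghw(\core(\+C))$ and $\ghw(\contr(\core(\+C)))$ are bounded. The only discrepancy with \eqref{cor:freeghw:charPtime:3} is that it speaks of $\ghw(\contr(\+C))$ rather than $\ghw(\contr(\core(\+C)))$; I would bridge this by combining \Cref{lem:tw-contr-core} with the bounded-arity fact above: boundedness of $\ghw(\contr(\core(\+C)))$ is equivalent to boundedness of $\tw(\contr(\core(\+C)))$, which (per query, hence over the whole class) equals $\tw(\contr(\+C))$, equivalent in turn to boundedness of $\ghw(\contr(\+C))$. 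Finally, \eqref{cor:freeghw:charPtime:2} $\Leftrightarrow$ \eqref{cor:freeghw:charPtime:5} is a direct restatement of \Cref{lem:semantic-fghw}: a "CQ" $q(X)$ is "equivalent" to one of "free generalized hyperwidth" at most $k$ iff $\core(q(X))$ has "free generalized hyperwidth" at most $k$, and reading this uniformly over all $q(X)\in\+C$ turns condition \eqref{cor:freeghw:charPtime:5} into condition \eqref{cor:freeghw:charPtime:2}.

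Since the conceptual work is front-loaded into \Cref{lem:bound-free-gen-eq} and the cited trichotomy \Cref{thm:chenmengel:charPtime}, the corollary's proof is largely bookkeeping. The step I expect to be the main obstacle is the disciplined tracking of a \emph{single uniform} bound over the whole class while moving between four distinct objects: $\+C$, the class $\core(\+C)$, the class $\contr(\+C)$, and the class $\contr(\core(\+C))$. In particular, interchanging $\contr(\+C)$ with $\contr(\core(\+C))$ is sound only because \Cref{lem:tw-contr-core} supplies an \emph{exact} per-query equality of tree-widths (not merely a bound), which then survives the bounded-arity conversion between $\tw$ and $\ghw$; ensuring that this interchange is carried out at the level of boundedness of classes rather than individual inequalities is the delicate part of the argument.
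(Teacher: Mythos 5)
Your proposal is correct and takes essentially the same route as the paper: the paper's proof establishes exactly the chain (1)$\Leftrightarrow$(4) via \Cref{thm:chenmengel:charPtime}, (3)$\Leftrightarrow$(4) via \Cref{rk:ghw-leq-tw}, (2)$\Leftrightarrow$(3) via \Cref{lem:bound-free-gen-eq} combined with \Cref{lem:tw-contr-core}, and (2)$\Leftrightarrow$(5) via \Cref{lem:semantic-fghw}. Your only addition is to spell out the bookkeeping the paper leaves implicit, namely that $\contr(\+C)$ is of bounded "arity" and that the $\tw$-equality of \Cref{lem:tw-contr-core} must be converted to a $\ghw$ statement through the bounded-arity comparison, which is exactly how the cited lemmas are meant to be combined.
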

\begin{proof}
    \eqref{cor:freeghw:charPtime:1} $\Leftrightarrow$ \eqref{cor:freeghw:charPtime:4} follows from \Cref{thm:chenmengel:charPtime};  \eqref{cor:freeghw:charPtime:2} $\Leftrightarrow$ \eqref{cor:freeghw:charPtime:3} follows from \Cref{lem:bound-free-gen-eq} combined with \Cref{lem:tw-contr-core}; \eqref{cor:freeghw:charPtime:3} $\Leftrightarrow$ \eqref{cor:freeghw:charPtime:4} follows from \Cref{rk:ghw-leq-tw}; and
    \eqref{cor:freeghw:charPtime:2} $\Leftrightarrow$ \eqref{cor:freeghw:charPtime:5} follows from \Cref{lem:semantic-fghw}.
\end{proof}
\end{toappendix}

\section{Concluding Remarks}\label{sec:conclusion}

\begin{toappendix}
    \label{app:conclusion}
\end{toappendix}

We have introduced and studied the "project-connex" width on "aggregate queries" and proposed it as an alternative structural measure which enjoys several desirable properties, namely: 
(i) having a simple intuitive definition, 
(ii) admitting an efficient computation of decompositions, 
(iii) facilitating constant-delay enumeration results, 
(iv) enabling a uniform and simple algorithmic treatment for evaluation.
A natural extension to this work is adapting our results to other monotone
width measures, such as fractional \cite{GroheM14} or submodular \cite{Marx13} width, adding unions and allowing constants (more details in \Cref{sec:discussion}).
We may also consider if the project-connex generalization of free-connex fractional hypertree width (\textit{fn-fhtw}) can lead to results on output-optimal algorithms for evaluating aggregate queries comparable to the ones showed in \cite{Hu25} for conjunctive queries.

While we have compared our approach to prior work throughout the manuscript, we would like to emphasize that our work is very much aligned with seminal work found in \cite{KhamisNR16,JoglekarPR16,khamis2023faqquestionsaskedfrequently} on aggregate query languages. However, the focus of these works is put on the semantic equivalence between queries, 
while ours is purely structural.

As mentioned in \Cref{sec:countingcq}, 
previous characterization criteria for counting conjunctive queries have been stated in terms of `quantified star-size' \cite{DurandMengel15} or the tree-width measure on two queries \cite{DBLP:conf/icdt/ChenM15,ChenGrecoMengelScarcello23}. 
As we showed, if the focus is solely on a tractability characterization, considering the notion of width on "free-connex" tree decompositions is enough (however, this does not translate into a simplification for the \emph{trichotomy} characterization statement of \cite{DBLP:conf/icdt/ChenM15}).

\begin{toappendix}

\label{sec:discussion}

We discuss some natural extensions and generalizations.

\subparagraph*{Fractional, Submodular, and Other Monotonic Widths} 
While we tried to minimize the number of definitions regarding tree decompositions, we remark that our results can be easily adapted to other monotone\footnote{Monotone in the sense that removing vertices from "bags" can only decrease the width measure.}  width measures, such as fractional \cite{GroheM14} or submodular \cite{Marx13} width. For example, an analog statement to that of \Cref{thm:width-augmented-char} can be made for computing ``"project-connex" fractional width''; and the bounds of \Cref{thm:complexity-aggregate-evaluation} for evaluation can be trivially adapted by replacing the exponent $k$ with the "project-connex" fractional width. %

    A \AP""measure"" is a function $\mu$ that takes a "hypergraph" and a set of vertices thereof and returns a real number. It is a ""monotone measure"" if further for every "hypergraph" $G$ and $A \subseteq B \subseteq \vertex{G}$ we have $\mu_G(A) \leq \mu_G(B)$. The ""$\mu$-width"" of $G$ is then the minimum value $\max_{v \in T} \mu_G(\bagmap(v))$ among all "tree decompositions" $(T,\bagmap,\atommap)$ of $G$ (in this case $\atommap$ plays no role). This notion of monotone measure is similar to $\gamma$-width, where $\gamma$ is a node-monotone function \cite[Definition 31]{JoglekarPR16}, and $g$-width where $g$ is a monotone function \cite[Definition 46]{khamis2023faqquestionsaskedfrequently}.

    \AP The measures of "tree-width", "generalized hyperwidth", ""fractional hyperwidth"" \cite{GroheM14} and ""submodular width"" \cite{Marx13} can be seen all as instances of "monotone measures".

    \Cref{thm:width-augmented-char} can be restated in this more general setting as follows:
    \begin{theorem}[Generalization of \Cref{thm:width-augmented-char}]\label{thm:width-augmented-char-monotonewidth}
        For every "aggregate query" $\gamma$ and "monotone measure" $\mu$, we have that the "$\mu$-width" of $\gamma$ and of $\aug\gamma$ coincide. 
        Further, decompositions can be obtained from one another in linear time.
    \end{theorem}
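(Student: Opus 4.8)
The plan is to follow the two-inequality structure of the proof of \Cref{thm:width-augmented-char} almost verbatim, replacing the generalized-hyperwidth bookkeeping (which tracks $\atommap$) by the $\mu$-width bookkeeping (which tracks only the vertex sets $\bagmap(v)$). The crucial simplification is that the definition of $\mu$-width ignores $\atommap$ entirely, so the whole argument reduces to understanding how the \emph{bags} evolve under the two constructions of \Cref{thm:width-augmented-char}, with monotonicity of $\mu$ doing the rest. Throughout I would use that $\gamma$ and $\aug\gamma$ share the same vertex set $\vars(\gamma)$, the extra $\PRel$-edges of $\aug\gamma$ only restricting which tree decompositions are admissible; thus $\mu$ is always applied to bags over one common vertex set, and identical bags yield identical measure while subset bags yield a smaller-or-equal measure.

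For the inequality ``$\mu$-width$(\aug\gamma) \le \mu$-width$(\gamma)$'' I would take a project-connex tree decomposition $(T,\bagmap,\atommap)$ of $\gamma$ achieving the optimal $\mu$-width. The first half of \Cref{thm:width-augmented-char} shows that this very tuple, with the same tree $T$ and the same bag map $\bagmap$, is already a tree decomposition of $\aug\gamma$. Since the bags are literally unchanged, $\max_{v}\mu(\bagmap(v))$ is the same whether the tuple is read as a decomposition of $\gamma$ or of $\aug\gamma$; hence $\mu$-width$(\aug\gamma)$ is at most the $\mu$-width of the chosen decomposition, namely $\mu$-width$(\gamma)$.

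For the reverse inequality I would start from a tree decomposition $(T,\bagmap,\atommap)$ of $\aug\gamma$ of optimal $\mu$-width and run the backward construction of \Cref{thm:width-augmented-char}: relativize $\bagmap$ onto $X_1$, and for each component attach a relativized copy of the inductively built decomposition onto $C_i \cup C_i^+$. Every bag produced by that construction is obtained from a bag of the input by intersecting it with a set of variables (either $X_1$, or some $C_i \cup C_i^+$), hence is a \emph{subset} of a bag of the original decomposition of $\aug\gamma$. By monotonicity of $\mu$ (i.e.\ $A \subseteq B \Rightarrow \mu(A) \le \mu(B)$), each new bag has $\mu$-value no larger than the bag it came from, so the $\mu$-width of the produced project-connex decomposition of $\gamma$ is at most that of the input, giving $\mu$-width$(\gamma) \le \mu$-width$(\aug\gamma)$. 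Combining the two inequalities yields equality, and the linear-time interconvertibility is inherited verbatim from \Cref{thm:width-augmented-char}, since the constructions themselves are unchanged and only the width accounting differs.

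The step I expect to be the main obstacle is making the ``subset of an original bag'' invariant precise through the induction on the number of projections. The backward construction builds the decomposition of $\pi_{X_1}^{\oplus_1}\dotsb\pi_{X_n}^{\oplus_n}{\Join^\otimes}q$ from the one for $\pi_{X_2}^{\oplus_2}\dotsb\pi_{X_n}^{\oplus_n}{\Join^\otimes}q$, so one must check that the inductively produced copies are themselves relativizations of bags of the \emph{same} input decomposition $(T,\bagmap)$ of $\aug\gamma$ (which is legitimate because $\aug{(\pi_{X_2}^{\oplus_2}\dotsb{\Join^\otimes}q)}$ has fewer $\PRel$-edges than $\aug\gamma$, so $(T,\bagmap)$ is admissible at every level). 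Once this threading is verified, a single appeal to monotonicity against the original optimal decomposition suffices at each level; in particular—unlike the generalized-hyperwidth proof, where one must argue that the inherited $\atommap$ still certifies the width—here there is nothing to check about covers, which is exactly why the statement holds uniformly for every monotone measure (tree-width, $\ghw$, fractional and submodular width alike).
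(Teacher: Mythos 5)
Your proposal is correct and takes essentially the same route as the paper's proof: the forward inequality reuses the optimal project-connex decomposition verbatim (bags unchanged, so the $\mu$-width is preserved), and the backward inequality reruns the construction of \Cref{thm:width-augmented-char}, observing that every produced bag is a relativization---hence a subset---of a bag of the input decomposition, so measure monotonicity replaces \Cref{cl:characterization-augmented-preserves-width} by the corresponding ``at most'' claim, with linear-time convertibility inherited unchanged. Your explicit threading of the subset invariant through the induction (using that the augmented query of $\pi_{X_2}^{\oplus_2}\dotsb\pi_{X_n}^{\oplus_n}{\Join^\otimes}q$ has a subset of the $\PRel$-edges of $\aug\gamma$, so the same input decomposition is admissible at every level) is a detail the paper leaves implicit, but it matches the intended argument.
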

    \begin{proof}
        The proof follows exactly the same lines as the proof of \Cref{thm:width-augmented-char}. 
        
        The part of $\pghw(\gamma) \geq \ghw(\aug\gamma)$ uses the same decomposition, hence the "$\mu$-width" is preserved.
        
        For the second part showing $\pghw(\gamma) \leq \ghw(\aug\gamma)$, the base case is identical. For the inductive case, note that the "relativization" of bags cannot increase the "$\mu$-width" due to "measure monotonicity", and hence \Cref{cl:characterization-augmented-preserves-width} can be replaced with a claim which says that the "$\mu$-width" of $(\widetilde T, \widetilde\bagmap, \widetilde\atommap )$ is at most the "$\mu$-width" of $(T,\bagmap,\atommap)$.
        Also, observe that \Cref{lem:cliques-in-bags} holds independently of the width "measure" at hand.
    \end{proof}

    For the evaluation procedure, note that the incidence of the "$\mu$-width" is the first step of the pre-computation of bags in \Cref{thm:complexity-aggregate-evaluation-frac}. Hence, by a trivial adaptation for the case of "$\mu$-width" being "fractional hyperwidth" we can obtain for instance the following:
    \begin{theorem}[Adaptation of \Cref{thm:complexity-aggregate-evaluation}]\AP\label{thm:complexity-aggregate-evaluation-frac}
        Given a "$K$-annotated database" $(D,\ann)$, 
        a "semiring $K$-aggregate query" $\gamma = \pi_{X_n}^{\oplus_n} \dotsb \pi_{X_1}^{\oplus_1} {\Join^\otimes} q$, and 
        a "project-connex" "tree decomposition" of "fractional hyperwidth" $k$ thereof, the "evaluation" result $\evalaggD{\gamma}{D,\ann}$ admits a "constant delay enumeration after a polynomial $\+O\big(\maxSize{D}^k\big)$ pre-processing".
        If $X_n=\emptyset$, the evaluation result can be obtained in 
        $\+O\big(\sizeofgamma \cdot \maxSize{D}^k\big)$
        in "combined complexity".
    \end{theorem}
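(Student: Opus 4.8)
The plan is to re-run the proof of \Cref{thm:complexity-aggregate-evaluation} almost verbatim, since the "fractional hyperwidth" enters in a single place: the precomputation of the relations attached to the "bags". I would start from a "project-connex" "tree decomposition" $(T,\bagmap,\atommap)$ of $\gamma$ of "fractional hyperwidth" $k$, fix an "atom labeling" $\atommaplab$ and a map $f : q \to \vertex{T}$ with $\vars(\alpha) \subseteq \bagmap(f(\alpha))$ for every atom $\alpha$ (which exists by the "completeness condition"). As in \Cref{thm:complexity-aggregate-evaluation} I may assume $|\vertex{T}| \in \+O(\sizeofgamma)$: the shrinking of \Cref{lem:bound-decompositions-agg-queries} retains only "bags" whose content equals a "bag" of the original decomposition, so the maximum fractional edge cover number over the "bags" cannot increase, and the "fractional hyperwidth" is preserved.

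The only step I would change is the computation, for each $v \in \vertex{T}$, of a relation $R_v$ over $\bagmap(v)$ (with values in $K$) representing the set $H_v$ of pairs $(h|_{\bagmap(v)},n)$ with $h : \atommaplab(v) \homto D$ and $n = \abotimes{\alpha \in f^{-1}(v)}{\ann(h(\alpha))}$. In \Cref{thm:complexity-aggregate-evaluation} this join ranges over the at most $k$ "atoms" covering $\bagmap(v)$ and is evaluated naively in $\+O(\maxSize{D}^k)$. Instead, I would fix a fractional edge cover of $\bagmap(v)$ of weight at most $k$, join the participating "atoms" restricted to $\bagmap(v)$ together with the (already $\bagmap(v)$-local) "atoms" of $f^{-1}(v)$ as exact constraints, and compute this join with a \emph{worst-case optimal join algorithm}. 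The AGM bound then guarantees at most $\maxSize{D}^{k}$ output tuples, produced in $\+O(\maxSize{D}^{k})$ time under the "RAM" model (and polynomially in $\sizeofgamma$). The "annotation" of each surviving tuple is obtained by $\otimes$-multiplying the "annotations" of the "facts" matched by the "atoms" of $f^{-1}(v)$, which are determined by the tuple itself since $\vars(\alpha) \subseteq \bagmap(v)$ there, and are read off in $\+O(1)$ per atom.

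From this point I expect the argument of \Cref{thm:complexity-aggregate-evaluation} to carry over unchanged. This produces a "$K$-annotated database" $(D',\ann')$ over the "schema" $\set{R_v : v \in \vertex{T}}$ with $\maxSize{D'} = \+O(\maxSize{D}^{k})$, built in $\+O(\sizeofgamma \cdot \maxSize{D}^{k})$; the query $\gamma'$ obtained by replacing $q$ with $\set{R_v(\bar x_v) : v \in \vertex{T}}$ satisfies $\evalaggD{\gamma}{D,\ann} = \evalaggD{\gamma'}{D',\ann'}$ by the same partition-of-$q$ computation; and $(T,\set{v \mapsto \bagmap(v)}_v,\set{v \mapsto \set{\bagmap(v)}}_v)$ is a "width" $1$ "project-connex" "join tree" for $\gamma'$. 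A final appeal to \Cref{lem:evalAggQuery:width1} would then give the "constant delay enumeration after a linear pre-processing" of $\evalaggD{\gamma'}{D',\ann'}$, hence a "constant delay enumeration after a polynomial pre-processing" (of cost $\+O(\maxSize{D}^{k})$) for $\gamma$, as well as the $\+O\big(\sizeofgamma \cdot \maxSize{D}^{k}\big)$ bound when $X_n=\emptyset$.

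The hard part will be precisely this modified bag-precomputation: I must argue, through a worst-case optimal join combined with the AGM bound applied to a fractional edge cover of $\bagmap(v)$, both that $|R_v| \le \maxSize{D}^{k}$ and that $R_v$ can be materialized within $\+O(\maxSize{D}^{k})$ time, while attaching the semiring "annotation" at no extra asymptotic cost. Everything else --- the reduction to a "width" $1$ instance and its enumeration --- is inherited from \Cref{thm:complexity-aggregate-evaluation} and \Cref{lem:evalAggQuery:width1}, in line with the fact that "fractional hyperwidth" is a "monotone measure", so that \Cref{thm:width-augmented-char-monotonewidth} already governs the decompositions themselves.
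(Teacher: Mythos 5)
Your proposal is correct and takes essentially the same route as the paper: the paper establishes \Cref{thm:complexity-aggregate-evaluation-frac} as a direct adaptation of \Cref{thm:complexity-aggregate-evaluation}, observing that the width measure intervenes only in the pre-computation of the bag relations, which is precisely the single step you modify (materializing each $R_v$ via a worst-case optimal join within the AGM bound of a fractional edge cover of $\bagmap(v)$). Your one deviation---joining per-atom projections onto $\bagmap(v)$ together with the atoms of $f^{-1}(v)$ as exact constraints, rather than projecting full homomorphisms of the covering atoms---produces a possibly larger $R_v$, but the reduction to the width-$1$ case remains sound because every atom of $q$ is enforced exactly at its image under $f$, so this is an acceptable (and arguably more implementation-faithful) rendering of the same argument.
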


\subparagraph*{Unions}
\label{sec:unions}
\AP
While we left out the unions for simplicity, they could be added. We say that two "$K$-aggregate queries" 
$\gamma_1 = \pi_{X_1}^{\oplus_1} \hat \gamma_1$ and 
$\gamma_2 = \pi_{X_2}^{\oplus_2} \hat \gamma_2$ 
are ""compatible queries"" if $X_1=X_2$. %
The evaluation of the union $\gamma \cup \gamma'$ of two "$K$-aggregate" "compatible queries" is then defined on a "$K$-annotated database" $D$ as 
$\reintro*\evalaggD{(\gamma \cup^{\oplus} \gamma')}{D,\ann} \eqdef \evalaggD{\gamma}{D,\ann} \cup \evalaggD{\gamma'}{D,\ann}$.
Width parameters are lifted by taking the maximum: $\reintro*\pghw(\gamma \cup^\oplus \gamma') \eqdef \max(\pghw(\gamma),\pghw(\gamma'))$. The upper bounds of \Cref{sec:evalAgg} can be then adapted in a trivial way to unions of "compatible" "semiring $K$-aggregate queries".
Observe, however, that this trivial extension holds for having the union as outermost operator. In particular it \emph{does not cover} \AP$\counting{\text{"UCQ"}}$, where ""UCQ"" refer to finite unions of "CQs", which is a more challenging counting task \cite{ChenM16-UCQ}, and corresponds rather to "aggregate queries" of the form $\pi_{\emptyset}^{+}((\pi_X^{\max}{\Join^{\otimes}}q_1) \cup \dotsb \cup (\pi_X^{\max}{\Join^{\otimes}}q_\ell))$. Indeed, such queries are known to be "shP"-hard even in the absence of $\pi^{\max}$-projections and assuming "acyclicity" ("ie", for counting solutions of unions of "acyclic" "full" "CQs") \cite[Theorem 6]{PichlerS13}.

\subparagraph*{Constants}
\label{sec:constants}
Observe that in our technical developments we preferred to avoid the complications of using "constants" explicitly. We remark that one can handle constants in a rather standard way, which we simply show on an example.
Consider an "aggregate query" $\gamma=\pi_{X_1}^{\oplus_1} \dotsb \pi_{X_n}^{\oplus_n}{\Join^\otimes}q$ possibly with constants and an "annotated database" $(D,\ann)$, and suppose it contains an "atom" $R(x,y,c,x,c')$ where $c,c'$ are constants and $x,y$ are variables. We replace this atom with $R_{(*,*,c,*,c')}(x,y,x)$ and create a new ternary relation $R_{(*,*,c,*,c')}$ in $D$ containing every "fact" $R_{(*,*,c,*,c')}(c_1,c_2,c_3)$ annotated with $n$ satisfying that $R(c_1,c_2,c,c_3,c')$ is a  "fact" of $D$ with annotation $n=\ann(R(c_1,c_2,c,c_3,c'))$. We proceed similarly for all "constants". Finally, we obtain some "annotated database" $(D',\ann')$ and some "aggregate query" $\gamma'=\pi_{X_1}^{\oplus_1} \dotsb \pi_{X_n}^{\oplus_n}{\Join^\otimes}q'$ without "constants" such that
$\evalaggD{\gamma}{D,\ann} = \evalqD{\gamma'}{D',\ann'}$. Further: (i) such $q'$ and $(D',\ann')$ can be computed in linear time, more precisely in $\+O(\sizeofgamma + \sizeofD)$, and (ii)
the "project-connex generalized hyperwidth" of $\gamma'$ is that of $\gamma$.

\begin{toappendix}

\end{toappendix}

\end{toappendix}

\bibliography{long,bib}

\appendix
\end{document}